\begin{document}

\title{Complementary cooperation, minimal winning coalitions, and power indices\thanks{This research is supported by the  973 Program(NO.2010CB731405) and NNSF of China (NO.71101140).}
}

\author{Zhigang Cao,  Xiaoguang Yang
}


\institute{ Key Laboratory of Management, Decision \& Information Systems,\\ Academy of Mathematics
and Systems Science, Chinese Academy of Sciences,
Beijing, 100190, China.\\
zhigangcao@amss.ac.cn, xgyang@iss.ac.cn
}
\maketitle

\begin{abstract}We introduce a new simple game, which is referred to as the complementary
weighted multiple majority game (C-WMMG for short). C-WMMG models a basic cooperation rule, the complementary cooperation rule, and can be taken as a sister model of the famous weighted  majority game (WMG for short). In C-WMMG, each player is characterized by a nonnegative vector with a fixed dimension, and players in the same coalition cooperate by producing a characteristic vector for this coalition (each dimension of this vector equals the maximum of the corresponding dimensions of its members). The value of a coalition is 1 if and only if the sum of its characteristic vector is larger than that of its complementary coalition, in which case the coalition is called winning. Otherwise, the coalitional value is 0.
In this paper, we concentrate on the two dimensional C-WMMG. An interesting property of this case is that there are at most  $n+1$ minimal winning coalitions (MWC for short), and they can be enumerated in time $O(n\log n)$, where $n$ is the number of players. This property guarantees that the two dimensional C-WMMG is more handleable than WMG. In particular, we prove that the main power indices, i.e. the Shapley-Shubik index, the Penrose-Banzhaf index, the Holler-Packel
index, and the Deegan-Packel index, are all polynomially computable.  To make a comparison with WMG, we know that it may have exponentially many MWCs, and none of the four power indices is polynomially computable (unless P=NP). Still for the two dimensional case, we show that
local monotonicity holds for all of the four power indices. In WMG, this property is possessed by the Shapley-Shubik index and the Penrose-Banzhaf index, but not by the Holler-Packel index or the Deegan-Packel index.
Since our model fits very well the cooperation and competition in team  sports, we hope that it can be potentially applied in measuring the values of players in team sports, say help people give more objective ranking of NBA players and select MVPs,  and consequently bring new insights into  contest theory and the more general field of sports economics. It may also provide some interesting enlightenments into the design of non-additive voting mechanisms. Last but not least, the threshold version of C-WMMG is a generalization of WMG, and natural variants of it are closely related with the famous airport game and the stable marriage/roommates problem.

 {\bf Keywords:} complementary cooperation, complementary weighted multiple majority games, weighted majority games, power indices, minimal winning coalitions, local monotonicity
\end{abstract}

\section{Introduction}

Cooperation makes human being flourish. For other animals, competition plays an absolutely dominant role, and cooperation is never so deep or broad \cite{f03}. Game theoretical models, both the coalitional ones and the strategic ones, are basic tools for studying cooperation. Generally speaking, strategic models are suitable for scenarios where cooperation is necessary but seems impossible in one-shot games, that is, Nash equilibrium is not Pareto optimal. So the main research focus in strategic form games on cooperation is {\it whether} cooperation is attainable in sequential games (either repeatedly or not), and how to promote cooperation (cf. \cite{ah81,dk04,JBT10,KMW82,n85,n06}).  Coalitional models, in contrast,  are suitable for scenarios where the attainability of cooperation is not a problem (because it is usually assumed that there is a powerful arbitrator), but {\it how} to cooperate is, i.e. how to share the benefit  gained by cooperation {\it fairly} among players \cite{ps07}.

However, the concept of fairness is so ambiguous
and debatable that there could be numerous impossibility theorems in coalitional
game theory. This fact might have been neglected by most researchers in coalitional game theory. There does exist one common spirit behind fairness: the more you contribute the more you should get. How to
measure the contributions of players, nonetheless, is not an easy job.  And in fact, it is as tricky as how to interpret fairness. So this common spirit helps us very little, if any. Another issue is that in coalitional game theory,
It is often very difficult to distinguish whether a study is normative or positive, and fair
allocation is also frequently interpreted as possible allocation that will occur among ¡°rational¡±
players, and this allocation is determined by bargaining powers of players. Consequently, benefit allocating, contribution measuring, and power measuring are usually triune in the study of coalitional games.

There are mainly two reasons why it is so hard to have a completely
satisfactory allocation rule. The first comes from the
self-contradictions of the concept of fairness, which  are not easy to be noticed by intuition. And at the same time people are trying to design rules that are consistent with every aspect of  it.
The second is due to that in coalitional game theory people usually try to design a
general allocation rule suitable for all possible coalitional games (with very weak
restrictions, say super-additivity), which is a work that will never be accomplished. And this is why there can be numerous impossibility theorems in the coalitional game theory.

That ``the cooperative side of game theory has been dominated by the noncooperative side, at least judging from their respective influence on mainstream economics"  is  an undeniable fact (Maskin, 2003, \cite{m03}). Further discussion of this problem is beyond the task of this paper (and beyond the ability of the authors),  but based on the arguments in the above two paragraphs, one of the right ways we think is to give up the dream of finding a universally perfect solution, forget the concept of fairness, only discuss various properties (e.g. various consistencies, monotonicities, cf. \cite{ps07}), and leave the choosing right of which solution is more suitable to the game theory users and the real players.  And what's more, pay more attention to more concrete models. We are happy to see that this is also what many researcher are doing today in this field.

Perhaps the most extensively studied and most successfully used concrete model in the coalitional game theory is the weighted majority game (WMG for short). The main contribution of this paper is to propose a new coalitional game, which can be taken as a sister of WMG.
\subsection{Additive cooperation VS. complementary cooperation\label{s1.1}}
One of the main reasons why WMG  receives so much attention and finds successful applications may come from the fact that it is both simple and fundamental. Simpleness implies richness, because only simple rules can exist behind wide range of situations, and only simple rules can be understood and used well by human being. Simpleness of this model is obvious. It is also very fundamental, because it models a very basic cooperation rule.

Suppose party A has 100 votes, and party B has 30 votes, then cooperation between the two parties means that they possess totally 130 votes.  And with 130 votes, they may beat party C with 120 votes, which seems no possible at all if they do not cooperate.   You have a strength of 50 Kg, I have 40 Kg, working together we can move a stone which is too heavy for any single of us. You have 1000 dollars, I have 1500, pooling them we may buy a car that neither of us can afford separately. No other cooperation rule, as far as we can imagine, could be simpler than this. Since the process of this kind of cooperation can be nicely characterized by the mathematical operation of  addition, we name it the {\it additive cooperation rule} \footnote {Another choice is to name it as the ``substitute cooperation",  as we shall name its counterpart as the ``complementary cooperation". However, we finally abandoned this more fashionable term, because ``substitute" and ``complementary" have very rich implications in economics that are quite popular but significantly different from what we intend to express.}.

There is another cooperation rule, which is also very simple and fundamental. Suppose country A can produce 10 units cattle or 3 units wheat in one year, while in one year country B can produce 4 units cattle or 9 units wheat. Then country A has absolute advantage in producing cattle and country B in wheat. Therefore,  their maximum overall productivity is $\max\{10,4\}=10$ units cattle  and $\max\{3,9\}=9$ units wheat, which can be achieved by letting each country producing which she has advantage. It is absurd to think that the two countries' total productivity is $10+4=14$ units cattle and $9+3=12$ units wheat.

In basketball, there are five positions: point guard, shooting guard, small forward, power forward, and center. Each basketball player can play any of the five positions with varied levels. For a basketball team, its level in each of the five positions is determined by the highest level, rather than the sum, of all its players. And the overall level of a team is determined by certain aggregation (say, taking average, or equivalently sum) of its five separate levels. For football, things are similar, except that there are eleven positions instead of five (some of the positions are identical).

Cooperations in the above two examples have obvious differences with the additive cooperation rule. (i) It has multiple rather than single roles.  (ii) The most efficient cooperation strategy  is to let each role be played by the one who can play it best. We call this kind of cooperation  the {\it complementary cooperation rule}.

The complementary cooperation rule prevails not only in economy, team sports (not including tug of war),  but also in many other fields of the society. In fact, it embodies very well the division of labor, and thus plays a fundamental role in civilization. The functioning of the society depends severely on numerous complementary cooperations. Even marriage can be taken as a kind of complementary cooperation. And so is sexual reproduction for any kind of species that is using it. Another striking example of complementary cooperation that is observed frequently in biology is symbiosis \footnote {This concept can be found in any textbook about biology. For a recent theoretical study of symbiosis in bacteria,  see Katsuyama et. al.  \cite{knt09}. The term of ``complementary cooperation" was coined by them, independently with us. And this is why we noticed that paper.}.

To a great extent, additive cooperation is quantitative change, and complementary cooperation is structural and qualitative change. In this sense, we can safely argue that complementary cooperation is at least as fundamental as additive cooperation.

\subsection{Virtual integration and the story of Dell}
Let's see a real case of Dell in this subsection. This case is quite famous in the field of management science (for more detailed information, see \cite{m98}), and  embodies the essence of complementary cooperation very well.

During the inception of the computer industry, each company had to produce all the computer
components: it had to manufacture the disk drives, memory chips, monitors, application softwares,
and even its own graphics chips. And of course it had to do its own marketing. When Dell entered
this industry, it took quite a different strategy: Dell did not create any component itself, but
bought them from its partners. All they did was to concentrate on a brand-new marketing mode, the
nowadays well known direct mode, which made it so close not only to the customers but also to the
suppliers that its founder Michael Dell called Dell and its partners as ``virtually integrated". This
new kind of cooperation was remarkably close to such an extent that virtually integrated companies
could be seemed in a great sense as a big whole company. For a simple instance, the cooperation
with the monitor supplier Sony was roughly like this: Sony put the brand of Dell on each monitor,
and Dell told the third-party logistics company UPS to pick up the exact number of monitors they
needed every day from the plant of Sony and to deliver them directly to the customers. No
warehouse, no inventory, no inefficient transportation, and even no quality testing. All liked in
the same company. As we have seen,  virtual integration made a great success, helping Dell grow
into a \$12 billion company in just 13 years (1984-1997), and also bringing its partners large
profit.

From the aspect of division of labor, what Dell and its partners did is quite easy to understand:
Intel is expert in making CPUs, Sony in Monitors, and Dell in marketing. Working together, they
provide excellent computers for customers and beat their competitors. This kind of cooperation, i.e. virtual integration, is in fact a specialization that penetrates the whole economy of our society and plays an essential role
in the civilization of the human being, and it
can be well modeled by the complementary cooperation rule that we have discussed in the last subsection.

\subsection{Profit allocation, players evaluation, and benchwarmers' contribution\label{s1.3}}
Profit allocation (or cost sharing) is the key topic in cooperative game theory, and is also one of the critical issues in complementary cooperation. In the example of producing two goods, what is the fairest way to allocate the final goods? In virtual integration,   how to share the total profit? Although this issue is usually rather complicated in the real world, and perhaps people have already found a way out (e.g. in the example discussed in the previous subsection, Dell played a dominant role, and hence the real story is that it negotiated one by one with its potential partners), and the word ``fairness" is really ambiguous, theoretical research is still meaningful.

For the example of basketball, profit allocation turns out to be players evaluation. This is a meaningful problem in both theory and practice. For an NBA final champion team, it is interesting to know who contributes most and if every player is worth his salary.  Different people may have quite different ideas. Is there any more {\it objective} way to do this? Can we give a convincing way to rank all the players according to their values? Is it possible to give some {\it scientific} suggestions on how to select the MVP or how to trade players more wisely? All these problems fall into the topic of players evaluation, and in some sense they are equivalent to the problem of profit allocation, the focus of the cooperative game theory (in WMG, as well as C-WMMG, the model proposed by this paper, this is equivalent to players' power measuring. The famous formulas for measuring players' power are called power indices, which will be introduced very soon in Subsection \ref{pi}). Of course, we should not expect that one simple theory can completely solve all the real world complicated problems, but theoretical discussion is the necessary first step and it may play a crucial role in the final solution.

One interesting property of the example of basketball is that the evaluation of benchwarmers is not that obvious to study, compared with the key players frequently appearing on the court. As mentioned in Subsection \ref{s1.1}, at any fixed time there can be at most five players on the court for any basketball team. However, we all know that usually a team has far more than five players in total. The immediate reason for keeping benchwarmers is that  key players may suffer from injuries frequently, and even without injury no player can always play 48 minutes every game. Another reason that is not so obvious is that even if a benchwarmer is not good enough to play any position at all in team A, s/he may be able to help team B to beat team A. Just consider the situation that the center of team B really sucks that s/he becomes a bottleneck. Although team B is stronger than team A in all the other positions, still it can not beat team A. With the joining of that benchwarmer from team A, who is not really very good but much better than the current center in team B, however, it is very likely for team B to beat team A. In this situation, it is a very natural strategy for team A to keep that benchwarmer by giving him/her a high enough salary. And s/he deserves it, probably without doing anything at all except for applauding.

It is a pity that we failed to find a nice case study in NBA, or FIFA, that illustrates the importance of a benchwarmer in so extreme case that we discussed. Although there are examples where very lucky benchwarmers got the champion ring, their importance in the competing teams is not obviously high either. In general market competition, there is a well-known aggressive and effective strategy: first hire more than necessary employees (usually top talents)  from your competitors with, say double or triple salaries, then,  after beating them, sack all the redundant employees, i.e. benchwarmers. Although this strategy is quite controversial and likely to have legal problems, it demonstrates clearly that the importance of benchwarmers has been realized very well by business people.

Therefore, to evaluate the accurate contributions of employees/players and  offer corresponding salaries, not only what they have done should be calculated, but also what they did not do, i.e. their abilities of potential harm to the organization by hopping to the competitors, should be carefully considered. As we shall see soon, this factor is fully recognized by the main power indices in the study of game theory.

\subsection{Pure complementary cooperation}
 Although they  are both complementary cooperations, the examples of production and basketball still have several significant differences.

 (i) Let's consider the situation where the number of players is greater than that of roles. In basketball, there can be at most five players cooperating for each team on the court at any fixed time. For the example of producing two products, the cooperation of three players is meaningful, because it is possible for two of them to produce the same kind of production and the overall productivity of this product is the sum of the two players' productivity.

 (ii) For the basketball, it is possible for any player to play multiple roles on the court (consider the case where one player fouls out). This is not the case, however, for the production example. Suppose country A is more productive than country B in both products. Then country A should still focus on producing one product and country B on the other one in which it has {\it comparative advantage}, and therefore cooperation is still possible. The critical reason is that each country has limited time for production.
This theory, first described by D. Ricardo in 1817 and usually called the {\it law of comparative advantage}, is one of  the most fundamental principals of economics (cf. \cite{m06}).

Based on the above discussions, the complementary cooperation in basketball games seems much cleaner than that in production games. It is not satisfactory to use complementary cooperation alone to model the  production example. In fact, as argued in point (i) above, the cooperation in the production game is a mixture of additive cooperation and complementary cooperation. The cooperation in the basketball, however, can be perfectly described by the complementary cooperation rule alone. This kind of {\it pure} complementary cooperation is the focus of this paper.  Of course it can be argued that the positions of the basketball game are actually not completely independent, what players really do is usually not that clear-cut,  and perhaps more importantly, whenever a player has to play double positions (when one of his partners fouls out), his actual levels should be reduced in each of these positions. We have to say that all these considerations are correct. However, as a theoretical study, we need to do some necessary assumptions to make the problem clean enough. We believe that our assumptions are mild and acceptable.

We shall begin our study with a model that is as simple as possible as long as it embodies the essence of (pure) complementary cooperation, and thus concentrate on {\it simple games} \cite{tz99}, i.e. coalitional games where each coalitional value is either 1 or 0 (and with several plain restrictions). Although this assumption seems at the first glance too simple and even artificial, it models the nature of cooperation: by cooperation people can do something that may never be done by any single person. Back to the field of team sports, our assumption is that the unique objective of all teams is the final champion.  Of course, {\it externality} in this model will be inevitable, because otherwise benchwarmers will be useless. That is to say, our model, strictly speaking,  is not of the characteristic function form, but of the more general partition function form \cite{lt63}. However, this partition function is a quite trivial one. In fact, it can be  reduced to a characteristic function (as we shall see in the subsection below).  And therefore  we shall take the terms of characteristic function form coalitional game theory. To study more general models (say runner-up is also valuable, besides champion), terms of partition function form coalitional game theory may be inevitable.
\subsection{Model description and basic notations}
Formally, we are given a set of players $N=\{p_1,
p_2, \cdots, p_n\}$. Each player $p_j ~(1\leq j\leq n)$ is characterized by a $d_0$ dimensional non-negative
 vector $(p_{j}^1, p_{j}^2,\cdots, p_{j}^{d_0})$, which is referred to as her {\it characteristic vector}.  For any coalition $C$, i.e. a subset of $N$, we denote by $$(q_1(C), \cdots, q_{d_0}(C))=\left(\max\{p_j^1:p_j\in C\},\cdots,\max\{p_j^{d_0}:p_j\in C\}\right)$$ its
characteristic vector, and $$q(C)=\sum\limits_{i=1}^{d_0}q_i(C)$$ its {\it competitive power}.

We say that a coalition $C$ is a {\it winning coalition} (also abbreviated as WC) iff $q(C)>q(C^{-})$, where $C^-=N\setminus C$. The value of a coalition is 1 if it is a winning coalition, and 0 otherwise.

We shall refer to this model as  {\it the complementary weighted multiple majority game}, because it has a symmetric relation with a special case of the well-known weighted multiple majority game (to be introduced in the next subsection). It is straightforward that C-WMMG is super-additive, though its {\it core} is usually empty.

A winning coalition  is called a {\it minimal winning coalition} (MWC for short) if it does  not have any proper subset that is also winning.
We also use ${\mathcal W}{\mathcal C}$ to denote the set of WCs, and ${\mathcal M}{\mathcal W}{\mathcal C}$ the set of MWCs.

 The following two concepts are from the classical literature of WMG. For any winning coalition
$C$, $p_j\in C$ is called a $decisive$ player (w.r.t. $C$) if $C\setminus \{p_j\}$ is losing. If
$C\setminus \{p_j\}$ is still winning, $p_j$ is called a {\it null} player (w.r.t. $C$). We shall also interchangeably use the term {\it indecisive}. Trivially,
all the players in an MWC are decisive.

 In this paper, we shall concentrate on the two dimensional C-WMMG.

The next two concepts are new. For any coalition $C$ and $k\in \{1, 2\}$, denote by $A_k(C)=\{p_j\in C: p_j^k=q_k(C)\}$ the set of players whose $k$-th dimensions are the largest among players in $C$,  and by
$B(C)=A_1(C)\cup A_2(C)$ the set of players who have at least one dimension that is largest among players in $C$.

We call players in $B(C)$ $\it busy$ players (w.r.t. $C$) and players in
$C\setminus B(C)$ {\it benchwarmers} (w.r.t. $C$).

Notice that it is possible that $C$ has more
than two busy players, and it is also possible that $C$ has only one busy player.

We also let
$A(C)=A_1(C)\times A_2(C)=\{(p_{j_1}, p_{j_2}): p_{j_1}\in A_1(C), p_{j_2}\in A_2(C)\}$ be the set of {\it best pairs}. Members of $A_1(C)$ and $A_2(C)$ will also be referred  to as {\it busy-1} players and {\it busy-2} players (w.r.t. $C$), respectively.

All the above terms and notations will be used throughout this paper.

\subsection{Mathematical relation with the weighted multiple majority game\label{s1.4}}
The weighted majority game (a.k.a. weighted voting game, weighted simple game, weighted threshold game), first formulated by von Neumann (1944, \cite{nm44}), is one of the most intensively studied cooperative game models.

A WMG is usually
represented by $G=(q; a_1, a_2, \cdots, a_n)$, where $q$ is called the quota and $a_j$ the weight
of player $p_j$. The characteristic function is defined as: $v(C)=1$ iff $\sum_{p_j\in C}a_j\geq
q$; $v(C)=0$ otherwise. The special case of $q=\lfloor (\sum_{p_j\in
N}a_j)/2\rfloor +1$, i.e. $v(C)=1$ is equivalent to $\sum_{p_j\in C}a_j>\sum_{p_j\in C^{-}}a_j$, is  called the {\it simple weighted majority game}.

Let $G_1=(q_1; p^1_1, p^1_2, \cdots, p^1_n)$ and $G_2=(q_2; p^2_1, p^2_2, \cdots, p^2_n)$ be two
WMGs with identical player set, whose characteristic functions are $v_1$ and $v_2$, respectively. $G=G_1+G_2$, the {\it sum} of $G_1$ and $G_2$,  is a simple game with characteristic function $v$, defined as $v(C)=1$ iff at least one of $v_1(C)$ and $v_2(C)$ equals 1.
 The sum of $k$ WMGs is usually called a $k$ dimensional
WMMG, which can be defined similarly.

Suppose that $G=G_1+G_2+\cdots +G_{d_0}$ is the sum of $d_0$ simple WMGs, i.e. $q_i=\lfloor (\sum_{p_j}p^i_j)/2\rfloor +1$ for all $1\leq i\leq d_0$. It is obvious that $v(C)=1$ is equivalent to \begin{equation}
\max\limits_{1\leq i\leq d_0}\left\{\sum_{p_j\in C}p^i_j-\sum_{p_j\in
C^-}p^i_j\right\}>0.\end{equation} While in the model of this paper, $v(C)=1$ is equivalent to
\begin{equation}\sum_{1\leq i\leq d_0}\left(\max\limits_{p_j\in C} \{p^i_j\}-\max\limits_{p_j\in C^-} \{p^i_j\}\right)>0.\end{equation}

It can be observed that there is some symmetry between the two models, and this is one of the reasons why we named our model
as C-WMMG. To have more knowledge about WMG and WMMG, please refer \cite{tz99}.

\subsection{Main power indices\label{pi}}
How to measure powers  of players in WMG is a very interesting  topic that has rich applications in politics, because the direct measurement by weights is not reasonable at all.
Ever since the seminal work of  Shapley and Shubik \cite{ss54}, this topic keeps attracting scientists from political science, law, game theory, and computer science. As discussed in Subsection \ref{s1.3}, this is also the main focus of this paper. There are typically four widely accepted approaches, namely the  Shapley-Shubik index, the Penrose-Banzhaf index, the Holler-Packel index, and the
Deegen-Packel index.

Recall that  ${\mathcal W}{\mathcal C}$ is the set of WCs, and ${\mathcal M}{\mathcal W}{\mathcal C}$ the set of MWCs. The Shapley-Shubik index \cite{ss54} is a direct application of the celebrated Shapley value \cite{s53} on WMGs. It argues that the bargaining power of each player
is equal to her expected marginal contribution, assuming that players arrive one by one in a random order.
For all $p_j\in N$, let  \begin{equation}{\mathcal W}{\mathcal C}(j)=\{C\in {\mathcal W}{\mathcal C}: C\ni
p_j, C\setminus \{p_j\}\notin {\mathcal W}{\mathcal C}\},\end{equation} i.e. the set of winning coalitions where $p_j$ is {\it decisive}. The exact definition of the Shapley-Shubik index is as follows:
\begin{equation}ss_j=\sum_{C\in{\mathcal W}{\mathcal C}(j)
}\frac{(|C|-1)!(n-|C|)!}{n!},\label{ss}\end{equation}
where the $|\cdot|$ is the cardinality of any set and will be used throughout this paper.

As unearthed by Felsenthal and Machover \cite{fm98,fm01,fm05}, the first theoretical study of power measurement in WMG should be attributed to Penrose \cite{p46}. Please refer Felsenthal and Machover for more details where they discuss measuring power
as a way of measuring influence, or as a way of measuring the rewards of winning.

Banzhaf \cite{b65}, many years after Penrose first published his
work, proposed an equivalent voting power index.  The Penrose-Banzhaf index assumes that all coalitions are equally likely to form. The exact definition is as follows:

\begin{equation}pb_j=\frac{1}{2^{n-1}}|{\mathcal W}{\mathcal C}(j)|.\end{equation}

Unlike the above two indices determined by the chance of being decisive, the Holler-Packel index (or Public Good Index, see \cite{h82,hp83}) and the
Deegen-Packel index \cite{dp79} argue that only MWCs will possibly form. This argument is initially known as the Riker's {\it size principle}, see
\cite{fb96,R62}. Let \begin{equation}{\mathcal M}{\mathcal W}{\mathcal C}(j)=\{C\in {\mathcal M}{\mathcal
W}{\mathcal C}: C\ni p_j\},\end{equation} the definitions of the (un-normalized) Holler-Packel index and the (un-normalized) Deegen-Packel index are as follows:
\begin{equation}hp_j=|{\mathcal M}{\mathcal W}{\mathcal C}(j)|,\end{equation}

\begin{equation}dp_j=\sum_{C\in {\mathcal M}{\mathcal W}{\mathcal C}(j)}\frac{1}{|C|}.\end{equation}

\subsection{Contribution and organization of this paper}
Because there has already been several very famous power indices for measuring values of players in WMG (as shown in Subsection \ref{pi}), our objective of this paper is not to propose  any new measurement especially for C-WMMG, but naturally to study the properties of the old ones on C-WMMG. This can serve at least as a first step of the study of C-WMMG. In particular, much effort of this paper is devoted to the computational issue, as this has been always  one of the core problems for the model of WMG and has been attracting a large amount of attention ever since the beginning of this field. The computational issue is also one of the core focuses for the booming relatively new research field {\it algorithmic game theory} \cite{nrtv07}. As one of the main power indices is simply an application of the famous Shapley value, which is one of the few key concepts in coalitional game theory, widely used in many fields, but hard to compute in general, one of our findings that the Shapley value can be efficiently computed for the two dimensional C-WMMG has its own theoretical interest to the field of algorithmic game theory. In fact, as far as we know, the two dimensional C-WMMG  is the only nontrivial special case where the Shapley value can be efficiently computed, besides the classical model of {\it the airport game} \cite{lo73,l74}.

Our main findings are summarized as follows (we remind the reader again that we restrict our discussion in this paper to the two dimensional C-WMMG, unless explicitly stated otherwise.):

(i) The structure of the set of MWCs is quite simple. To be precise,  the total number of MWCs in the two dimensional C-WMMG can be  upper-bounded by $n+1$ (Theorem \ref{t1}, Section \ref{s3}), and the whole set of MWCs  can be computed in time $O(n\log n)$ (Theorem \ref{t2}, Appendix \ref{s4}).

It is well known that the set of MWCs  plays a fundamental rule for any simple game, because it determines the complete structure of this game. In fact, the most general form of a simple game is defined as the set of its MWCs \cite{tz99}. The structure of the set of MWCs for WMG is generally very complicated \cite{fb96}. Even in simple WMGs, and hence in higher dimensional cases, the number of MWCs can be exponentially large.  This can be verified easily by a trivial example with an odd number of players and all players having identical weight. In fact, any coalition with $(n+1)/2$ members is an MWC, and there are $C^{(n+1)/2}_n\approx 2^n \sqrt{\frac{2n}{\pi(n^2-1)}}$ such coalitions (by Stirling's formula).
For simple WMGs, it is even NP-hard to compute the minimum MWC, i.e. the one with the smallest total weight, which can be easily proved by reduction to the partition problem \cite{j73}.

 Although  the two dimensional C-WMMG has a symmetric relation with the sum of two simple WMGs as shown in Subsection \ref{s1.4}, it has a much simpler and thus theoretically more handleable structure. It still remains open whether it is still so for the three or higher dimensional cases.

(ii) Each of the four power indices can be computed in polynomial time.

Due to the simple structure of MWCs (Appendix A), this result is not surprising for the Holler-Packel index or the Deegan-Packel index, because their definitions are based on MWCs (Appendix B). The Penrose-Banzhaf index and the Shapley-Shubik index, however, are both based on winning coalitions, which can be exponentially many. We show that the structure of the set of winning coalitions is not complicated either. In fact, they can be determined by critical ones that are not so many and thus can be represented implicitly in polynomial time (Appendix C). Consequently, the Penrose-Banzhaf index and the Shapley-Shubik index can both be efficiently computed (Appendix D).
 In contrast, none of the four power indices is polynomially computable for WMG, unless P=NP.

(iii) Local monotonicity holds for each of the four power indices (Section \ref{s5}).

 There are many widely discussed monotonicity concepts for power indices of WMG, e.g. {\it the local monotonicity, the re-distribution monotonicity, the new member monotonicity, the bloc monotonicity}, etc.. They are also known as paradoxes and postulates, see \cite{fm98,hn04} for extensive study. The local monotonicity says that players with
larger weights should have more power than the ones with smaller weights. We focus above all on local monotonicity for C-WMMG, rather than any other monotonicity concept, because we think that this concept is the most natural one among all the similar concepts, and thus is the first property that we {\it expect} any power index to possess. In WMG, we know that  local monotonicity is  possessed by the Shapley-Shubik index and the Penrose-Banzhaf index, but not by the Holler-Packel index or the Deegan-Packel index \cite{hn04}.

 For the Shapley-Shubik index and the Penrose-Banzhaf index, local monotonicity holds trivially for any dimensional C-WMMG.  We show that local monotonicity  holds for the Holler-Packel index and the Deegan-Packel index in the two dimensional C-WMMG (Theorem \ref{t3}), but not in the three or higher dimensional cases (Example \ref{eg5}).

We remind the reader that we didn't say that possessing the property of local monotonicity, or any other monotonicity, is a necessary or good thing for any power index and failing to possess it is bad. We agree with Holler and  Naple \cite{hn04} that the violation of local monotonicity is not a fatal drawback in power measuring at all, as taken for granted by many researchers.   And perhaps it might be this violation that reflects the strategic powers of seemingly weak players.

The rest of this paper is organized as follows. The next section gives  a very short literature review. Section \ref{s3} is for the structure of the set of MWCs and the upper bound of its cardinality. Local monotonicity is discussed in Section \ref{s5}. All the computational technical details, which are theoretically not very hard but really complicated, ugly, and rather tedious for the reader, are moved to appendices. To be specific, Appendix A is for computing the MWCs, Appendix B for the Holler-Packel index and the Deegan-Packel index, Appendix C for the structure of winning coalitions, and Appendix D for the Penrose-Banzhaf index and the Shapley-Shubik index. In the discussions of Appendix A to Appendix D, quite a few notations are used. To help the reader not get lost, we provide a list of notations in Appendix E. Section 5 concludes this paper by pointing out several interesting topics for further research.

\section{Literature Review\label{s2}}

The classical and traditional literature on WMG is too vast for us to give a complete review, we refer the reader to the standard book of Taylor and Zwicker \cite{tz99}. We shall mainly concentrate on the computational side of this field.

Ever since Owen's seminal work on computing the Shaply-Shubik index through multilinear extension of a game \cite{o72}, the computational issue of power indices for WMGs  has been attracting a lot of attention of scholars from game theory. Literature following Owen's idea can be found in Alonso-Meijide et. al. \cite{achf08}. The generating function method is another important, but relatively new, way to computing power indices, see Algaba et. al. \cite{ab03}. Fatima et. al. \cite{fwj08} provided another algorithm, which runs in linear time, as in Owen's multilinear extension algorithm, but has a smaller approximation error (on average).

In the last decade, this problem also began to attract the attention of scholars from theoretical computer science and operations research. Below is a short review.

 Matsui and Matsui \cite{mm00,mm01} proved that the problems of computing the Shapley-Shubik
index, the Penrose-Banzhaf index, and the Deegan-Packel index in WMG, are all NP-hard, and there are
pseudo-polynomial time dynamic programming algorithms for them. Deng and Papadimitriou
\cite{dp94}, who pioneered in the study of computational complexities for cooperative solution
concepts, also proved that it is \#P-complete to compute the Shapley-Shubik index.  Matsui and
Matsui \cite{mm01} also observed that Deng and Papadimitriou's proof can be easily carried over
to the problem of computing the Penrose-Banzhaf index. For  the special case of simple WMG, it can be observed from their proofs
that all the above complexity results hold. It is still not hard to prove that computing the
Holler-Packel index in WMG, even in the  special case of simple WMG, is  NP-hard, and the algorithm of
Matsui and Matsui \cite{mm01} for computing the Deegan-Packel index can trivially be modified to
compute the Holler-Packel index, with an even lower time complexity \cite{cy09}.

Other work on the computation of power indices includes
\cite{achf08,l03}. For the computational issues of more solution concepts on WMG, see Elkind et. al. \cite{eggw08,ep08,gmps11}. For the recent research on the structure of MWCs in WMG, see \cite{ar10}. The Penrose-Banzhaf index and the Shapley-Shubik index can also be calculated using MWCs, see \cite{kl10}.  Studying quota manipulating problems in WMGs is also a promising new direction \cite{zfbe12}.

\section{Structure of the Set of MWCs and the Upper Bound of its Cardinality\label{s3}}

Please recall all the terms and notations in Subsection 1.3. We alert the reader that almost all the notations, except the small ones such as $i,j,k,t$ that are indicating player identities or dimensional names, are global instead of local. They are valid throughout  this paper right after it is defined (appendices included).

First of all, we present a warmup example, which
shows that, very interestingly,  a busy player in a winning coalition $C$ may not be decisive. At the
same time, however, $C$ may have a decisive player who is a benchwarmer.

\begin{example} There are four players in total: $p_1=(3, 3), p_2=(4, 0), p_3=(0, 2), p_4=(5,
0)$. Then $C=\{p_1, p_2, p_3\}$ is a winning coalition. Obviously, $p_2$ is busy in $C$ but not
decisive, while $p_3$, a benchwarmer, is decisive in $C$.
\end{example}

\begin{lemma}In C-WMMG, each winning coalition contains at least one $MWC$.\label{l1}\end{lemma}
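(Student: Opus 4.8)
The plan is to exploit the monotonicity of the simple game underlying the C-WMMG and then apply a minimum-cardinality argument. First I would record the key monotonicity property explicitly. If $S\subseteq T$, then adding players can only raise each coordinate-wise maximum, so $q(S)\le q(T)$; simultaneously $T^-\subseteq S^-$ forces $q(T^-)\le q(S^-)$. Hence $v(S)=1$, i.e.\ $q(S)>q(S^-)$, yields $q(T)\ge q(S)>q(S^-)\ge q(T^-)$, so $v(T)=1$. This is exactly condition (3) in the definition of a simple game, confirming that the C-WMMG is monotone; it is the only structural fact I will need.

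Given a winning coalition $C$, I would then consider the family ${\mathcal F}=\{D\subseteq C: v(D)=1\}$ of all winning subsets of $C$. This family is non-empty since $C\in{\mathcal F}$, and because $N$ is finite it contains a member $C^*$ of smallest cardinality. I claim $C^*$ is the desired MWC. It is winning by construction and satisfies $C^*\subseteq C$. If $C^*$ were not minimal, some proper subset $D\subsetneq C^*$ would be winning; but then $D\subseteq C^*\subseteq C$, so $D\in{\mathcal F}$ with $|D|<|C^*|$, contradicting the minimality of $|C^*|$. Therefore every proper subset of $C^*$ is losing, so $C^*\in{\mathcal M}{\mathcal W}{\mathcal C}$ with $C^*\subseteq C$, which is the assertion.

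I do not anticipate a genuine obstacle here: the statement holds for every monotone simple game, so the only point requiring care is verifying monotonicity, which the coordinate-wise-maximum cooperation rule (Assumption 2) makes immediate. An equivalent constructive phrasing would peel off one player at a time from $C$ while the coalition stays winning, stopping at a coalition all of whose single-player deletions are losing. I prefer the minimum-cardinality formulation because it is cleaner and sidesteps having to argue that repeated single-player removals are enough to reach a minimal coalition; the finiteness of $N$ is what guarantees termination in either version. Notably, this argument does not invoke the two-dimensional structure or the busy/idle distinction introduced just above, so it applies verbatim in arbitrary dimension $K$.
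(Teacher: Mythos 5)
Your proof is correct, but it takes a slightly different route from the paper's. The paper disposes of this lemma in one sentence by the constructive peeling argument you mention at the end: repeatedly delete null players until every remaining player is swing, and declare the result an MWC. Your primary argument instead selects a winning subset $C^*\subseteq C$ of minimum cardinality and observes that any winning proper subset of $C^*$ would contradict that minimality. One small observation about your write-up: the monotonicity property you carefully establish at the outset is never actually invoked by your minimum-cardinality argument --- that argument works in any finite game in which $C$ is winning, since it is just the existence of a minimal element in a finite nonempty family of sets. Ironically, it is the paper's terser peeling argument that genuinely needs monotonicity, because the step from ``every single-player deletion is losing'' to ``every proper subset is losing'' is precisely the contrapositive of condition (3) for simple games; your formulation sidesteps this, which is a mild advantage in rigor. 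Both arguments are dimension-independent, as you note, so nothing is lost relative to the paper.
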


\begin{proof} This property is straightforward as for each winning
coalition we can dump the null players, if any, one by one until all of the remaining players are
decisive. \qed
\end{proof}

\begin{lemma}In the two dimensional C-WMMG, if $C$ is a winning coalition, then either $A_1(N)\subseteq C$ or
$A_2(N)\subseteq C$.\label{l2}\end{lemma}
\begin{proof}Suppose on the contrary that neither relation holds. Then $C^-\cap A_1(N)\neq \emptyset$ and $C^-\cap A_2(N)\neq \emptyset$, which imply that $q_1(C^-)=q_1(N)$ and $q_2(C^-)=q_2(N)$, and hence $q(C^-)=q(N)\geq q(C)$, which contradicts the fact that $C$ is a winning coalition. \qed\end{proof}

\subsection{Structure in the simple cases}

We discuss the simplest case first.
\begin{lemma}In the two dimensional C-WMMG, suppose $A_1(N)\cap A_2(N)\neq\emptyset$.

(a) If  $A_1(N)=A_2(N)$, then ${\mathcal M}{\mathcal W}{\mathcal C}=\{A_1(N)\}=\{A_2(N)\}$;

(b1) If  $A_1(N)\subset A_2(N)$, then ${\mathcal M}{\mathcal W}{\mathcal C}=\{A_1(N)\}$;

(b2) If  $A_2(N)\subset A_1(N)$, then ${\mathcal M}{\mathcal W}{\mathcal C}=\{A_2(N)\}$;

(c) If  $A_1(N)\cap A_2(N)\neq\emptyset$ but none of the conditions in (a) (b1) (b2) is true, then ${\mathcal M}{\mathcal W}{\mathcal C}=\{A_1(N),A_2(N)\}$.
\label{l03}\end{lemma}
\begin{proof}(a) Since $q(A_1(N))=q(N)$ and $q(N\setminus A_1(N))<q(N)$, we know that $A_1(N)$ is a winning coalition. Obviously, it is also minimal winning, because for any player $p_j\in A_1(N)$, we have $q(\{p_j\})=q(N)$, and hence $q(A_1(N)\setminus \{p_j\})\leq q((N\setminus A_1(N))\cup\{p_j\})$.
Due to Lemma \ref{l2}, it is the only MWC.

(b1) First of all, $A_1(N)$ is a winning coalition, because $q(A_1(N))=q(N)$ while $q(N\setminus A_1(N))<q(N)$  (note that $q_1(N\setminus A_1(N))<q_1(N)$). Since for each $p_j\in A_1(N)$, we have $q(\{p_j\}\cup (N\setminus A_1(N))\})$=$q(\{p_j\}\cup \{A_2(N)\setminus A_1(N)\})=q(N)\geq q(A_1(N)\setminus \{p_j\})$, $A_1(N)$ is also minimal winning.
Due to Lemma \ref{l2}, it is the only MWC.

(b2) is symmetric to (b1).

(c) $A_1(N)$ is a winning coalition, because $q(A_1(N))\geq q(A_1(N)\cap A_2(N))=q(N)>q(N\setminus A_1(N))$. That the conditions in (a) and (b1) are not true implies that $A_2(N)\setminus A_1(N)\neq \emptyset$, and hence for each $p_j\in A_1(N)$ we have $q(\{p_j\}\cup (N\setminus A_1(N)))=q(N)\geq q(A_1(N)\setminus \{p_j\})$. Due to the above argument, we know that $A_1(N)$ is a minimal winning coalition. Using the same argument, we know that $A_2(N)$ is also a minimal winning coalition. Due to Lemma \ref{l2}, they are the only MWCs, and hence the lemma. \qed\end{proof}


\begin{definition}We denote by $m_1$ and $m_2$ the number of busy-1 players and the number of busy-2 players of the grand coalition $N$, respectively, i.e. \begin{equation}m_1=|A_1(N)|,\end{equation} \begin{equation}m_2=|A_2(N)|.\end{equation}\end{definition}

\begin{lemma}In the two dimensional C-WMMG, suppose $A_1(N)\cap A_2(N)=\emptyset$.

 (a1) When $A_1(N)$ is a winning coalition:  \begin{equation}{\mathcal M}{\mathcal W}{\mathcal C}=\left\{\begin{array}{ll} \Big\{A_1(N)\Big\}&if~m_1=1\\
 \Big\{A_1(N)\Big\}\cup \Big\{A_2(N)\cup \{p_i\}: p_i\in A_1(N)\Big\}&if~m_1>1\end{array}\right.;\end{equation}

 (a2) When $A_2(N)$ is a winning coalition:  \begin{equation}{\mathcal M}{\mathcal W}{\mathcal C}=\left\{\begin{array}{ll} \Big\{A_2(N)\Big\}&if~m_2=1\\
 \Big\{A_2(N)\Big\}\cup \Big\{A_1(N)\cup \{p_i\}: p_i\in A_2(N)\Big\}&if~m_2>1\end{array}\right..\end{equation}
\label{l04}\end{lemma}

\begin{proof}Since the two parts of this lemma are symmetric, we only prove part (a1). If $m_1=1$, then any coalition containing the only player of $A_1(N)$ is winning, and any coalition not containing her is losing, therefore $A_1(N)$ is the only MWC. We suppose in the rest of this proof that $m_1>1$.  Because $A_1(N)$ is a winning coalition, and  $A_1(N)\cap A_2(N)=\emptyset$, we know that it must be minimal winning, and therefore $A_1(N)\in {\mathcal M}{\mathcal W}{\mathcal C}$, and it is the only MWC that contains $A_1(N)$. Let $C$ be an arbitrary MWC that does not contain $A_1(N)$, i.e. $A_1(N)\nsubseteq C$. By Lemma \ref{l2}, we know that $A_2(N)\subseteq C$. The hypothesis that $A_1(N)$ is a winning coalition implies that $C\cap A_1(N)\neq \emptyset$, because otherwise $C$ will not be winning. We claim that $C$, if exists, contains exactly one player from $A_1(N)$. This is true because if it contains more than one players from $A_1(N)$, any of these players will not be decisive. Thus we have now $\Big\{A_2(N)\cup \{p_i\}: p_i\in A_1(N)\Big\}$ is the set of all potential MWCs that does not contain $A_1(N)$. Using the hypothesis that $m_1>1$, it is easy to check that each coalition in $\Big\{A_2(N)\cup \{p_i\}: p_i\in A_1(N)\Big\}$ is indeed an MWC, and hence the lemma.\qed\end{proof}

\subsection{Structure in the complex case}
Let's now discuss the more general case.

\begin{definition}Based on Lemma \ref{l2}, we divide ${\mathcal M}{\mathcal W}{\mathcal C}$ into three sub-collections:
\begin{equation}{\mathcal M}{\mathcal W}{\mathcal C}1=\Big\{C\in
{\mathcal M}{\mathcal W}{\mathcal C}: A_1(N)\subseteq C, A_2(N)\cap C=\emptyset\Big\},\end{equation}
\begin{equation}{\mathcal M}{\mathcal W}{\mathcal C}2=\Big\{C\in {\mathcal M}{\mathcal W}{\mathcal C}: A_2(N)\subseteq C, A_1(N)\cap C=\emptyset\Big\}, \end{equation}
\begin{equation}{\mathcal M}{\mathcal W}{\mathcal C}3=\Big\{C\in {\mathcal M}{\mathcal W}{\mathcal C}: A_1(N)\cap C\neq \emptyset, A_2(N)\cap C\neq
\emptyset\Big\}.\end{equation}\end{definition}

The structure of ${\mathcal M}{\mathcal W}{\mathcal
C}3$ in the general case is simple.
\begin{lemma}In the two dimensional C-WMMG, suppose $A_1(N)\cap A_2(N)=\emptyset$, and neither $A_1(N)$ nor $A_2(N)$ is a winning coalition, then
\begin{equation}{\mathcal M}{\mathcal W}{\mathcal
C}3=
\left\{\begin{array}{ll}\Big\{A_1(N)\cup \{p_i\}: p_i\in A_2(N)\Big\}& if~m_1=1\\
\Big\{A_2(N)\cup \{p_i\}: p_i\in A_1(N)\Big\}&if~m_2=1\\
\bigcup\limits_{p_i\in A_1(N)} \Big\{A_2(N)\cup \{p_i\}\Big\}\cup \bigcup\limits_{p_i\in A_2(N)}\Big\{A_1(N)\cup \{p_i\}\Big\}&if~m_1>1,m_2>1\end{array}\right..\label{mwc3}\end{equation}

Consequently, \begin{equation}|{\mathcal M}{\mathcal W}{\mathcal
C}3|=
\left\{\begin{array}{ll}m_2& if~m_1=1\\
m_1&if~m_2=1\\
m_1+m_2&if~m_1>1,m_2>1\end{array}\right..\label{b012}\end{equation}\label{l05}\end{lemma}
\begin{proof}Notice first,  by Lemma \ref{l2} and the hypothesis that neither $A_1(N)$ nor $A_2(N)$ is a winning coalition, that in any of the three cases, it holds that \begin{equation}{\mathcal M}{\mathcal W}{\mathcal
C}3\subseteq \bigcup\limits_{p_i\in A_1(N)} \Big\{A_2(N)\cup \{p_i\}\Big\}\cup \bigcup\limits_{p_i\in A_2(N)}\Big\{A_1(N)\cup \{p_i\}\Big\},\end{equation}
and when $m_1>1,m_2>1$, the above inclusion relation holds as equality.

When $m_1=1, m_2>1$, $\bigcup\limits_{p_i\in A_1(N)}\Big\{A_2(N)\cup \{p_i\}\Big\}$ has only one member, $A_1(N)\cup A_2(N)$, which is not minimal winning, because no player in $A_2(N)$ is decisive (remember the hypothesis that $A_1(N)\cap A_2(N)=\emptyset$). Every coalition in $\Big\{A_1(N)\cup \{p_i\}: p_i\in A_2(N)\Big\}$, however, is obviously an MWC. Therefore in this case ${\mathcal M}{\mathcal W}{\mathcal
C}3=\Big\{A_1(N)\cup \{p_i\}: p_i\in A_2(N)\Big\}$, which still holds when $m_1=1$ and  $m_2=1$. Hence the case $m_1=1$ is valid, and the case $m_2=1$ is also true by symmetry. (\ref{b012}) is straightforward.
\qed\end{proof}

Due to Lemma \ref{l03}, Lemma \ref{l04} and Lemma \ref{l05},  the following upper bound for the cardinality of ${\mathcal M}{\mathcal W}{\mathcal C}3$ is immediate.
\begin{lemma}In the two dimensional C-WMMG, $|{\mathcal M}{\mathcal W}{\mathcal C}3|\leq m_1+m_2$. \qed\label{l06}\end{lemma}

To analyze ${\mathcal M}{\mathcal W}{\mathcal
C}1$ and ${\mathcal M}{\mathcal W}{\mathcal
C}2$, we need more notations. 

\begin{definition}We denote by $M$ the set of benchwarmers in the grand coalition $N$, i.e. \begin{equation}M=N\setminus
B(N).\end{equation}\end{definition}

\begin{definition}$\forall p_i\in M$, let ${\mathcal M}{\mathcal W}{\mathcal C}1_{i}$ be the collection of MWCs where the set of busy-1 players are $A_1(N)$ and player $p_i$ is a busy-2 player, i.e.  \begin{equation}{\mathcal M}{\mathcal W}{\mathcal C}1_{i}=\{C\in {\mathcal M}{\mathcal W}{\mathcal C}:
A_1(C)=A_1(N), p_i\in A_2(C)\}.\label{mwc1i}\end{equation}\end{definition}

Obviously, when $A_1(N)$ is not winning, we have \begin{equation}{\mathcal M}{\mathcal W}{\mathcal C}1=\bigcup_{p_i\in M}{\mathcal M}{\mathcal W}{\mathcal C}1_{i}.\end{equation} 

We claim that \begin{equation}|{\mathcal M}{\mathcal
W}{\mathcal C}1_{i}|\leq 1.\end{equation}

 In fact, ${\mathcal M}{\mathcal
W}{\mathcal C}1_{i}$ has only one potential member that is defined below.
 
 \begin{definition}We define $C_{1i}$ as the only potential member of ${\mathcal M}{\mathcal
W}{\mathcal C}1_{i}$:
 \begin{equation}C_{1i}=A_1(N)\cup
\{p_i\}\cup D_{1i},\label{c1i}\end{equation}where $D_{1i}$ is the set of ``blocking" players in $M$ whose first dimensions are too big that $C_{1i}$ cannot afford to exclude, i.e.\begin{equation}D_{1i}=\{p_j\in M: p_j^1+q_2(N)\geq p_i^2+q_1(N)\}.\label{d}\end{equation}\end{definition}

 Notice it may be true that $p_i\in D_{1i}$. The lemma below show that whether $C_{1i}\in {\mathcal M}{\mathcal W}{\mathcal C}1_i$ can be checked very easily.
\begin{lemma}In the two dimensional C-WMMG, $\forall p_i\in M$, $C_{1i}\in {\mathcal M}{\mathcal W}{\mathcal C}1_i$ iff $C_{1i}$ is a winning coalition and $p_i$ is both decisive and busy in $C_{1i}$.\label{l3}\end{lemma}
\begin{proof}The necessary part is obvious, so we only need to show the sufficient part. As $C_{1i}$ is winning and all its members  are decisive, we have $C_{1i}\in {\mathcal M}{\mathcal W}{\mathcal C}$.
Together with $p_i$ is busy in $C_{1i}$, we complete the proof.\qed
\end{proof}

 $\forall p_i\in M$, we can similarly define ${\mathcal M}{\mathcal W}{\mathcal C}2_{i}=\{C\in {\mathcal M}{\mathcal W}{\mathcal C}:
A_2(C)=A_2(N), p_i\in A_1(C)\}$, $D_{2i}=\{p_j\in M: p_j^2+q_1(N)\geq
p_i^1+q_2(N)\}$,  $C_{2i}=A_2(N)\cup \{p_i\}\cup D_{2i}$, and have parallel results.

\subsection{The upper-bound}
Now we are ready to prove the main result of this section.
\begin{theorem} In the two dimensional C-WMMG, $|\mathcal{MWC}|\leq n+1$, and this bound is tight.\label{t1}
\end{theorem}
\begin{proof}
In the case that $A_1(N)\cap A_2(N)\neq \emptyset$, Lemma \ref{l03} implies the theorem. So we only need to consider the case $A_1(N)\cap
A_2(N)=\emptyset$.

Let \begin{equation}T=\{p_i\in M: |{\mathcal M}{\mathcal W}{\mathcal C}1_{i}|=|{\mathcal
M}{\mathcal W}{\mathcal C}2_{i}|=1\}.\end{equation}

 For arbitrary $p_i, p_j\in T$, we will show that $p_i^1\neq p_j^1$ implies
$p_i^2=p_j^2$. W.l.o.g. we assume first that \begin{equation}p_i^1>p_j^1.\label{asu1}\end{equation}

Since $p_j\in T$, we know by definition that ${\mathcal
M}{\mathcal W}{\mathcal C}2_{j}\neq \emptyset$, which further tells us that its only possible member $C_{2j}$, as similarly defined in (\ref{c1i}), is an MWC. Therefore, if $q_1(N)+p_i^2\geq q_2(N)+p_j^1$, then we would have $p_i\in D_{2j}$, and consequently the fact that $p_j$ is a busy-1 player of $C_{2j}$ would imply that $p_i^1\leq p_j^1$, contradicting assumption (\ref{asu1}). Hence we can only have
\begin{equation}q_1(N)+p_i^2<q_2(N)+p_j^1.\label{a12}\end{equation}

By a similar argument, $p_i\in T$ says that $C_{1i}$ is an MWC, containing $p_i$ as a busy-2 player.  Inequality (\ref{a12}) tells us that
$p_j\in D_{1i}\subset C_{1i}$, and therefore \begin{equation}p_j^2\leq p_i^2.\label{a11}\end{equation}

 Combining inequalities (\ref{asu1})  (\ref{a12}) (\ref{a11}) we have
\begin{equation}q_1(N)+p_j^2<q_2(N)+p_i^1.\label{a3}
\end{equation}

Again, $p_j\in T$ also implies that $C_{1j}$ is an MWC containing $p_j$ as a busy-2 player. Inequality (\ref{a3}) tells us
that $p_i\in D_{1j}\subset C_{1j}$, and hence $p_i^2\leq p_j^2$, which,  together with inequality (\ref{a11}), gives $p_i^2=p_j^2$.

By symmetry we know that $p_i^2\neq p_j^2$ implies $p_i^1=p_j^1$. It is not hard to check that this can only happen when all the members in $T$ have one identical dimension. W.l.o.g. we assume that their second dimensions are the
same.

For arbitrary $p_i, p_j\in T$, suppose that ${\mathcal M}{\mathcal W}{\mathcal C}1_{i}\neq
{\mathcal M}{\mathcal W}{\mathcal C}1_{j}$, next we will show that ${\mathcal M}{\mathcal
W}{\mathcal C}2_{i}= {\mathcal M}{\mathcal W}{\mathcal C}2_{j}$. Remember that \begin{equation}p_i^2=p_j^2.\label{a16}\end{equation}

First of all, since ${\mathcal M}{\mathcal W}{\mathcal C}1_{i}=\{C_{1i}\}, {\mathcal M}{\mathcal W}{\mathcal C}1_{j}=\{C_{1j}\}$,
${\mathcal M}{\mathcal W}{\mathcal C}1_{i}\neq {\mathcal M}{\mathcal W}{\mathcal C}1_{j}$ means that \begin{equation}C_{1i}\neq C_{1j}.\label{a13}\end{equation}

By definition (\ref{d}), $p_i^2=p_j^2$ implies that \begin{equation}D_{1i}=D_{1j}.\label{a14}\end{equation}

We claim that $p_j\notin D_{1i}$, i.e.
\begin{equation} p_j^1+q_2(N)<p_i^2+q_1(N).\label{a15}\end{equation} In fact,  otherwise we would have $C_{1j}\subseteq C_{1i}$, which would further imply $C_{1j}=C_{1i}$, because they are both MWCs. This contradicts (\ref{a13}).

From inequality (\ref{a15}), we immediate have $p_i\in D_{2j}\subset C_{2j}$ and further \begin{equation}p_i^1\leq p_j^1,\label{a17}\end{equation} because $p_i\in T$. Combining inequalities (\ref{a16})(\ref{a15})(\ref{a17}), we have
\begin{equation} p_i^1+q_2(N)<p_j^2+q_1(N).\label{a4}\end{equation}

 Again, inequality (\ref{a4}) tells us that \begin{equation}p_j\in D_{2i}\subset C_{2i},\label{a18}\end{equation} and hence
$p_j^1\leq p_i^1$, which,  together with (\ref{a17}), gives \begin{equation}p_i^1=p_j^1.\label{a19}\end{equation}

By definition, (\ref{a19}) implies that $D_{2i}=D_{2j}$, which, together with (\ref{a18}), further implies that $C_{2j}\subseteq C_{2i}$. Since $C_{2i}$ and $C_{2j}$ are both MWCs, we get eventually  $C_{2i}=C_{2j}$.

According to the above discussion, either ${\mathcal M}{\mathcal W}{\mathcal C}_{1i}={\mathcal M}{\mathcal W}{\mathcal
C}_{1j}$ holds for all $p_i, p_j\in T$, or ${\mathcal M}{\mathcal W}{\mathcal C}_{2i}={\mathcal
M}{\mathcal W}{\mathcal C}_{2j}$ holds for all $p_i, p_j\in T$. Therefore,
\begin{equation}\left|\bigcup _{p_j\in T}({\mathcal M}{\mathcal W}{\mathcal C}1_{j}\cup {\mathcal M}{\mathcal W}{\mathcal C}2_{j})\right|\leq |T|+1.\label{a20}\end{equation}

For the case $A_1(N)$ is a winning coalition and the case $A_2(N)$ is a  winning coalition, the theorem is trivially true (Lemma \ref{l04}). So we
assume that $q(A_1(N))<q(N\setminus A_1(N))$ and $q(A_2(N))<q(N\setminus A_2(N))$. Therefore,
${\mathcal M}{\mathcal W}{\mathcal C}1=\bigcup _{p_j\in M}{\mathcal M}{\mathcal W}{\mathcal
C}1_{j}$ and ${\mathcal M}{\mathcal W}{\mathcal C}2=\bigcup _{p_j\in M}{\mathcal M}{\mathcal
W}{\mathcal C}2_{j}$. Finally we have:
\begin{eqnarray*}
 &&|{\mathcal M}{\mathcal W}{\mathcal C}|\\
&=&|{\mathcal M}{\mathcal W}{\mathcal C}1\cup {\mathcal M}{\mathcal W}{\mathcal C}2\cup {\mathcal M}{\mathcal W}{\mathcal C}3|\\
 &=&\left| \bigcup _{p_j\in M}({\mathcal M}{\mathcal W}{\mathcal C}1_{j}\cup {\mathcal M}{\mathcal W}{\mathcal C}2_{j})\right|+|{\mathcal M}{\mathcal W}{\mathcal C}3|\\
 &\leq&\left|\bigcup _{p_j\in M\setminus T}({\mathcal M}{\mathcal W}{\mathcal C}1_{j}\cup {\mathcal M}{\mathcal W}{\mathcal C}2_{j})\right|+\left|\bigcup _{p_j\in T}({\mathcal M}{\mathcal W}{\mathcal C}1_{j}\cup {\mathcal M}{\mathcal W}{\mathcal C}2_{j})\right|+m_1+m_2\\
 &\leq& (|M|-|T|)+(|T|+1)+m_1+m_2\\
 &=&n+1,
\end{eqnarray*}
where the second last inequality is from Lemma \ref{l06}, and the last inequality is from (\ref{a20}).

The following example shows that the upper bound $n+1$ is tight.

There are 4 kinds of players: (1) 4 {\it huge} players: $p_1=p_2=(n^2, 0), p_3=p_4=(0, n^2)$; (2)
$t-1$ {\it left} players: $x_t=(t, 0), x_{t-1}=(t-1, 0), \cdots , x_2=(2, 0)$; (3) $t-1$ {\it
right} players: $y_t=(0, t), y_{t-1}=(0, t-1), \cdots, y_2=(0, 2)$; (4) 1 {\it versatile} player
$z=(1, 1)$.

It can be easily checked that $A_1(N)=\{p_1,p_2\}$, $A_2(N)=\{p_3, p_4\}$; $|{\mathcal M}{\mathcal W}{\mathcal
C}3|=2+2=4$; For each of the left player $x_j, 2\leq j\leq t$, ${\mathcal M}{\mathcal W}{\mathcal
C}1_{x_j}=\emptyset$ and $C_{2x_j}=\{p_3, p_4, x_j, y_t, y_{t-1},\cdots,  y_j\}\in {\mathcal
M}{\mathcal W}{\mathcal C}2$; For each of the
 right players $y_j, 2\leq j\leq t$, ${\mathcal M}{\mathcal W}{\mathcal C}2_{y_j}=\emptyset$ and
 $C_{1y_j}=\{p_1, p_2, y_j, x_t, x_{t-1}, \cdots, x_j\}\in {\mathcal M}{\mathcal W}{\mathcal C}1$;
 For the versatile player $z$, $C_{1z}=\{z, p_1, p_2, x_t, x_{t-1},$ $\cdots, x_2\}\in {\mathcal M}{\mathcal W}{\mathcal C}1$ and
 $C_{2z}=\{z, p_3, p_4, y_t, y_{t-1}, \cdots, y_{2}\}$ $\in {\mathcal M}{\mathcal W}{\mathcal C}2.$

 Therefore, there are $n=2t+3$ players and $|{\mathcal M}{\mathcal W}{\mathcal C}|=4+2(t-1)+2=n+1$.\qed\end{proof}

 \section{Local Monotonicity\label{s5}}

In this section, we shall prove that local monotonicity, which says that players with larger weights
have no less power than the ones with smaller weights, holds for all the four power indices in the two
dimensional C-WMMG.

First of all, it is trivial that this property holds for the Penrose-Banzhaf index and the Shapley-Shubik index
(for arbitrary dimension, in fact). To verify that it also holds for the Holler-Packel index and
the Deegan-Packel index, we need two lemmas.
\begin{lemma}In the two dimensional C-WMMG, for all $C\in {\mathcal M}{\mathcal W}{\mathcal C}$, the following properties hold:\\
(a) $\forall p_j\in C$, either $p_j$ is busy in $C$ or $p_j$ is busy in $C^-\cup \{p_j\}$. That is, $p_j\in B(C)\cup B(C^-\cup \{p_j\})$;\\
(b) If $C$ has a benchwarmer, then either $q_1(C)<q_1(C^-)$ or
$q_2(C)<q_2(C^-)$,  and they can't hold simultaneously;\\
(c) If $|A_1(C)|\geq 2$ and $A_1(C)\nsubseteq A_{2}(C)$, then either $q_{1}(C)\leq q_{1}(C^-)$ or
$q_{2}(C)\leq q_{2}(C^-)$, and they can't hold simultaneously.\label{l6}
\end{lemma}
\begin{proof}
(a) If $p_j$ is a busy player of $C$, the proof is finished. Otherwise, $C\in {\mathcal M}{\mathcal
W}{\mathcal C}$ implies $q(C)>q(C^-)$ and $q(C)=q(C\setminus \{p_j\})\leq q(C^-\cup \{p_j\})$.
Therefore, $q(C^-\cup \{p_j\})>q(C^-)$, which means $p_j\in B(C^-\cup \{p_j\})$.

(b) Suppose that $p_j$ is a benchwarmer of $C$, then $q(C)\leq q(C^-\cup \{p_j\})$ and $p_j\in
B(C^-\cup \{p_j\})$. Obviously, $(p_j,p_j)\notin A(C^-\cup \{p_j\})$. Thus, either $q(C)\leq
q_{1}(C^-)+p_j^2$ or $q(C)\leq q_{2}(C^-)+p_j^1$. Since $q(C)=q_1(C)+q_2(C)$ and
$q_{1}(C)>p_j^1$, $q_{2}(C)>p_j^2$, we have either $q_1(C)<q_1(C^-)$ or
$q_2(C)<q_2(C^-)$. It is trivial that they can't hold simultaneously, since $C$ is a winning
coalition.

(c) Since $|A_1(C)|\geq 2$ and $A_1(C)\nsubseteq A_{2}(C)$, we can
 take some $p_j\in A_1(C)\setminus A_{2}(C)$ such that $q(C\setminus \{p_j\})=q(C)$.
As $C\in {\mathcal M}{\mathcal W}{\mathcal C}$, we have $q(C)\leq q(C^-\cup \{p_j\})$ and
$p_j\in B(C^-\cup \{p_j\})$. Moreover, $p_j\in A_1(C)\setminus A_{2}(C)$ tells us that
$(p_j,p_j)\notin A(C^-\cup \{p_j\})$. Thus, either $q_1(C)+q_2(C)\leq q_{1}(N\setminus
C)+p_j^2$ or $q_1(C)+q_2(C)\leq q_{2}(C^-)+p_j^1)$. $q_{1}(C)\geq p_j^1$ and
$q_{2}(C)\geq p_j^2$ give that either  $q_{1}(C)\leq q_{1}(C^-)$ or $q_{2}(C)\leq q_{2}(C^-)$.
They can't hold simultaneously, because $C$ is a winning coalition.\qed
\end{proof}
\begin{lemma}In the two dimensional C-WMMG, take $C_1, C_2\in {\mathcal M}{\mathcal W}{\mathcal C}$. Suppose $(p_{i_0},p_{j_0})\in A(C_1)\cap A(C_2)$, then $C_1=C_2$ if one of the following conditions holds:\\
(a) $|A(C_t)|=1$ for some $t\in \{1,2\}$, i.e. $A_1(C_t)=\{p_{i_0}\}$ and $A_2(C_t)=\{p_{j_0}\}$;\\
(b) $|A_t(C_1)|=|A_t(C_2)|=1$ for some $t\in \{1,2\}$.\label{l7}
\end{lemma}
\begin{proof}
(a) Without loss of generality, we assume that $t=1$. Suppose on the contrary that $C_1\neq C_2$, then $C_1\setminus
C_2\neq \emptyset$ (note that no set can be a proper subset of the other because they are both MWCs). Take some player $p_{k_0}\in C_1\setminus C_2$. Then $p_{k_0}$ is a benchwarmer of $C_1$ because $p_{i_0}$ and $p_{j_0}$ are the all busy players of $C_1$. From Lemma
\ref{l6}(b), we know that
\begin{equation}q_k(C_2)=q_k(C_1)<q_k(C^-_1) ~ for ~some ~k\in \{1, 2\}.\label{a2}\end{equation}

$C_1$ is an MWC means that, for all $p_j\in A_{k}(C^-_1)$: \begin{equation}q(\{p_j,
p_{k_0}\})=q_k(C^-_1)+p^{3-k}_{k_0}=q(C^-_1\cup \{p_{k_0}\})\geq q(C_1\setminus
\{p_{k_0}\})=q(C_2).\end{equation} As $p_{k_0}\notin C_2$ and $C_2$ is a winning coalition, we have
$p_j\in C_2$. Therefore $A_{k}(C_1^-)\subseteq C_2$, which further gives $q_{k}(C^-_1)\leq
q_k(C_2)$ and contradicts statement (\ref{a2}).

(b) Without loss of generality, we assume that $t=1$. Therefore \begin{equation}A_1(C_1)=A_1(C_2)=\{p_{i_0}\}.\label{f3}\end{equation}



Suppose on the contrary that $C_1\neq C_2$. Then $C_1\setminus C_{2}\neq \emptyset$. Take any player \begin{equation}p_{k_0}\in
C_1\setminus C_{2}.\label{f6}\end{equation}

If $p_{k_0}$ is a benchwarmer of $C_1$, the proof is done by using the same argument as in (a). So
we assume that $p_{k_0}$ is a busy player of $C_1$.



$C_1\in {\mathcal M}{\mathcal
W}{\mathcal C}$ tells us that: \begin{equation}q(C^-_1\cup \{p_{k_0}\})\geq q(C_1\setminus
\{p_{k_0}\})=q(C_1)=q(C_2).\end{equation}

If $(p_{k_0}, p_{k_0})\in A(C^-_1\cup \{p_{k_0}\})$, we
would have $q(\{p_{k_0}\})\geq q(C_1)$, and further $p_{k_0}\in A_1(C_1)=\{p_{i_0}\}$, which contradicts
(\ref{f6}). Therefore, there must exist some \begin{equation}w\in C^-_1\label{f8}\end{equation} such that
\begin{equation}q(\{w,p_{k_0}\})=q(C_1^-\cup \{p_{k_0}\})\geq q(C_2)=q(C_1\cup C_2).\label{f7}\end{equation}

As $p_{k_0}\notin C_2$, from (\ref{f7}) we have \begin{equation}w\in C_2,\label{f2}\end{equation} from the fact that $C_2$
is a winning coalition.

 Due to (\ref{f6}) (\ref{f2}), we have $\{w,p_{k_0}\}\subseteq C_1\cup C_2$. (\ref{f7}) implies further that $q(\{w,p_{k_0}\})=q(C_1\cup C_2)$, hence either $(w,p_{k_0})\in A(C_1\cup C_2)$ or $(w,p_{k_0})\in A(C_1\cup C_2)$, and consequently either $w=p_{i_0}$ or $p_{k_0}=p_{i_0}$.
Clearly, $w\neq p_{i_0}$ (due to  (\ref{f3}) and (\ref{f8})), and $p_{k_0}\neq p_{i_0}$ (\ref{f3}) and (\ref{f6})). A contradiction.\qed\end{proof}

\begin{theorem}In the two dimensional C-WMMG, local monotonicity holds for the Holler-Packel index and the Deegan-Packel index. That is, for all  $p_{i_0}, p_{j_0}\in N$, such that $p_{i_0}\leq p_{j_0}$, i.e. $p_{i_0}^1\leq p_{j_0}^1$ and $p_{i_0}^2\leq p_{j_0}^2$,
we have $\theta_{i_0}\leq \theta_{j_0}$ for any $\theta\in\{hp, dp\}$. In particular, $p_{i_0}=
p_{j_0}$ implies $\theta_{i_0}=\theta_{j_0}$.\label{t3}
\end{theorem}
\begin{proof}We shall show that there exists at most one $C\in {\mathcal M}{\mathcal W}{\mathcal C}$ that
consists of $p_{i_0}$ but not $p_{j_0}$, and if there does exist
such $C$, we have  at least one $C^*\in {\mathcal M}{\mathcal W}{\mathcal C}$ that consists of
$p_{j_0}$ but not $p_{i_0}$.

Suppose that $C\in {\mathcal M}{\mathcal W}{\mathcal C}$, $p_{i_0}\in C$ and $p_{j_0}\notin C$. By
the definition of winning coalitions, it is obvious that $(C\cup \{p_{j_0}\})\setminus \{p_{i_0}\}$
is winning. By Lemma \ref{l1}, there exists at least one $C^*\in {\mathcal M}{\mathcal W}{\mathcal C}$
such that $C^*\subseteq (C\cup \{p_{j_0}\})$. It is straightforward that $p_{j_0}\in C^*$,
$p_{i_0}\notin C^*$ and $1/|C^*|\geq 1/|C|$. It suffices to show that there is at most one such $C$.

Suppose that there is still another $C^{'}\in {\mathcal M}{\mathcal W}{\mathcal C}$ such
$p_{i_0}\in C^{'}$ and $p_{j_0}\notin C^{'}$. $p_{i_0}\leq p_{j_0}$ implies that $p_{i_0}$ is a
busy player in both $C$ and $C^{'}$, that is $p_{i_0}\in B(C)\cap B(C^{'})$. Without loss of
generality, we assume that $p_{i_0}\in A_1(C)$. Since $p_{i_0}\leq p_{j_0}$ and $p_{j_0}\notin
 C$, we  have further that $A_1(C)=\{p_{i_0}\}$, because otherwise $p_{i_0}$ would be indecisive to $C$.

  We claim that $p_{i_0}\in A_1(C^{'})$ and
 thus $A_1(C^{'})=\{p_{i_0}\}$. Otherwise, we will have $p_{i_0}\in A_2(C^{'})$ and
 thus $A_2(C^{'})=\{p_{i_0}\}$. Since $p_{i_0}\leq p_{j_0}$ and $p_{j_0}\notin C$, it holds that $(p_{i_0}, p_{i_0})\notin A(C)$, there must exist
 some $p_{k_0}\in C$ such that $(p_{i_0}, p_{k_0})\in A(c)$. Therefore $q(C)=q(\{p_{i_0},
 p_{k_0}\})$. Similarly, there exists some $p_{k_0}^{'}$ such that $(p_{k_0}^{'}, p_{i_0})\in A(C^{'}))$ and $q(C^{'})=q(\{p_{i_0},
 p_{k_0}^{'}\})$. As $p_{k_0}^2>p_{i_0}^2$ and $p_{k_0}^{'1}>p_{i_0}^1$, we have
 $p_{k_0}^{'}\notin C$ and $p_{k_0}\notin C^{'}$. $C$ is a winning coalition
 tells us that \begin{equation}q(\{p_{i_0}, p_{k_0}\})=q(C)>q(C^-)\geq q(\{p_{j_0},
 p_{k_0}^{'}\})\geq q(\{p_{i_0}, p_{k_0}^{'}\}).\end{equation} The last inequality holds because $p_{i_0}\leq
 p_{j_0}$. As $C^{'}$ is also a winning coalition and $q(C^{'})=q(\{p_{i_0},
 p_{k_0}^{'}\})$, we have \begin{equation}q(\{p_{i_0}, p_{k_0}^{'})>q(C^{'-})\geq q(\{p_{j_0},
 p_{k_0}\}),\end{equation} which gives $q(\{p_{i_0}, p_{k_0}\})>q(\{p_{j_0}, p_{k_0}\})$, a contradiction with
 $p_{i_0}\leq p_{j_0}$.

 As $C$ is a winning coalition, $p_{i_0}\in A_1(C)$ and $p_{i_0}\leq p_{j_0}$ imply
 that $A_2(C)=A_2(N)$, because otherwise $C$ would not be winning. Since $p_{i_0}\in A_1(C^{'})$, for the same reason we have
 $A_2(C^{'})=A_2(N)$. Therefore, $A_2(C)=A_2(C^{'})$. Together with
 $A_1(C)=A_1(C^{'})=\{p_{i_0}\}$, we finally get $C=C^{'}$ by Lemma \ref{l7}(b).

 As for the special case $p_{i_0}=p_{j_0}$, it is straightforward that
 $\theta_{i_0}=\theta_{j_0}$ since we have both $\theta_{i_0}\leq \theta_{j_0}$
 and $\theta_{j_0}\leq \theta_{i_0}$.\qed
\end{proof}

The following example shows that even if $p_{i_0}<p_{j_0}$, that is $p_{i_0}^1<p_{j_0}^1$ and
$p_{i_0}^2<p_{j_0}^2$, it is still possible that $\theta(p_{i_0})=\theta(p_{j_0})$ for all $\theta\in \{hp, dp,
pb, ss\}$.

\begin{example}There are four players in total: $p_1=(10, 2), p_2=(2, 10), p_3=(1, 3), p_4=(2,
4)$. $p_3$ has only one $MWC$: $\{p_1,p_3\}$,  $p_4$ has only one $MWC$: $\{p_1, p_4\}$. Therefore
$hp_3=hp_4=1$, and $dp_3=dp_4=1/2$;

$p_3$ has only one winning coalition in which she is decisive: $\{p_1,p_3\}$,  $p_4$ also has only one
winning coalition in which she is decisive: $\{p_1, p_4\}$. Therefore $pb_3=pb_4=1/8$ and
$ss_3=ss_4=1/12$.
\end{example}

The following example shows that a player who is busy in the grand coalition $N$, that is a player in
$B(N)$, may have smaller power than that of a player who is not, if powers are measured by Holler-Packel
index or Deegan-Packel index.

\begin{example} There are six players in total: $p_1=(10, 0), p_2=(0, 10), p_3=p_4=p_5=(0, 3), p_6=(3,
0)$. $p_2$ has 2 $MWC$s: $\{p_1, p_2\}$ and $\{p_2, p_3, p_4, p_5, p_6\}$; $p_6$ has 4 $MWC$s:
$\{p_1, p_3, p_6\}$, $\{p_1, p_4, p_6\}$, $\{p_1, p_5, p_6\}$ and $\{p_2, p_3, p_4, p_5, p_6\}$.
Therefore $hp_2=2<hp_6=4$, and $dp_2=7/10<dp_6=6/5$.\label{eg4}
\end{example}

 The example below shows that Theorem \ref{t3} doesn't
hold for the three or higher dimensional cases.

\begin{example}There are five players in total: $p_1=(5, 2, 1), p_2=(4, 0, 0), p_3=p_4=(0, 2, 0), p_5=(0, 0,
4)$. $p_1$ has 2 $MWC$s: $\{p_1, p_2\}$ and $\{p_1, p_5\}$; $p_2$ has 3 $MWC$s: $\{p_1, p_2\}$,
$\{p_2, p_3, p_5\}$ and $\{p_2, p_4, p_5\}$. Therefore $hp_1=2<hp_2=3$, and $dp_1=1<dp_2=7/6$,
while $p_1>p_2$.\label{eg5}
\end{example}

\section{Concluding Remarks\label{s7}}
C-WMMG is introduced in this paper. It models the complementary cooperation and can be taken as a sister model of the
famous weighted majority game.
It is well known that in continuous math, analysis that involves the operation of max is usually much more complicated than that of addition because it is not differentiable. In combinatorial optimization, the max form objective is also harder than that of sum form. In the field of coalitional game theory, however, we see in this paper the opposite relation between WMG and the two dimensional C-WMMG.

Since C-WMMG is a brand-new model, it is not surprising that, for further research, there are lots of promising open problems. We only list a few that are to the most interest of us.

(a) An obvious direction is to discuss the higher dimensional cases. It is
meaningful to analyze various other monotonicities. In particular, counter examples can easily be
constructed to show that new member monotonicity, which says that when a new member enters into the
game, normalized powers (original powers divided by the total power) of initial players will not
increase, is violated by all the four power indices. We conjecture that the paradox of redistribution, which says that
a player's (normalized) power may decrease when its weight increases, will not occur in any of the
four power indices. Whether an  example similar to Example \ref{eg4}  exists for the Shapley-Shubik
index and Penrose-Banzhaf
index is also an open question.

(b) The threshold variant of C-WMMG (TC-WMMG for short), where there is a lower bound $L$ such that $v(C)=1$ if and only if $q(C)\geq L$, is a more natural analogue of WMMG and thus worth further study. In fact, it can be taken as a generalization of C-WMMG, though it is difficult to give a clear value of the threshold such that TC-WMMG with this threshold is exactly C-WMMG (recall the relation between WMG and simple WMG as discussed in the second paragraph of Subsection \ref{s1.4}). What's more, it is also a generalization of WMG. To be precise, the special case where the dimension number $d_0$ equals the number of players $n$ and players have distinct nonzero dimensions is exactly WMG. The simplicity of a $d_0$ dimensional TC-CWMMG is that each MWC has at most $d_0$ members (there are no benchwarmers in MWC). Hence when $d_0$ is a constant, the total number of MWCs can be bounded by $O(n^{d_0})$. When $d_0=2$, this bound is tight, being reached when half of the players are characterized by $(L/2,0)$ and half by $(0,L/2)$. As in WMG and C-WMMG, local monotonicity holds trivially for the Shapley-Shubik
index and Penrose-Banzhaf index. This is not true, however, for the Holler-Packel index and the Deegan-Packel index, because giant players who can win by themselves may have smaller Holler-Packel and Deegan-Packel indices than the {\it versatile} ones that have a great number of potential partners. For {\it regular} TC-CWMMG where each MWC consists of exactly $d_0$ players, we are happy to observe that  local monotonicity always holds for the Holler-Packel index and the Deegan-Packel index. Based on the upper-bound of ${\mathcal M}{\mathcal W}{\mathcal C}$, efficient algorithms for computing all the Holler-Packel and Deegan-Packel indices exist trivially. Computing  Shapley-Shubik
and Penrose-Banzhaf indices, we guess, can also be done efficiently, though careful analysis is still needed. Based on the above discussions, it seems that almost all of the positive results for the two dimensional C-WMMG, presented in this paper, are still valid for TC-WMMG with a constant dimension.

(c) Another interesting problem for TC-WMMG, which is not addressed for C-WMMG (but already for WMG, see \cite{cy11}), is to study the selfish behavior of players. To be precise, assume that the value of each winning coalition is divided among its members proportional to their contributions, study the price of anarchy under various equilibrium or stability concepts. This problem, very interestingly, can be taken as a more concrete model of the famous stable marriage problem (or more precisely, the stable roommates problem), which is still actively studied today, see \cite{gs62,ss71,ir85,ir02} for standard references, and \cite{bmm12,c12,f12,fim11,hw11,k11,kk12,p12,t12} for recent works. A prominent  feature of the new model is that it is {\it cardinal} rather than {\it ordinal}, that is, players may get different payoffs by cooperating with different partners. There is not only a preference list of each player for all her potential partners or partner sets, but also the exact payoffs. We hope that this new ingredient may deepen the previous results of the stable marriage problem as well as of the stable roommates problems,  and bring brand-new interesting properties and problems.

(d) The airport game is a classical model in cooperative game theory \cite{b00,lo73,l74}, and is still attracting some new attention \cite{ail09,hy12,hty12,hty12b}. It has been well studied in real cases of cost allocation of airport building. A prominent advantage of the airport game is that the formidable Shapley value can be efficiently computed, as in the two dimensional C-WMMG. In fact, it is the only nontrivial model, other than the two dimensional C-WMMG and TC-WMMG (as discussed above in (b)), that possesses this property, as far as known to the authors. Interestingly, if we simply define the cost function as $c(C)=q(C)=\sum_{i=1}^{d_0}\max\{p_j^i:p_j\in C\}$, we can find that the cost sharing cooperative game model $(c(\cdot), N)$ is exactly the high dimensional version of the airport game. We hope that this new model can find good applications just like the airport game.

(d) An interesting variant of C-WMMG is the model where in any coalitional structure there are more than one coalitions that can get a nonzero payoff, say besides the winner, there is also a runner-up (we can either assume that the two coalitions get identical payoffs, or assume that the winner gets more than the runner-up). Coalitional games with externalities, also known as the partition function form as compared with the characteristic form \cite{lt63}, are very meaningful models that are attracting more and more attention both from the field of game theory, and from the field of supply chain management. To represent a general partition function form coalitional game, it is even more difficult than that of a characteristic form one, where there are $2^n$ number of coalitional values (to be exact, the number of coalitional structures in a general partition function form coalitional game, for each of which a certain number of values should be assigned,  is called {\it  Stirling number of the second kind}. Needless to say, it is extremely huge). This is one of the main barriers for in-depth study of this model. And in contrast to the classical characteristic form, where there have been lots of very famous, and well studied and well applied,  concrete models, very few is known to the partition function form. The new model we just proposed, we hope, may be helpful at least in part in solving the above two problems.

(f) Cooperation rule that is not purely complementary as displayed in the example of producing two goods in Subsection \ref{s1.1}, i.e. each player is restricted to play at most one role, is also of great interest. In fact, there exists quite a few natural and interesting cooperation rules, see \cite{cy10} for a tentative study.

(g) Computing other cooperative solutions for C-WMMG, say the nucleolus, the kernel, the least core, and various bargaining sets, are also meaningful. We remind the reader that these problems have already been considered for WMG \cite{eggw08,ep08,gmps11}.

(h) The inverse power index problems for WMG,  i.e. given a certain power index and a distribution  on players,  compute a WMG instance that has a distribution of power indices that is as close as possible to the given distribution, began to attract researchers' attention recently \cite{a10,dds12,k11,kkz10}. This is, in some sense, the ultimate goal for the research of power index measuring. Finding that  players' real bargaining powers in voting may not be proportional to their direct voting weights and conceiving more scientific measuring ways is only a first step. The second, and  probably the final,  step is to design better voting mechanisms or find better vote allocating ways. This branch of research, doubtlessly, is likely to be even more controversial than power measuring, because it is extremely hard for people to agree on what fairness is. Regardlessly of this, we believe that the inverse problem is a promising new direction that is worth more serious attention.   Various inverse problems for C-WMMG, of course,  are also meaningful. Since the two dimensional C-WMMG is much more handleable than WMG, we expect that its inverse problems will also be easier. Due to this feature, we can construct a new voting mechanism where each vote is characterized by a vector and pooling more than one votes is not adding them together, but to take a maximum for each dimension. Although this new voting mechanism may seem rather weird at the first glance, and honestly, we have completely no idea whether taxpayers might accept it one day or another, we hope that it may be useful in some really special scenarios.

 (i) Although complementary cooperation is an extremely popular phenomenon, as argued in the introduction part of this paper, the first possible applications and case studies of C-WMMG are likely to come from team sports, because the cooperation of players in team sports is largely  pure complementary, and evaluation of players (and benchwarmers in particular) is also meaningful in both theory and practice. This topic falls into the fields of  contest theory, and more generally sports economics \cite{dbsb06,hh08,s03}. For instance, our theory, along with more future in-depth  studies, may help people rank the NBA players, evaluate the benchwarmers, and select the MVP more objectively. We know that NBA player trading  is a huge market, our theory may also be helpful in advising team managers on how to trade players more efficiently.

{\bf Acknowledgements.} The authors would like to thank Prof. Zhi Jin for hosting a wonderful
seminar, in which the authors got the initial ideas of this paper. They thank Dr. Jian Tang
for pointing out a critical mistake in a first version of this paper. Thanks also go to Prof.  Xiuli Chao, Xin Chen, Chengzhong Qin, for helpful discussions and suggestions. They are also grateful to an anonymous referee for really valuable and detailed helps on significantly improving the presentation of this paper (in particular, reference \cite{ab03} was informed by him/her).

\begin{appendix}
\section{Computing MWCs\label{s4}}
Due to the analysis in Section \ref{s3}, an $O(n^2)$ algorithm for computing all the MWCs can be simply designed by brute-force enumeration. We show in this section that the time complexity can be reduced to $O(n\log n)$.  We need to deal with the data structure more carefully. All the notations in Section \ref{s3} are still valid in this section.

As shown in Lemma \ref{l03} and Lemma \ref{l04},  the case $A_1(N)\cap A_2(N)\neq \emptyset$ and the cases $A_1(N)$ is winning or $A_2(N)$ is winning are very simple, so we concentrate on the case where  $A_1(N)\cap A_2(N)=\emptyset$ and neither $A_1(N)$ nor $A_2(N)$ is winning.

Remember that $M$ is the set of benchwarmers of the grand coalition $N$. Let $m$ be the cardinality of $M$, i.e. \begin{equation}m=|M|,\end{equation} and re-index all the players in $M$ as $p_1, p_2, \cdots, p_{m}$ in non-increasing order of their first dimensions, i.e.
\begin{equation}p_1^1\geq p_2^1\geq\cdots \geq p_{m}^1.\label{r1}\end{equation}

\begin{definition}For each player $p_i\in M$,  we give her a second index $l(i), 1\leq l(i)\leq m$. We also assume that the second indices are in non-increasing order of their second dimensions, i.e.
\begin{equation}p^2_{l^{-1}(1)}\geq p^2_{l^{-1}(2)}\geq \cdots\geq p^2_{l^{-1}(m)},\label{r2}\end{equation}
where $p_{l^{-1}(i)}$ denotes the player in $M$ whose second index is $i$.\end{definition}

Given $p_i\in M$, suppose $D_{1i}$ is non-empty, then players in $D_{1i}$ have consecutive indices. To
be exact, we need a new notation.

\begin{definition}$\forall p_i\in M$, we use $x(i)$ to denote the largest index of players in $D_{1i}$, i.e. \begin{equation}x(i)=\max\Big\{j: p_{j}\in D_{1i}\Big\}.\label{xi}\end{equation}\end{definition}

Then we have \begin{equation}D_{1i}=\Big\{p_{j}:1\leq j\leq x(i)\Big\}.\label{xd1i}\end{equation}

Therefore, $D_{1i}$ is nicely determined by $x(i)$. If $D_{1i}=\emptyset$, we simply let
$x(i)=0$.

Similarly, \begin{definition} $\forall p_i\in M$, we use let $y(i)$ to denote the largest second index of players in $D_{2i}$, i.e. \begin{equation}y(i)=\max\Big\{l(j): p_j\in D_{2i}\Big\}.\end{equation} \end{definition}

Then we have \begin{equation}D_{2i}=\Big\{p_{l^{-1}(j)}: 1\leq
j\leq y(i)\Big\}.\end{equation}

And if $D_{2i}=\emptyset$, we let $y(i)=0$.

By definitions of $D_{1i}, D_{2i}$, and the two indexings, the following inclusion relations are obvious: \begin{equation}D_{1l^{-1}(1)}\subseteq
D_{1l^{-1}(2)}\subseteq \cdots \subseteq D_{1l^{-1}(m)},\label{b1}\end{equation}
\begin{equation}D_{21}\subseteq D_{22}\subseteq \cdots \subseteq D_{2m}.\label{b2}\end{equation}

And equivalently:
\begin{equation}x(l^{-1}(1))\leq x(l^{-1}(2))\leq \cdots \leq x(l^{-1}(m)),\label{a6}\end{equation}
\begin{equation}y(1)\leq y(2)\leq \cdots\leq y(m).\label{a7}\end{equation}

Due to the above monotone relations, and the fact that they are all integers falling into the interval $[0,m]$, $x(i)$s and $y(i)$s can be computed easily in $O(m)$ time. We omit the details. And hence the $D_{1i}$s, $D_{2i}$s, $C_{1i}$s and $C_{2i}$s can be computed in $O(m)$ time.

The remaining problem is to show that checking which of such coalitions are  MWCs and which are not can be done in $O(m\log m)$. Notice that checking them one by one independently cannot be efficient enough.  The main idea is also to do some pre-treatments. To be precise, we need several more notations.

\begin{definition}$\forall p_i\in M$, we use $R_1(i)$ and $r_1(i)$ to denote the smallest second index and second smallest second index of the players $\{p_1,p_2,\cdots,p_i\}$, respectively, i.e. \begin{equation}R_1(i)=\min\Big\{l(j):1\leq j\leq i\Big\},\end{equation} \begin{equation}r_1(i)=\min\Big\{l(j):1\leq j\leq i,l(j)\neq R_1(i)\Big\}.\end{equation}\end{definition}

 \begin{definition}$\forall p_i\in M$, we use $\mu_{1i}$ to denote the indicator of whether $p_i\in D_{1i}$ and has the largest second dimension, i.e. \begin{equation}\mu_{1i}=\left\{\begin{array}{ll}1&if~l(i)=R_1(x(i))\\
0&otherwise\end{array}\right..\end{equation}\end{definition}

Then it can be checked that\begin{equation}q_2\Big(D_{1i}\setminus \{p_i\}\Big)=(1-\mu_{1i})p_{l^{-1}(R_1(x(i)))}^2+\mu_{1i}p_{l^{-1}(r_1(x(i)))}^2.\label{g1}\end{equation}

The advantage of the $R_1(i)$s and $r_1(i)$s is that they can be computed in $O(m)$ time, using simple algorithmic techniques.

Similarly, $\forall p_i\in M$, let $R_2(i)$ and $r_2(i)$ be the smallest index and second smallest index of the players $\Big\{p_{l^{-1}(1)},p_{l^{-1}(2)},\cdots,p_{l^{-1}(i)}\Big\}$, respectively, i.e. \begin{equation}R_2(i)=\min\Big\{j:1\leq l(j)\leq i\Big\},\end{equation} \begin{equation}r_2(i)=\min\Big\{j:1\leq l(j)\leq i,j\neq R_2(i)\Big\}.\end{equation}

$R_2(i)$s and $r_2(i)$s can also be computed in $O(m)$ time. $\forall p_i\in M$, let $\mu_{2i}$ be the indicator of whether $p_i\in D_{2i}$ and has the largest first dimension, i.e. \begin{equation}\mu_{2i}=\left\{\begin{array}{ll}1&if~i=R_2(y(i))\\
0&otherwise\end{array}\right.,\end{equation}
then it can be checked that\begin{equation}q_1\Big(D_{2i}\setminus \{p_i\}\Big)=(1-\mu_{2i})p_{R_2(y(i))}^1+\mu_{2i}p_{r_2(y(i))}^1.\end{equation}

Analogous to Lemma \ref{l3}, we have the following more concrete
result.

\begin{lemma}In the two dimensional C-WMMG, $\forall p_i\in M$, $C_{1i}\in $ ${\mathcal M}{\mathcal W}{\mathcal C}1_i$ iff the following three
conditions hold simultaneously: \begin{equation}p_i^2\geq \max\Big\{p_{l^{-1}(R_1(x(i)))}^2, q_2(A_1(N))\Big\},\label{a9}\end{equation}
\begin{equation}q_1(N)+p_i^2>q_2(N)+\max\Big\{q_1(A_2(N)), p^1_{x(i)+1}\Big\},\label{a8}\end{equation} \begin{equation}q_1(N)+\max\Big\{q_2\big(D_{1i}\setminus \{p_i\}\big), q_2(A_1(N))\Big\}\leq
q_2(N)+\max\Big\{q_1(A_2(N)), p^1_{x(i)+1}, p_i^1\Big\},\label{a10}\end{equation} where $p_{m+1}^1$ is
defined as 0.\label{l4}\end{lemma}
\begin{proof}Condition (\ref{a9}) means that $p_i$ is busy-2 in $C_{1i}$; Condition (\ref{a8}) guarantees  that
$C_{1i}$ is winning; Condition (\ref{a10}) says $p_i$ is decisive in $C_{1i}$.\qed
\end{proof}

The above lemma tells us that to determine whether $C_{1i}\in {\mathcal M}{\mathcal W}{\mathcal
C}1_i$ or not takes constant time (recall (\ref{g1})). Similar result holds for $C_{2i}$.

A valuable notice is that, for any $p_i,p_j\in M, p_i\neq p_j$,
even if $C_{1i}\in {\mathcal M}{\mathcal W}{\mathcal C}1_i$ and $C_{1j}\in {\mathcal M}{\mathcal
W}{\mathcal C}1_j$, it is still possible that $C_{1i}=C_{1j}$. If this situation
occurs, it must hold that $p^2_i=p^2_j.$ The following lemma shows that eliminating the redundant coalitions can be done efficiently.

\begin{lemma}In the two dimensional C-WMMG, suppose that $p_{l^{-1}(i)}^2=p_{l^{-1}(i+1)}^2=\cdots =p_{l^{-1}(j)}^2, 1\leq i<j\leq
m$, and $C_{1l^{-1}(t)}\in {\mathcal M}{\mathcal W}{\mathcal C}1_{l^{-1}(t)}$ for all $i\leq t\leq
j$. By definition (\ref{d}), it can be observed that all the $D_{1l^{-1}(t)}$s are the same. We denote them as $D$.

(a) If $p_{l^{-1}(i)}\in D$, then $C_{1l^{-1}(i)}=C_{1l^{-1}(i+1)}=\cdots =C_{1l^{-1}(j)}$.

(b) If $p_{l^{-1}(i)}\notin D$, then $C_{1l^{-1}(s)}\neq C_{1l^{-1}(t)}$ holds for all $i\leq s, t\leq j, s\neq t$.\label{l5}\end{lemma}
\begin{proof}

(a) $\forall i<s\leq j$, we know by hypothesis that $C_{1l^{-1}(i)}\subseteq C_{1l^{-1}(s)}$. Since the two coalitions are both MWCs, they can only be identical.

(b) Suppose on the contrary that
$C_{1l^{-1}(s)}=C_{1l^{-1}(t)}$ for some $s\neq t$. This implies that $p_{l^{-1}(t)}\in
D$. Using a similar argument as in (a),  we have $C_{1l^{-1}(i)}=C_{1l^{-1}(i+1)}=\cdots
=C_{1l^{-1}(j)}$, and hence $p_{l^{-1}(i)}\in D$, a contradiction with the hypothesis. \qed\end{proof}

According to the above discussions, ${\mathcal
M}{\mathcal W}{\mathcal C}1$ and ${\mathcal M}{\mathcal W}{\mathcal C}2$ can be computed in $O(m)$
time. Since computing ${\mathcal M}{\mathcal W}{\mathcal C}3$ is easy (Lemma \ref{l05}), the algorithm for computing all the
MWCs can be easily designed. The algorithm is described as follows.
\begin{center}
\line(1,0){330}\\

{\it Algorithm MWC-2}\end{center}
{\it step 1.} {\bf Input} $N$;

 ~~~~~~Compute $A_1(N)$, $A_2(N)$, as well as $q_1(N)$, $q_2(N)$, $q_1(A_2(N))$ and $q_2(A_1(N))$;

~~~~~~{\bf if} $A_1(N)\cap A_2(N)\neq \emptyset$

~~~~~~~~~~{\bf if} $A_1(N)\subseteq A_2(N)$

~~~~~~~~~~~~~~{\bf output} $A_1(N)$

~~~~~~~~~~{\bf elseif} $A_2(N)\subset A_1(N)$

~~~~~~~~~~~~~~{\bf output} $A_2(N)$

~~~~~~~~~~{\bf else}

~~~~~~~~~~{\bf output} $\Big\{A_1(N), A_2(N)\Big\}$ and {\bf stop};

~~~~~~{\bf endif} \\
{\it step 2.} Re-index all the players in $M$ and calculate all the $l(j)$s as in (\ref{r1}) and (\ref{r2});\\
{\it step 3.} Compute all the $x(i)$s and $y(i)$s for all $p_i\in M$;\\
{\it step 4.} Compute all the $R_1(i)$s, $r_1(i)$s,  $R_2(i)$s and $r_2(i)$s for all $p_i\in M$;\\
{\it step 5.} For each $p_i\in M$, check if $C_{1i}$ and $C_{2i}$ are MWCs;\\
{\it step 6.} Eliminate the redundant coalitions in ${\mathcal M}{\mathcal W}{\mathcal C}1$ and ${\mathcal M}{\mathcal W}{\mathcal C}2$;\\
{\it step 7.} {\bf Output} ${\mathcal M}{\mathcal W}{\mathcal C}$.\\
\line(1,0){330}\\
\begin{theorem}${\mathcal M}{\mathcal W}{\mathcal C}$ can be computed in $O(n\log n)$ time. \label{t2}\end{theorem}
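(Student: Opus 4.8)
The plan is to bound the running time of \emph{Algorithm MWC} step by step and show that every step is $O(n)$ except for a single sorting phase costing $O(n\log n)$, which therefore dominates. I would first dispose of the preprocessing in \emph{step 1}: computing $A_1(N)$, $A_2(N)$ and the quantities $q_1(N)$, $q_2(N)$, $q_1(A_2(N))$, $q_2(A_1(N))$ each amounts to one scan over the $n$ players taking a maximum and then collecting the maximizers, so this is $O(n)$; testing whether $A_1(N)\cap A_2(N)=\emptyset$ is likewise $O(n)$, and in the non-empty case the algorithm halts at once after emitting $\{A_1(N),A_2(N)\}$. Hence \emph{step 1} is $O(n)$.

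The bottleneck is \emph{step 2}. Producing the primary ordering $p_1^1\geq p_2^1\geq\cdots\geq p_m^1$ of the idle players $M$ requires sorting $M$ by first coordinate, and computing the secondary indices $l(\cdot)$ requires sorting $M$ by second coordinate. Each sort costs $O(m\log m)=O(n\log n)$, and these are the only super-linear operations performed.

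It then remains to check that \emph{steps 3--6} all run in $O(n)$. For \emph{step 3}, the monotonicity relations~(\ref{a6}) and~(\ref{a7}) guarantee that the thresholds $x(\cdot)$ and $y(\cdot)$ are nondecreasing along the two orders, so all of them are produced by two monotone (two-pointer) sweeps in $O(m)$ time, each $D_{1i}$ (resp. $D_{2i}$) being recorded by the single integer $x(i)$ (resp. $y(i)$) rather than materialized as a set. For \emph{step 4}, each $r_1(i)$, $r_2(i)$ is a prefix maximum under the stated exclusion condition, so one linear pass over each order computes them all in $O(m)$. In \emph{step 5}, Lemma~4 and its analogue for $C_{2i}$ decide membership of $C_{1i}$ in ${\mathcal M}{\mathcal W}{\mathcal C}1_i$ in constant time from the precomputed quantities, giving $O(m)$ over all $i$; to avoid emitting a coalition twice when $C_{1i}=C_{1j}$ (which forces $p_i^2=p_j^2$), I would invoke Lemma~5 to retain exactly one representative within each block of equal second coordinates, again in $O(m)$. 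Finally ${\mathcal M}{\mathcal W}{\mathcal C}3$ has at most $m_1+m_2\leq n$ members given by the explicit formula already derived, so \emph{step 6} is $O(n)$. Summing, the total is $O(n)+O(n\log n)+O(n)=O(n\log n)$.

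The one subtlety I would flag, and the point that most needs care, is the \textbf{output convention}: since $|{\mathcal M}{\mathcal W}{\mathcal C}|\le n+1$ by Theorem~2 while a single MWC may contain $\Theta(n)$ players, writing out every coalition as an explicit member list would cost $\Theta(n^2)$. The $O(n\log n)$ bound therefore presupposes the compact encoding used throughout the construction—each MWC in ${\mathcal M}{\mathcal W}{\mathcal C}1$ (resp. ${\mathcal M}{\mathcal W}{\mathcal C}2$) is represented by its pivot $p_i$ together with the threshold $x(i)$ (resp. $y(i)$) defining $D_{1i}$ (resp. $D_{2i}$), on top of the shared sets $A_1(N)$, $A_2(N)$, so that each coalition is emitted in $O(1)$ additional space. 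Keeping the sets implicit in this way is exactly what lets steps 3--6 stay linear, and it is the assumption under which the complexity count becomes rigorous.
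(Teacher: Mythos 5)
Your proposal is correct and follows essentially the same route as the paper, which proves Theorem~3 implicitly through the preceding discussion: sorting in step~2 is the $O(n\log n)$ bottleneck, the monotonicity relations~(\ref{a6}) and~(\ref{a7}) make steps~3--4 linear, and Lemma~4 together with Lemma~5 yields the $O(m)$ construction of ${\mathcal M}{\mathcal W}{\mathcal C}1$ and ${\mathcal M}{\mathcal W}{\mathcal C}2$. Your explicit remark about the compact output encoding (representing each MWC by its pivot and threshold rather than an explicit member list) is a genuine subtlety the paper leaves unstated, and it is needed for the bound to be rigorous.
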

\begin{proof}Step 1 can be done in $O(n)$ time. According to the sorting theory which can be found in any algorithm design book, step 2 can be done in $O(m\log m)$. According to our previous discussions, step 3 and step 4 can be done in $O(m)$ time. According to Lemma \ref{l4}, step 5 can be done in $O(m)$ time. Lemma \ref{l5} tells us that step 6 can be done in $O(m)$ time. Since ${\mathcal M}{\mathcal W}{\mathcal C}$ is the union of ${\mathcal M}{\mathcal W}{\mathcal C}1$, ${\mathcal M}{\mathcal W}{\mathcal C}2$ and ${\mathcal M}{\mathcal W}{\mathcal C}3$, according to (\ref{mwc3}) and the above discussion, step 7 can be done in $O(n)$ time. In sum, the time complexity of  Algorithm MWC-2 is $O(n\log n)$ time (note $m<n$).   \qed\end{proof}

\section{Computing Holler-Packel and Deegan-Packel indices\label{s6}}
As the Holler-Packel and Deegan-Packel indices are directly determined by ${\mathcal
M}{\mathcal W}{\mathcal C}$, computing them is routine. The main concern is still a careful treatment of the data structure, because computing the indices one by one independently is quite inefficient.

\subsection{Holler-Packel indices}

As in the preceding section, we discuss the easier cases first. Due to Lemma \ref{l03}, the following theorem is obvious.
\begin{theorem}In the two dimensional C-WMMG, suppose $A_1(N)\cap A_2(N)\neq\emptyset$.

(a) When $A_1(N)=A_2(N)$:

\begin{equation}hp_i=\left\{\begin{array}{ll}1&if~p_i\in A_1(N)\\
0&if~p_i\in M\end{array}\right. ;\end{equation}

(b1) When $A_1(N)\subset A_2(N)$:

\begin{equation}hp_i=\left\{\begin{array}{ll}1&if~p_i\in A_1(N)\\
0&if~p_i\in N\setminus A_1(N)\end{array}\right. ;\end{equation}

(b2) When $A_2(N)\subset A_1(N)$:

\begin{equation}hp_i=\left\{\begin{array}{ll}1&if~p_i\in A_2(N)\\
0&if~p_i\in N\setminus A_2(N)\end{array}\right. ;\end{equation}

(c) Otherwise:

\begin{equation}hp_i=\left\{\begin{array}{ll}2&if~p_i\in A_1(N)\cap A_2(N)\\
1&if~p_i\in B(N)\setminus \big(A_1(N)\cap A_2(N)\big)\\
0&if~ p_i\in M\end{array}\right. .\end{equation}\label{thp1}
\end{theorem}

Due to Lemma \ref{l04}, the following theorem is also easy.
\begin{theorem}In the two dimensional C-WMMG, suppose $A_1(N)\cap A_2(N)=\emptyset$.

(a1) When $A_1(N)$ is a winning coalition and $m_1=1$:
 \begin{equation}hp_i=\left\{\begin{array}{ll}1&if~p_i\in A_1(N)\\
0&if~p_i\in A_2(N)\cup M\end{array}\right. ;\end{equation}

 (b1) When $A_1(N)$ is a winning coalition and $m_1>1$:
 \begin{equation}hp_i=\left\{\begin{array}{ll}2&if~p_i\in A_1(N)\\
m_1&if~p_i\in A_2(N)\\
0&if~p_i\in M\end{array}\right. ;\end{equation}

(a2) When $A_2(N)$ is a winning coalition and $m_2=1$:
 \begin{equation}hp_i=\left\{\begin{array}{ll}1&if~p_i\in A_2(N)\\
0&if~p_i\in A_1(N)\cup M\end{array}\right.;\end{equation}

(b2) When $A_2(N)$ is a winning coalition and $m_2>1$:
 \begin{equation}hp_i=\left\{\begin{array}{ll}m_2&if~p_i\in A_1(N)\\
2&if~p_i\in A_2(N)\\
0&if~p_i\in M\end{array}\right..\end{equation}\label{thp2}\end{theorem}

 Now, let's discuss the most complicated case. We need two additional symbols.

\begin{definition}We use $n_1$ and $n_2$ to denote the number of coalitions in  ${\mathcal
M}{\mathcal W}{\mathcal C}1$ and ${\mathcal
M}{\mathcal W}{\mathcal C}2$, respectively, i.e. \begin{equation}|{\mathcal
M}{\mathcal W}{\mathcal C}1|=n_1,\end{equation}
\begin{equation} |{\mathcal
M}{\mathcal W}{\mathcal C}2|=n_2.\end{equation}\end{definition}

We suppose that
\begin{equation}{\mathcal
M}{\mathcal W}{\mathcal C}1=\Big\{C_{1l^{-1}(u_1)}, C_{1l^{-1}(u_2)}, \cdots, C_{1l^{-1}(u_{n_1})}\Big\},\label{mwc1}\end{equation}
\begin{equation}{\mathcal M}{\mathcal W}{\mathcal C}2=\Big\{C_{2v_1}, C_{2v_2},\cdots, C_{2v_{n_2}}\Big\},\label{mwc2}\end{equation}
where $p_{l^{-1}(u_s)}\in M, p_{v_t}\in M, \forall 1\leq s\leq n_1, 1\leq t\leq n_2$, and \begin{equation} u_1<u_2<\cdots<u_{n_1},\label{u}\end{equation} \begin{equation}v_1<v_2<\cdots<v_{n_2}.\label{v}\end{equation}

\begin{definition}$\forall p_i\in M$, we use $\tau_{1i}$ to denote the index of the first coalition in (\ref{mwc1}) that the corresponding $D$ contains $p_i$, i.e.
\begin{equation}\tau_{1i}=\min\Big\{j:1\leq j\leq n_1, i\leq x\big(l^{-1}(u_j)\big)
\Big\}.\label{tau1}\end{equation}\end{definition}

Similarly, we define \begin{equation}\tau_{2i}=\min\Big\{j:1\leq j\leq n_2, i\leq y(v_j)\Big\}.\end{equation}

\begin{definition}$\forall p_i\in M$, we use $\varrho_{1i}$ to denote the indicator of whether $p_i\notin D_{1i}$, i.e.  \begin{equation}\varrho_{1i}=\left\{\begin{array}{ll}1&if~i>x(i)\\
0&otherwise\end{array}\right..\end{equation}\end{definition}

\begin{definition}$\forall p_i\in M$, we use $\sigma_{1i}$ to denote the indicator of whether $i\in \Big\{l^{-1}(u_1), l^{-1}(u_2), \cdots,
l^{-1}(u_{n_1})\Big\} $ and $p_i\notin D_{1i}$, i.e. \begin{equation}\sigma_{1i}=\left\{\begin{array}{ll}1&if~i\in\Big\{l^{-1}(u_1), l^{-1}(u_2), \cdots,
l^{-1}(u_{n_1})\Big\}~\&~\varrho_{1i}=1\\
0&otherwise\end{array}\right..\label{sigma1}\end{equation}\end{definition}

Similarly, for all $p_i\in M$, we  define \begin{equation}\varrho_{2i}=\left\{\begin{array}{ll}1&if~l(i)>y(i)\\
0&otherwise\end{array}\right.,\end{equation}
and \begin{equation}\sigma_{2i}=\left\{\begin{array}{ll}1&if~i\in\{v_1,v_2 \cdots,
v_{n_2}\}~\&~\varrho_{2i}=1\\
0&otherwise\end{array}\right..\end{equation}

Recall  that $m_1=|A_1(N)|,m_2=|A_2(N)|$ as defined in Section \ref{s3}. The formulas for calculating Holler-Packel indices in the most complicated case are as follows.

 \begin{theorem}In the two dimensional C-WMMG, suppose $A_1(N)\cap A_2(N)=\emptyset$ and neither $A_1(N)$ nor $A_2(N)$ is winning.

 (a1) For $p_i\in A_1(N)$: \begin{equation}hp_i=\left\{\begin{array}{ll}n_1+1&if~m_2=1\\
 n_1+m_2&if~m_1=1\\
  n_1+m_2+1&if~m_2>1,m_1>1\end{array}\right.;\end{equation}

(a2) For $p_i\in A_2(N)$: \begin{equation}hp_i=\left\{\begin{array}{ll}n_2+1&if~m_1=1\\
 n_2+m_1&if~m_2=1\\
  n_2+m_1+1&if~m_1>1,m_2>1\end{array}\right.;\end{equation}

(b) For $p_i\in M$:
\begin{equation}hp_i=\sum_{k=1,2}\Big(n_k-\tau_{ki}+1+\sigma_{ki}\Big).\end{equation}
\label{thp3}
\end{theorem}
\begin{proof}$\forall p_i\in A_1(N)$. Obviously, all the $n_1$ MWCs in ${\mathcal
M}{\mathcal W}{\mathcal C}1$ contain $p_i$. (i) If $m_2=1$, then by Lemma \ref{l05}, $\{p_i\}\cup A_2(N)$ is the only MWC in ${\mathcal
M}{\mathcal W}{\mathcal C}3$ that contains $p_i$.  The hypothesis $A_1(N)\cap A_2(N)=\emptyset$ tells us that none of the coalitions in ${\mathcal
M}{\mathcal W}{\mathcal C}2$ contains $p_i$. Therefore in this case $hp_i=n_1+1$. (ii) If $m_1=1$, then by Lemma \ref{l05}, the set of
MWCs in ${\mathcal
M}{\mathcal W}{\mathcal C}3$ that contains $p_i$ is $\Big\{A_1(N)\cup \{p_j\}:p_j\in A_2(N)\Big\}$, whose cardinality is $m_2$. Hence in this case $hp_i=n_1+m_2$. (iii) If $m_1>1$ and $m_2>1$, then by Lemma \ref{l05}, the set of
MWCs in ${\mathcal
M}{\mathcal W}{\mathcal C}3$ that contains $p_i$ is $\Big\{A_1(N)\cup \{p_j\}:p_j\in A_2(N)\Big\}\cup \Big\{\{p_i\}\cup \{A_2(N)\}\Big\}$, whose cardinality is $m_2+1$. Hence in this case $hp_i=n_1+m_2+1$.
 Based on the above discussion, (a1) is valid. (a2) can be shown to be true in the same way. We are left to show (b).

 From (\ref{b1})(\ref{b2})(\ref{u})(\ref{v}), we observe that \begin{equation}C_{1l^{-1}(u_{s})}\setminus C_{1l^{-1}(u_{t})}=\Big\{p_{l^{-1}(u_{s})}\Big\},\forall 1\leq s<t\leq n_1.\label{o1}\end{equation}

 By (\ref{b1})(\ref{u}) and the definition (\ref{tau1}), we know that $n_1-\tau_{1i}+1$ is the number of coalitions in ${\mathcal
M}{\mathcal W}{\mathcal C}1$ whose corresponding $D$ contains $p_i$. Recall the definition (\ref{c1i}), we know by definition (\ref{sigma1}) and observation (\ref{o1}) that there is at most one coalition in ${\mathcal
M}{\mathcal W}{\mathcal C}1$ that contain $p_i$ as the busy-2 player. This coalition has an extra contribution to the power indices iff $i\in \Big\{l^{-1}(u_1), l^{-1}(u_2), \cdots,
l^{-1}(u_{n_1})\Big\} $ and $p_i\notin D_{1i}$. By definition, $\sigma_{1i}$ is the indicator. Therefore, $n_1-\tau_{1i}+1+\sigma_{1i}$ is the total number of coalitions in ${\mathcal
M}{\mathcal W}{\mathcal C}1$ that contain $p_i$. Parallely, $n_2-\tau_{2i}+1+\sigma_{2i}$ is the total number of coalitions in ${\mathcal
M}{\mathcal W}{\mathcal C}2$ that contain $p_i$. Since there is no coalition in ${\mathcal
M}{\mathcal W}{\mathcal C}3$ that contains $p_i$, (b) is true.
\qed\end{proof}

\subsection{Deegan-Packel indices}
Similar to Theorem \ref{thp1}, Theorem \ref{thp2} and Theorem \ref{thp3}, we have the following formulas for calculating Deegan-Packel indices, whose proofs are quite similar to those of Holler-Packel indices and thus are  omitted. We only need to notice that \begin{equation}|C_{1i}|=m_1+x(i)+\varrho_{1i},\end{equation}
and\begin{equation}|C_{2i}|=m_2+y(i)+\varrho_{2i}.\end{equation}

\begin{theorem}In the two dimensional C-WMMG, suppose $A_1(N)\cap A_2(N)\neq\emptyset$.

(a) When $A_1(N)=A_2(N)$:

\begin{equation}dp_i=\left\{\begin{array}{ll}\frac{1}{m_1}&if~p_i\in A_1(N)\\
0&if~p_i\in M\end{array}\right. ;\end{equation}

(b1) When $A_1(N)\subset A_2(N)$:

\begin{equation}dp_i=\left\{\begin{array}{ll}\frac{1}{m_1}&if~p_i\in A_1(N)\\
0&if~p_i\in N\setminus A_1(N)\end{array}\right. ;\end{equation}

(b2) When $A_2(N)\subset A_1(N)$:

\begin{equation}dp_i=\left\{\begin{array}{ll}\frac{1}{m_2}&if~p_i\in A_2(N)\\
0&if~p_i\in N\setminus A_2(N)\end{array}\right. ;\end{equation}

(c) Otherwise:

\begin{equation}hp_i=\left\{\begin{array}{ll}\frac{1}{m_1}+\frac{1}{m_2}&if~p_i\in A_1(N)\cap A_2(N)\\
\frac{1}{m_1}&if~p_i\in A_1(N)\setminus A_2(N)\\
\frac{1}{m_2}&if~p_i\in A_2(N)\setminus A_1(N)\\
0&if~ p_i\in M\end{array}\right. .\end{equation}\label{tdp1}
\end{theorem}

\begin{theorem}In the two dimensional C-WMMG, suppose $A_1(N)\cap A_2(N)=\emptyset$.

 (a1) When $A_1(N)$ is a winning coalition and $m_1=1$:
 \begin{equation}dp_i=\left\{\begin{array}{ll}1&if~p_i\in A_1(N)\\
0&if~p_i\in A_2(N)\cup M\end{array}\right. ;\end{equation}

(b1) When $A_1(N)$ is a winning coalition and $m_1>1$:
 \begin{equation}dp_i=\left\{\begin{array}{ll}\frac{1}{m_1}+\frac{1}{m_2+1}&if~p_i\in A_1(N)\\
\frac{m_1}{m_2+1}&if~p_i\in A_2(N)\\
0&if~p_i\in M\end{array}\right. ;\end{equation}

(a2) When $A_2(N)$ is a winning coalition and $m_2=1$:
 \begin{equation}dp_i=\left\{\begin{array}{ll}1&if~p_i\in A_2(N)\\
0&if~p_i\in A_1(N)\cup M\end{array}\right.;\end{equation}

(b2) When $A_2(N)$ is a winning coalition and $m_2>1$:
 \begin{equation}dp_i=\left\{\begin{array}{ll}\frac{m_2}{m_1+1}&if~p_i\in A_1(N)\\
\frac{1}{m_2}+\frac{1}{m_1+1}&if~p_i\in A_2(N)\\
0&if~p_i\in M\end{array}\right..\end{equation}\label{tdp2}\end{theorem}

\begin{theorem}In the two dimensional C-WMMG, suppose $A_1(N)\cap A_2(N)=\emptyset$ and neither $A_1(N)$ nor $A_2(N)$ is winning.

(a1) For $p_i\in A_1(N)$: \begin{equation}dp_i=\left\{\begin{array}{ll}\frac{1}{2}
+\sum_{s=1}^{n_1}\frac{1}{m_1+x\big(l^{-1}(u_s)\big)+\varrho_{1l^{-1}(u_s)}}&if~m_2=1\\
 \frac{m_2}{m_1+1}
+\sum_{s=1}^{n_1}\frac{1}{m_1+x\big(l^{-1}(u_s)\big)+\varrho_{1l^{-1}(u_s)}}&if~m_1=1\\
  \frac{m_2}{m_1+1}+\frac{1}{m_2+1}
+\sum_{s=1}^{n_1}\frac{1}{m_1+x\big(l^{-1}(u_s)\big)+\varrho_{1l^{-1}(u_s)}}&if~m_2>1,m_1>1\end{array}\right.;\end{equation}

 (a2) For $p_i\in A_2(N)$:
\begin{equation}dp_i=\left\{\begin{array}{ll}\frac{1}{2}+\sum_{s=1}^{n_2}\frac{1}{m_2+y(v_s)+\varrho_{2v_s}}&if~m_1=1\\
 \frac{m_1}{m_2+1}+\sum_{s=1}^{n_2}\frac{1}{m_2+y(v_s)+\varrho_{2v_s}}&if~m_2=1\\
 \frac{m_1}{m_2+1}+\frac{1}{m_1+1}+\sum_{s=1}^{n_2}\frac{1}{m_2+y(v_s)+\varrho_{2v_s}}&if~m_1>1,m_2>1\end{array}\right.;\end{equation}

(b) For $p_i\in M$:
\begin{eqnarray}dp_i&=&\sum_{s=\tau_{1i}}^{n_1}\frac{1}{m_1+x\big(l^{-1}(u_s)\big)+\varrho_{1l^{-1}(u_s)}}+
\sum_{s=\tau_{2i}}^{n_2}\frac{1}{m_2+y(v_s)+\varrho_{2v_s}}\\
&&+\frac{\sigma_{1i}}{m_1+x(i)+1}+\frac{\sigma_{2i}}{m_2+y(i)+1}.\end{eqnarray}\label{th6}\end{theorem}

\subsection{Summary result}
\begin{theorem}After computing ${\mathcal M}{\mathcal W}{\mathcal C}$, computing  all the Holler-Packel
 indices can be done in $O(n)$ time, and computing all the Deegan-Packel indices can be done in $O(n^2)$ time.
\end{theorem}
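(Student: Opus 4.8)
The plan is to read the running time directly off the closed-form expressions supplied by the two preceding theorems, and to argue that every ingredient of those expressions is either shared across a whole group of players or can be tabulated for all players in $M$ by a single linear scan. First I would partition $N$ into the three classes $A_1(N)$, $A_2(N)$ and $M$, exactly as the formulas do, and note that $m_1+m_2+m=n$, so it suffices to bound the work on each class by $O(n)$.

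For $p_j\in A_1(N)$ the Holler--Packel value $n_1+m_2+1$ and the Deegan--Packel value are \emph{identical} for every such player; in particular the sum $\sum_{s=1}^{n_1}1/(m_1+x(l^{-1}(i_s))+\mu_{1l^{-1}(i_s)})$ is computed once in $O(n_1)$ time and then copied to all $m_1$ players, and symmetrically for $A_2(N)$. Since $n_1,n_2\le m$, this handles the two busy classes in $O(m)$ total. For $p_j\in M$ the formulas reduce each index to the four quantities $\tau_{1j},\tau_{2j},\sigma_{1j},\sigma_{2j}$, together with the arrays $x(\cdot),y(\cdot),\mu_{1\cdot},\mu_{2\cdot}$ that are already available from the $\mathcal{MWC}$ computation. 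The $\sigma$'s are constant-time lookups once membership in the index sets $\{l^{-1}(i_k)\}$ and $\{j_k\}$ is recorded in a Boolean array, so the whole difficulty concentrates in the $\tau$'s.

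The key step, and the one that needs the structure of the problem rather than brute force, is the simultaneous computation of all $\tau_{1j}$ and $\tau_{2j}$. A naive evaluation of $\tau_{1j}=\min\{k:j\le x(l^{-1}(i_k))\}$ for each $j$ separately would cost $O(m\,n_1)$, or $O(m\log n_1)$ with binary search. Instead I would exploit the monotonicity relations (\ref{a6}) and (\ref{a7}): because $i_1<i_2<\cdots<i_{n_1}$, the subsequence $x(l^{-1}(i_1))\le x(l^{-1}(i_2))\le\cdots\le x(l^{-1}(i_{n_1}))$ is nondecreasing, hence for increasing $j$ the threshold index $\tau_{1j}$ is itself nondecreasing. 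A single two-pointer sweep, advancing $k$ only forward as $j$ runs from $1$ to $m$, therefore produces every $\tau_{1j}$ in $O(m+n_1)=O(m)$ time; the computation of $\tau_{2j}$ via (\ref{a7}) is entirely symmetric.

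Finally, to keep the Deegan--Packel evaluation linear I would precompute, in $O(n_1)$ and $O(n_2)$ time, the prefix sums of the sequences $1/(m_1+x(l^{-1}(i_s))+\mu_{1l^{-1}(i_s)})$ and $1/(m_2+y(j_s)+\mu_{2j_s})$; then each truncated sum $\sum_{s=1}^{\tau_{1j}}(\cdots)$ appearing in the formula for $p_j\in M$ is a single prefix-sum lookup, and the remaining $\sigma$-correction terms are $O(1)$ each. Adding up, the work on $A_1(N)\cup A_2(N)$ is $O(m)$, the $\tau$-tabulation is $O(m)$, and the per-player assembly for $M$ is $O(m)$; since $m\le n$ and $n_1,n_2\le m$, the total post-processing cost once $\mathcal{MWC}$ is available is $O(n)$. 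The only genuine obstacle is the $\tau$-computation, and it is precisely the monotonicity of the $x(\cdot)$ and $y(\cdot)$ cutoffs, already recorded in (\ref{a6}) and (\ref{a7}), that reduces it from $O(m\log n)$ to linear.
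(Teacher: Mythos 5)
Your proposal is correct and follows essentially the same route as the paper: the paper's proof likewise rests entirely on the observation that the monotonicity relations (\ref{a6}) and (\ref{a7}) let all the $\tau_{1j}$'s and $\tau_{2j}$'s be computed in $O(m)$ time, with the remaining parameters dismissed as obviously linear. Your two-pointer sweep and prefix-sum bookkeeping simply make explicit the details the paper leaves to the reader.
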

\begin{proof}We only need to discuss the most complicated case. Due to (\ref{a6})(\ref{a7}) and (\ref{u})(\ref{v}), computing $\tau_{1i}$s and $\tau_{2i}$s can be done in $O(m)$
time. All the other necessary parameters can obviously be computed in $O(m)$ time. By Theorem \ref{thp3}, computing  all the Holler-Packel indices can be done in $O(n)$ time. By Theorem \ref{t1} and Theorem \ref{th6}, we know that each Deegan-Packel index can be computed in $O(n)$ time, and so computing all the Deegan-Packel indices can be done in $O(n^2)$ time.\qed\end{proof}

\section{Computing WCs\label{wc}}
The role of this section is three-folded. First, we want to show that, in the two dimensional C-WMMG,  although the number of winning coalitions may be huge, the structure is relatively simple, i.e. we can ``describe" ${\mathcal W}{\mathcal C}$ in polynomial time. This result obviously has its own interest. Second, we want to do some analysis about ${\mathcal W}{\mathcal C}(i)$s, which will be used in the next section.
Unlike Holler-Packel and Deegan-Packel indices, which are based on ${\mathcal M}{\mathcal
W}{\mathcal C}$, Penrose-Banzhaf and Shapley-Shubik indices rely on ${\mathcal W}{\mathcal C}(i)$s. To
compute the latter two power indices, we should first analyze the structure of ${\mathcal W}{\mathcal C}(i)$s. The analysis of  ${\mathcal W}{\mathcal C}(i)$s, however, is not complete, because we shall only consider the cases that are by-products of computing ${\mathcal W}{\mathcal C}$.
The relatively separable cases will be postponed to the next section. Third, the main ideas and notations for analyzing ${\mathcal W}{\mathcal C}$ will also be used in the next section.

\begin{definition} Similar to the way we
dealt with ${\mathcal M}{\mathcal W}{\mathcal C}$, we divide  ${\mathcal W}{\mathcal C}$ into
three sub-collections: \begin{equation}{\mathcal W}{\mathcal C}1=\Big\{C\in {\mathcal W}{\mathcal C}: A_1(N)\subseteq
C, A_2(N)\cap C=\emptyset\Big\},\end{equation}\begin{equation} {\mathcal W}{\mathcal C}2=\Big\{C\in {\mathcal W}{\mathcal C}:
A_2(N)\subseteq C, A_1(N)\cap C=\emptyset\Big\},\end{equation}\begin{equation} {\mathcal W}{\mathcal C}3=\Big\{C\in {\mathcal W}{\mathcal
C}: A_1(N)\cap C\neq \emptyset, A_2(N)\cap C\neq \emptyset\Big\}.\end{equation}\end{definition}

\begin{definition} $\forall p_i\in N$, we use ${\mathcal W}{\mathcal
C}3(i)$ to denote the set of winning coalitions in ${\mathcal W}{\mathcal
C}3$ that contain $p_i$ as a decisive player, i.e. \begin{equation}{\mathcal W}{\mathcal
C}3(i)=\Big\{C\in {\mathcal W}{\mathcal
C}3: C\ni p_i, C\setminus \{p_i\}\notin {\mathcal W}{\mathcal
C}\Big\}.\end{equation}\label{dwc3i}\end{definition}

\begin{lemma}In the two dimensional C-WMMG, $\forall p_i\in M$, we have ${\mathcal W}{\mathcal
C}3(i)=\emptyset$.\label{l012}\end{lemma}
\begin{proof}For any $C\in {\mathcal W}{\mathcal
C}3$ that contains $p_i\in M$, we have $q(C)=q(N)$ and the leaving of $p_i$ will not change $q(C)$ at all. At the same time, we also have $q(C^-\cup \{p_i\})<q(N)$. Therefore, $p_i$ is not decisive in any  $C\in {\mathcal W}{\mathcal
C}3$ and hence the lemma.\qed\end{proof}

\subsection{The simple cases}

\begin{lemma}In the two dimensional C-WMMG, if  $A_1(N)\cap A_2(N)\neq \emptyset$, then

(a) ${\mathcal W}{\mathcal
C}1={\mathcal W}{\mathcal
C}2=\emptyset$;

(b) $\forall p_i\in A_1(N)\cup A_2(N)$:{\small \begin{equation}{\mathcal W}{\mathcal
C}3(i)=\left\{\begin{array}{ll}\Big\{A_1(N)\cup C\cup D: C\subset A_2(N)\setminus A_1(N), D\subseteq M\Big\}&~\\
\bigcup \Big\{A_2(N)\cup C\cup D: C\subset A_1(N)\setminus A_2(N), D\subseteq M\Big\}&~\\
\bigcup \Big\{A_1(N)\cup A_2(N)\cup D: D\subseteq M\Big\}&if~p_i\in A_1(N)\cap A_2(N)\\
\Big\{A_1(N)\cup C\cup D: C\subset A_2(N)\setminus A_1(N), D\subseteq M\Big\}&if~p_i\in A_1(N)\setminus A_2(N)\\
\Big\{A_2(N)\cup C\cup D: C\subset A_1(N)\setminus A_2(N), D\subseteq M\Big\}&if~p_i\in A_2(N)\setminus A_1(N)\end{array}\right.,\label{wc3i}\end{equation}} and the three families in the first case are disjoint;

(c) \begin{eqnarray}{\mathcal W}{\mathcal
C}3&=&\Big\{A_1(N)\cup C\cup D: C\subset A_2(N)\setminus A_1(N), D\subseteq M\Big\}\\
&&\cup \Big\{A_2(N)\cup C\cup D: C\subset A_1(N)\setminus A_2(N), D\subseteq M\Big\}\\
&&\cup \Big\{A_1(N)\cup A_2(N)\cup D: D\subseteq M\Big\}.\end{eqnarray}\label{l013}\end{lemma}

\begin{proof}Since $A_1(N)\cap A_2(N)\neq \emptyset$, ${\mathcal W}{\mathcal
C}1={\mathcal W}{\mathcal
C}2=\emptyset$ is straightforward. (c) is an immediate result of (b) and Lemma \ref{l012}.
We are left to show that (b) is correct. First of all, coalitions that either include $A_1(N)$ or $A_2(N)$ are all in ${\mathcal W}{\mathcal C}3$.

In the case that $p_i\in A_1(N)\cap A_2(N)$, $p_i$ is contained in each of these coalitions and  is always decisive. The union of the three families in formula (\ref{wc3i}), which are evidently disjoint,   is exactly the whole set of these coalitions.

Let's consider now the case $p_i\in A_1(N)\setminus A_2(N)$. First of all, $A_1(N)\setminus A_2(N)\neq \emptyset$ implies that $A_1(N)\nsubseteq A_2(N)$. Suppose $C(i)$ is a coalition in ${\mathcal W}{\mathcal
C}3(i)$, then by Lemma \ref{l2}, we know that either $A_1(N)\subseteq C(i)$ or $A_2(N)\subseteq C(i)$. We claim that $A_2(N)\nsubseteq C(i)$. In fact, if $A_2(N)\subseteq C(i)$, we would have $q((N\setminus C(i))\cup \{p_i\})<q(N)=q(C(i)\setminus \{p_i\})$ and hence $p_i$ would be indecisive in $C(i)$, contradicting our hypothesis. Therefore we have  \begin{equation}A_1(N)\subseteq C(i).\label{g2}\end{equation} Further, we must have \begin{equation}\Big(N\setminus C(i)\Big)\cap A_2(N)\neq \emptyset,\label{g3}\end{equation}because otherwise $p_i$ would be indecisive in $C(i)$.
Combining (\ref{g2})(\ref{g3}),we can easily check that $\Big\{A_1(N)\cup C\cup D: C\subset A_2(N)\setminus A_1(N), D\subseteq M\Big\}$ is exactly ${\mathcal W}{\mathcal
C}3(i)$. So the second case is valid, and hence the third case by symmetry.\qed\end{proof}

Notice that when $A_1(N)\subseteq A_2(N)$, ${\mathcal W}{\mathcal
C}3(i)=\emptyset$ holds for all $p_i\in A_2(N)\setminus A_1(N)$. This is embodied is Lemma \ref{l013}. Similar result holds for $p_i\in A_1(N)\setminus A_2(N)$ when $A_2(N)\subseteq A_1(N)$.


\begin{lemma}In the two dimensional C-WMMG, suppose $A_1(N)\cap A_2(N)=\emptyset$.

(a1) When $A_1(N)$ is a winning coalition and $m_1=1$:
\begin{equation}{\mathcal W}{\mathcal
C}(i)=\left\{\begin{array}{ll}\Big\{A_1(N)\cup C: C\subseteq A_2(N)\cup M\Big\} &if~p_i\in A_1(N)\\
\emptyset &if~p_i\in A_2(N)\cup M\end{array}\right.,\label{l114}\end{equation}
\begin{equation}{\mathcal W}{\mathcal
C}=\Big\{A_1(N)\cup C: C\subseteq A_2(N)\cup M\Big\};\label{l115}\end{equation}

(b1) When $A_1(N)$ is a winning coalition and $m_1>1$:
\begin{equation}{\mathcal W}{\mathcal
C}(i)=\left\{\begin{array}{ll}\Big\{A_1(N)\cup C\cup D: C\subset A_2(N), D\subseteq M\Big\}&~\\
\bigcup \Big\{A_2(N)\cup \{p_i\}\cup D: D\subseteq M\Big\} &if~p_i\in A_1(N)\\
\Big\{A_2(N)\cup C\cup D: \emptyset \subset C\subset A_1(N), D\subseteq M\Big\} &if~p_i\in A_2(N)\\
\emptyset&if~p_i\in M\end{array}\right.,\label{g116}\end{equation}
\begin{eqnarray}{\mathcal W}{\mathcal C}&=&\Big\{A_1(N)\cup C\cup D: C\subset A_2(N), D\subseteq M\Big\}\\
&&\cup \Big\{A_2(N)\cup C\cup D: \emptyset \subset C\subset A_1(N), D\subseteq M\Big\};\end{eqnarray}

(a2) When $A_2(N)$ is a winning coalition and $m_2=1$:
\begin{equation}{\mathcal W}{\mathcal
C}(i)=\left\{\begin{array}{ll}\Big\{A_2(N)\cup C: C\subseteq A_1(N)\cup M\Big\} &if~p_i\in A_2(N)\\
\emptyset &if~p_i\in A_1(N)\cup M\end{array}\right.,\end{equation}

\begin{equation}{\mathcal W}{\mathcal
C}=\Big\{A_2(N)\cup C: C\subseteq A_1(N)\cup M\Big\};\end{equation}

(b2) When $A_2(N)$ is a winning coalition and $m_2>1$:

\begin{equation}{\mathcal W}{\mathcal
C}(i)=\left\{\begin{array}{ll}\Big\{A_1(N)\cup C\cup D: \emptyset \subset C\subset A_2(N), D\subseteq M\Big\} &if~p_i\in A_1(N)\\
\Big\{A_2(N)\cup C\cup D: C\subset A_1(N), D\subseteq M\Big\}&~\\
\bigcup \Big\{A_1(N)\cup \{p_i\}\cup D: D\subseteq M\Big\} &if~p_i\in A_2(N)\\
\emptyset & if~p_i\in M\end{array}\right.,\end{equation}

\begin{eqnarray}{\mathcal W}{\mathcal C}&=&\Big\{A_2(N)\cup C\cup D: C\subset A_1(N), D\subseteq M\Big\}\\
&&\cup \Big\{A_1(N)\cup C\cup D: \emptyset \subset C\subset A_2(N), D\subseteq M\Big\}.\end{eqnarray}\label{l014}\end{lemma}

\begin{proof}Since (a2) is symmetric to (a1), and (b2) is symmetric to (b1). It is sufficient to show that (a1) and (b1) are correct.

(a1) Since $A_1(N)$ is a winning coalition and has only one member, we know that any coalition that contains her is winning, and any coalition that fails to contain her is losing, and no other player in a winning coalition is decisive. So  (\ref{l114}) and (\ref{l115}) are correct;

(b1) $\forall X\in {\mathcal W}{\mathcal C}(i)$, due to Lemma \ref{l2}, either $A_1(N)\subseteq X$ or $A_2(N)\subseteq X$. By hypothesis, any coalition include $A_1(N)$ is winning.

(i) Suppose first $p_i\in A_1(N)$.  When $A_1(N)\subseteq X$, if we still have $A_2(N)\subseteq X$, then $p_i$ would not be decisive in $X$, because we would have \begin{equation}q(X\setminus \{p_i\})=q(N)>q\big((N\setminus X)\cup \{p_i\}\big),\end{equation} where the equality is true because there are at least two players in $A_1(N)$, and the inequality is true because $q_2\big((N\setminus X)\cup \{p_i\}\big)<q_2(N)$ (notice that the hypothesis $A_1(N)\cap A_2(N)$ is used). Therefore it must be true that $A_2(N)\nsubseteq X$. When $A_2(N)\subseteq X$, we must have $X\cap A_1(N)=\{p_i\}$, because if there is a second player $p_j\in X\cap A_1(N)$, we would have\begin{equation}q(X\setminus \{p_i\})=q(N)>q\big((N\setminus X)\cup \{p_i\}\big),\end{equation} where the equality is true because the existence of $p_j$, and the inequality is true because $q_1\big((N\setminus X)\cup \{p_i\}\big)<q_1(N)$ (notice again that the hypothesis $A_1(N)\cap A_2(N)$ is used). It can be checked trivially that all the coalitions in the formula are in ${\mathcal W}{\mathcal C}(i)$, and hence the formula in case is correct.

(ii) Suppose now $p_i\in A_2(N)$. First, it cannot be true that $A_1(N)\subseteq X$, because otherwise $p_i$ would not be decisive (notice that $A_1(N)$ is winning). Consequently, it must hold that $A_2(N)\subseteq X$. To be winning, $X$ must contain at least one member of $A_1(N)$. Due to the above analysis, it can be checked that the formula in this case is also true.

(iii) Suppose now $p_i\in M$. If $A_1(N)\subseteq X$, then $p_i$ would not be decisive, because $A_1(N)$ alone is winning. So we must have $A_2(N)\subseteq X$. To be winning, $X$ must also contain at least one member of $A_1(N)$. Consequently we have  \begin{equation}q(X\setminus \{p_i\})=q(N)>q\big((N\setminus X)\cup \{p_i\}\big),\end{equation} a contradiction with the assumption that $p_i$ is decisive in $X$.
It can only be that such $X$ does not exist at all, i.e. ${\mathcal W}{\mathcal C}(i)=\emptyset$.

Noticing that the set in the second line of  (\ref{g116}) is a subset of that in the third line, the formula for ${\mathcal W}{\mathcal C}$ can be easily checked to be true.\qed\end{proof}

\subsection{The complex case}
We are left to discuss the most complicated case, i.e. the case that $A_1(N)\cap A_2(N)=\emptyset$ and neither $A_1(N)$ nor $A_2(N)$ is winning.

\begin{definition} $\forall p_i\in M$, we denote ${\mathcal W}{\mathcal
C}1_{i}$ as the set of winning coalitions in ${\mathcal W}{\mathcal
C}1$ such that player $p_i$ is busy-2 (but not necessarily decisive), i.e. \begin{equation}{\mathcal W}{\mathcal
C}1_{i}=\Big\{C\in {\mathcal W}{\mathcal C}: A_1(C)=A_1(N), p_i\in A_2(C)\Big\}.\end{equation}\end{definition}

Obviously, when $A_1(N)\cap A_2(N)=\emptyset$ and $A_1(N)$ is not winning:\begin{equation}{\mathcal W}{\mathcal C}1=\bigcup_{p_i\in M} {\mathcal W}{\mathcal C}1_i.\label{wc1i}\end{equation}

\begin{definition}$\forall p_i\in M_1$, let ${\mathcal W}1_i$ be the subset of winning coalitions in ${\mathcal W}{\mathcal C}1_i$  that are minimal, i.e.\begin{equation}{\mathcal W}1_i=\Big\{C\in {\mathcal W}{\mathcal C}1_i:  C\setminus \{p_j\}\notin {\mathcal W}{\mathcal C}, \forall p_j\in C\setminus \{p_i\}\Big\}.\label{g117}\end{equation}\label{d14}\end{definition}

Recall the definitions of $C_{1i}$ and $D_{1i}$ in (\ref{c1i}) and (\ref{d}) of Section \ref{s3}.  Then using the same argument as in (\ref{c1i}) we know that ${\mathcal W}1_i$  has at most one member, in fact:
\begin{equation}{\mathcal W}1_i=\left\{\begin{array}{ll}\Big\{C_{1i}=A_1(N)\cup \{p_i\}\cup D_{1i}\Big\}&if~ p_i\in A_2(C_{1i})\\
\emptyset&otherwise\end{array}\right..\label{g118}\end{equation}

It is obvious that  \begin{equation}{\mathcal
M}{\mathcal W}{\mathcal C}1_i\subseteq {\mathcal W}1_i\subseteq {\mathcal W}{\mathcal C}1_i.\end{equation}

\begin{definition}For all $p_i\in M$, we use $E_{1i}$ to denote the set of ``optional" players in $M$ in the sense that (i) they are not indispensable, i.e they are not in $D_{1i}$, (ii) they cannot prevent $p_i$ being busy-2, i.e. their second dimensions are not larger than that of $p_i$. To be precise,
\begin{equation}E_{1i}=\Big\{p_j\in M\setminus \big(D_{1i}\cup \{p_i\}\big):p_j^2\leq p_i^2\Big\}.\label{e1i}\end{equation}\end{definition}

When ${\mathcal W}{\mathcal C}1_{i}$ is not empty, $C_{1i}$ and $E_{1i}$ determine the structure of ${\mathcal W}{\mathcal C}1_{i}$ completely. To be exact, we have the following result.
\begin{lemma}In the two dimensional C-WMMG, suppose $A_1(N)\cap A_2(N)=\emptyset$  and neither $A_1(N)$ nor $A_2(N)$ is a winning coalition.

(a) ${\mathcal W}{\mathcal C}1_{i}=\emptyset$ iff ${\mathcal W}1_{i}=\emptyset$;

(b) If  ${\mathcal W}1_{i}\neq \emptyset$, then ${\mathcal W}{\mathcal C}1_{i}=\Big\{C_{1i}\cup E:
E\subseteq E_{1i}\Big\}$.\label{l9}
\end{lemma}
\begin{proof}(a) According to definition, ${\mathcal W}1_{i}=\emptyset$ is equivalent to $p_i\notin A_2(C_{1i})$, which implies that ${\mathcal W}{\mathcal C}1_{i}=\emptyset$, because there is no way for $p_i$ to be busy-2. On the other hand, if we have ${\mathcal W}{\mathcal C}1_{i}=\emptyset$, then it must be true that  $p_i\notin A_2(C_{1i})$, because otherwise $C_{1i}$ would be a valid candidate.

(b) When ${\mathcal W}1_{i}\neq \emptyset$, then $\forall X\in {\mathcal W}1_{i}$, we have $C_{1i}\subseteq X$. By definition, $E_{1i}$ is the set of all optional players, and hence the formula. \qed\end{proof}

\begin{lemma}The family of $E_{1i}$s can be computed in $O(m)$ time.\label{l10}\end{lemma}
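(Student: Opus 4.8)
The plan is to first reduce each $E_{1i}$ to a pair of thresholds, and then compute the whole family by a single monotone sweep. First I would simplify $M\setminus C_{1i}$. Since $A_1(N)$ consists of busy players while $M$ consists of idle ones, $A_1(N)\cap M=\emptyset$, and both $p_i$ and $D_{1i}=\{p_1,\dots,p_{x(i)}\}$ lie in $M$; hence $M\cap C_{1i}=\{p_1,\dots,p_{x(i)}\}\cup\{p_i\}$ and therefore $M\setminus C_{1i}=\{p_j\in M: j>x(i),\ j\neq i\}$. Combining with the defining inequality $p_j^2\leq p_i^2$, this shows that $E_{1i}$ is exactly the two-dimensional range $\{p_j\in M: j>x(i),\ p_j^2\leq p_i^2\}$ with $p_i$ removed, so each $E_{1i}$ is completely described by the pair $(x(i),p_i^2)$. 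As the $x(i)$'s are already available in $O(m)$ time, storing these $O(1)$-size descriptions for all $i$ is immediate in $O(m)$, which already yields the compact representation asserted by the lemma (note the total cardinality of the family can be $\Theta(m^2)$, so an element-wise listing is not what is meant).

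Second, because the downstream use of $E_{1i}$ is through its cardinality (each $C_{1i}$ admits $2^{|E_{1i}|}$ winning completions by Lemma~7), I would also produce all the numbers $|E_{1i}|$ in $O(m)$. Writing $c(i)=|\{p_j\in M: j>x(i),\ p_j^2\leq p_i^2\}|$, the self-exclusion gives $|E_{1i}|=c(i)-\mu_{1i}$, since $p_i$ belongs to the counted range exactly when $i>x(i)$. To compute all $c(i)$ I would process the players in increasing second index, i.e. $i=l^{-1}(1),l^{-1}(2),\dots,l^{-1}(m)$, maintaining two forward-moving pointers. The condition $p_j^2\leq p_{l^{-1}(k)}^2$ is equivalent to $p_j$ lying in a suffix of the $l$-order whose left end only advances as $k$ grows, so the set of players satisfying the second-dimension bound shrinks monotonically; simultaneously, by the monotonicity relation (\ref{a6}), the first-index threshold $x(l^{-1}(k))$ is non-decreasing in $k$. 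Maintaining a Boolean array that marks the currently admissible players (indexed by first index) together with the running count of those exceeding the current threshold, each index is touched $O(1)$ times across all pointer advances, so all $c(i)$, and hence all $|E_{1i}|$, are obtained in $O(m)$ total.

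The main obstacle is precisely this amortization, together with the bookkeeping it requires. The efficiency hinges entirely on both the suffix pointer and the threshold pointer being \emph{monotone}: the former because the $l$-order sorts the second dimension, the latter because of (\ref{a6}). Handling ties in the second dimension needs care — when several idle players share the value $p_i^2$, the set $\{p_j: p_j^2\leq p_i^2\}$ is not a clean $l$-suffix but the suffix beginning at the start of the value block containing $l(i)$; since this block start is still non-decreasing in $k$, monotonicity, and hence the $O(m)$ bound, is preserved. Correctly ordering the two pointer updates within each step and applying the $\mu_{1i}$ correction for the removal of $p_i$ itself then yields the exact counts, and the symmetric construction gives the analogous data for ${\mathcal W}{\mathcal C}2$ within the same bound.
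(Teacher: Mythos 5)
Your proof is correct and follows essentially the same route as the paper's: both identify $E_{1i}$ with $\{p_j: j>x(i),\ p_j^2\leq p_i^2,\ j\neq i\}$ and exploit the monotonicity (\ref{a6}) together with the $l$-ordering of second dimensions to obtain all the sets (the paper via the nesting $\gamma_{1l^{-1}(m)}\subseteq\cdots\subseteq\gamma_{1l^{-1}(1)}$, you via an equivalent two-pointer sweep). Your version merely makes explicit two points the paper glosses over — that the output must be an implicit $O(1)$-size description per set since the total cardinality can be $\Theta(m^2)$, and the handling of ties in the second dimension.
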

\begin{proof}For all $p_i\in M$, by (\ref{e1i}) and (\ref{xd1i}), we have \begin{equation}E_{1i}=\Big\{p_j\in M: j\neq i, j\geq x(i)+1, p_j^2\leq p_i^2\Big\}.\label{l121}\end{equation} Due to (\ref{r2}) and (\ref{a6}), it is easy to prove
that \begin{equation}E_{1l^{-1}(m)}\setminus \{p_{l^{-1}(m)}\}\subseteq E_{1l^{-1}(m-1)}\setminus \{p_{l^{-1}(m-1)}\}\subseteq\cdots \subseteq
E_{1l^{-1}(1)}\setminus \{p_{l^{-1}(1)}\}.\label{l122}\end{equation} Since checking whether $p_i\in E_{1i}$ takes constant time, computing all the $E_{1i}$s can be done in $O(m)$ time.\qed\end{proof}

The
following result is also easy but useful.

\begin{lemma}In the two dimensional C-WMMG, suppose $A_1(N)\cap A_2(N)=\emptyset$ and neither $A_1(N)$ nor $A_2(N)$ is a winning coalition. $\forall p_{i_1}, p_{i_2}\in M$.  If $p_{i_1}^2\neq p_{i_2}^2$, then ${\mathcal W}{\mathcal C}1_{i_1}\cap {\mathcal W}{\mathcal
C}1_{i_2}=\emptyset$.\label{l8}\end{lemma}
\begin{proof}The lemma is true because coalitions in ${\mathcal W}{\mathcal C}1_{i_1}$ have completely different busy-2 players with those in ${\mathcal W}{\mathcal C}1_{i_2}$.\qed\end{proof}

 However, when $p_{i_1}^2=p_{i_2}^2$, we cannot get ${\mathcal W}{\mathcal C}1_{i_1}={\mathcal W}{\mathcal
C}1_{i_2}$. This brings some more complication to computing ${\mathcal W}{\mathcal C}1$.

\begin{definition}We use $\Big\{M_{21}, M_{22},
\cdots, M_{2n^0_2}\Big\}$ to denote the partition of $M$, such that players in the same subset have the same
second dimensions and \begin{equation}q_2(M_{21})>q_2(M_{22})>\cdots>q_2(M_{2n_2^0}).\label{g121}\end{equation}\end{definition}



\begin{definition} $\forall 1\leq s\leq n_2^0$, for $p_{i_1},p_{i_2}\in M_{2s}$, it is straightforward that  $D_{1i_1}=D_{1i_2}$. We denote this  identical set  as  $D_1(M_{2s})$, i.e. \begin{equation}D_1(M_{2s})=\Big\{p_j\in M: p_j^1+q_2(N)\geq q_1(N)+q_2(M_{2s})\Big\}.\end{equation}\end{definition}

 Notice again it may be true that $D_1(M_{2s})\cap M_{2s}\neq\emptyset$.

 \begin{definition}$E_1(M_{2s})$, parallel to $E_{1i}$, is defined as the set of ``optional" players, i.e. \begin{equation}E_1(M_{2s})=\Big\{p_j\in M\setminus \big(D_{1}(M_{2s})\cup M_{2s}\big):p_j^2<q_2(M_{2s})\Big\}.\end{equation}\end{definition}

 \begin{definition}$\forall 1\leq s\leq n_2^0$, we define $\alpha_{1}^s$ as an indicator of whether $M_{2s}\cap D_1(M_{2s})=\emptyset$, i.e.
\begin{equation}\alpha_{1}^s=\left\{\begin{array}{ll}1&if~M_{2s}\cap D_1(M_{2s})=\emptyset\\
0&otherwise\end{array}\right..\end{equation}\end{definition}

Parallel to Lemma \ref{l8}, Lemma \ref{l9}(b), and Lemma \ref{l10}, we have

\begin{lemma}In the two dimensional C-WMMG, suppose $A_1(N)\cap A_2(N)=\emptyset$  and neither $A_1(N)$ nor $A_2(N)$ is a winning coalition.

(a) $\forall 1\leq s_1\neq s_2\leq n_2^0$, \begin{equation}\left(\bigcup_{p_i\in M_{2s_1}} {\mathcal W}{\mathcal C}1_i\right)\cap \left(\bigcup_{p_i\in M_{2s_2}} {\mathcal W}{\mathcal C}1_i\right)=\emptyset;\end{equation}

(b) $\forall 1\leq s\leq n_2^0$, $\bigcup_{p_i\in M_{2s}} {\mathcal W}{\mathcal C}1_i\neq \emptyset$ iff $q_2(M_{2s})\geq q_2\big(A_1(N)\cup D_1(M_{2s})\big)$;

(c) $\forall 1\leq s\leq n_2^0$, when $q_2\big(M_{2s})\geq q_2(A_1(N)\cup D_1(M_{2s})\big)$, $\bigcup_{p_i\in M_{2s}} {\mathcal W}{\mathcal C}1_i$ is equal to

\begin{equation} \left\{\begin{array}{ll}\Big\{{\small A_1(N)\cup D_1(M_{2s})\cup X\cup Y: \emptyset \subset X\subseteq M_{2s}, Y\subseteq E_1(M_{2s})}\Big\}& if M_{2s}\cap D_1(M_{2s})=\emptyset\\
{\small \Big\{A_1(N)\cup D_1(M_{2s})\cup X\cup Y: X\subseteq M_{2s}\setminus D_1(M_{2s}), Y\subseteq E_1(M_{2s})}\Big\}&otherwise\end{array}\right.;\end{equation}

(d) $\forall 1\leq s\leq n_2^0$, when $q_2\big(M_{2s})\geq q_2(A_1(N)\cup D_1(M_{2s})\big)$, we have \begin{equation}\left|\bigcup_{p_i\in M_{2s}} {\mathcal W}{\mathcal C}1_i\right|=2^{\big|M_{2s}\setminus D_1(M_{2s})\big|+\big|E_1(M_{2s})\big|}-\alpha_{1}^s\cdot 2^{\big|E_1(M_{2s})\big|};\label{140}\end{equation}

(e) The family of $E_{1}(M_{2s})$s can be computed in $O(m^2)$ time.\label{l18}\end{lemma}

\begin{proof}(a) $s_1\neq s_2$ means that $M_{2s_1}\cap M_{2s_2}=\emptyset$. While coalitions in $\bigcup_{p_i\in M_{2s_1}} {\mathcal W}{\mathcal C}1_i$ have at least one player in $M_{2s_1}$ as its busy-2 player, and $\bigcup_{p_i\in M_{2s_2}} {\mathcal W}{\mathcal C}1_i$ have at least one player in $M_{2s_2}$ as its busy-2 player, we know that they must be disjoint.

(b) Necessity is obvious, because otherwise no player in $M_{2s}$ would be busy. Sufficiency is also easy because $A_1(N)\cup D_1(M_{2s})\cup M_{2s}$ is clearly a member.

(c) Due to the discussion in (b), (c) is easy. We only need to notice that if $M_{2s}\cap D_1(M_{2s})=\emptyset$, we should be careful to include at least one player in $M_{2s}$.

(d) Owing to (c), \begin{equation}\left|\bigcup_{p_i\in M_{2s}} {\mathcal W}{\mathcal C}1_i\right|=\left\{\begin{array}{ll}2^{|E_1(M_{2s})|}\cdot(2^{|M_{2s}|}-1)&if~~M_{2s}\cap D_1(M_{2s})=\emptyset\\
2^{|M_{2s}\setminus D_1(M_{2s})|+|E_1(M_{2s})|}&otherwise\end{array}\right..\label{142}\end{equation}

Due to the definition of $\alpha_{1}^s$, (\ref{142}) is equivalent to the formula (\ref{140}).

(e) Note that we do not have any nice structure that is similar to (\ref{l122}). Simply by enumeration, however, $O(m^2)$ time is enough, because computing each $E_{1}(M_{2s})$ can be done in $O(m)$ time.\qed\end{proof}

We need a further notation that is parallel to (\ref{g118}).

\begin{definition}$\forall 1\leq s\leq n_2^0$, we denote\begin{equation}{\mathcal W}1(M_{2s})=\left\{\begin{array}{ll}\Big\{A_1(N)\cup D_{1}(M_{2s})\Big\}&if~ q_2(M_{2s})\geq q_2\big(A_1(N)\cup D_1(M_{2s})\big)\\
\emptyset&otherwise\end{array}\right..\end{equation}\end{definition}

Denote \begin{equation}{\mathcal W}1=\bigcup_{1\leq s\leq n_2^0} {\mathcal W}1(M_{2s}),\end{equation} then by Lemma \ref{l18} we know that it determines the structure of ${\mathcal W}{\mathcal C}1$.

\begin{lemma} ${\mathcal W}1$ can be computed in $O(m^2)$ time.\label{l19}\end{lemma}
\begin{proof}$\forall 1\leq s\leq n_2^0, \forall p_{i_1}, p_{i_2}\in M_{2s}$, we know that $x(i_1)=x(i_2)$. Denote it as $x(M_{2s})$. Similar to (\ref{xd1i}), $D_1(M_{2s})$ is nicely determined by $x(M_{2s})$. It takes $O(m)$ time, for all $1\leq s\leq n_2^0$, to determine whether $q_2(M_{2s})\geq q_2\big(A_1(N)\cup D_1(M_{2s})$. Combining Lemma \ref{l18}(d), we know that computing the set of ${\mathcal W}1$ can be done in $O(m^2)$ time.\qed\end{proof}

Similarly, we can define ${\mathcal W}{\mathcal C}2_i, {\mathcal W}2_i,
E_{2i}, M_{1i},  n_1^0, M_{1s},D_2(M_{1s}), E_2(M_{1s}), \alpha_2^s, {\mathcal W}2$, and parallel results hold.

When $A_1(N)\cap A_2(N)=\emptyset$, and neither $A_1(N)$ nor $A_2(N)$ is a winning coalition, similar to the structure of  ${\mathcal M}{\mathcal W}{\mathcal
C}3$ in Lemma \ref{l05}, the structure of ${\mathcal W}{\mathcal
C}3$ is also simple.

\begin{lemma}In the two dimensional C-WMMG, suppose $A_1(N)\cap A_2(N)=\emptyset$, and neither $A_1(N)$ nor $A_2(N)$ is a winning coalition, then
\begin{equation}{\mathcal W}{\mathcal
C}3=
\left\{\begin{array}{ll}\Big\{A_1(N)\cup C\cup D: \emptyset\subset C\subseteq A_2(N), D\subseteq M\Big\}& if~m_1=1\\
\Big\{A_2(N)\cup C\cup D: \emptyset \subset C\subseteq A_1(N), D\subseteq M\Big\}&if~m_2=1\\
\Big\{A_1(N)\cup C\cup D: \emptyset\subset C\subset A_2(N), D\subseteq M\Big\}&~\\
\bigcup \Big\{A_2(N)\cup C\cup D: \emptyset \subset C\subset A_1(N), D\subseteq M\Big\}&~\\
\bigcup\Big\{A_1(N)\cup A_2(N)\cup D: D\subseteq M\Big\}&if~m_1>1,m_2>1\end{array}\right..\end{equation}\label{l20a}\end{lemma}
\begin{proof}$\forall X\in{\mathcal W}{\mathcal C}3$. If $m_1=1$, we know by definition of ${\mathcal W}{\mathcal
C}3$ that $A_1(N)\subseteq X$ and $X$ contains at least one member of $A_2(N)$, hence the formula. The case that $m_2=1$ is symmetric to the case above. If $m_1>1$ and $m_2>1$, the formula is also true and the three sets are disjoint.\qed\end{proof}

\subsection{Summary result}

Now, we are ready to present the main result of this section.
\begin{theorem}In the two dimensional C-WMMG, the structure of ${\mathcal W}{\mathcal
C}$  can be computed in $O(m^2)$ time.\end{theorem}
\begin{proof}It is valuable to notice first that we don't need to ``enumerate" all the WCs, but ``describe" them, which can be done by enumerating  the {\it critical} ones which determine completely
the whole structure. We discuss in three cases.


(i) When $A_1(N)\cap A_2(N)\neq \emptyset$, by Lemma 13(a), ${\mathcal W}{\mathcal
C}1={\mathcal W}{\mathcal C}2=\emptyset$. From Lemma 13(c), we know that ${\mathcal W}{\mathcal C}3$ can be described in constant time;

(ii) When $A_1(N)\cap A_2(N)=\emptyset$ and either $A_1(N)$ or $A_2(N)$ is a winning coalition, the result hold due to Lemma \ref{l014};

(iii) When $A_1(N)\cap A_2(N)=\emptyset$ and neither $A_1(N)$ nor $A_2(N)$ is a winning coalition, we only need to show that each of ${\mathcal W}{\mathcal
C}1$, ${\mathcal W}{\mathcal
C}2$, and ${\mathcal W}{\mathcal
C}3$ can be computed in $O(m^2)$ time, because they are disjoint. Lemma \ref{l18} and Lemma \ref{l19} tell us that  ${\mathcal W}{\mathcal
C}1$ can be described in $O(m^2)$ time. And hence  ${\mathcal W}{\mathcal C}2$ by symmetry. Due to Lemma \ref{l20a},  ${\mathcal W}{\mathcal
C}3$ can be described in constant time. \qed\end{proof}

\subsection{Structure of ${\mathcal W}{\mathcal
C}3(i)$}

When $A_1(N)\cap A_2(N)=\emptyset$, and neither $A_1(N)$ nor $A_2(N)$ is a winning coalition, discussion about the structure of ${\mathcal W}{\mathcal
C}3(i)$  will be done in  this subsection, and that of all the other cases will be postponed to the next section.

Remember first by Lemma \ref{l012} that when $p_i\in M$, we have ${\mathcal W}{\mathcal C}3(i)=\emptyset$.

\begin{lemma}In the two dimensional C-WMMG, suppose $A_1(N)\cap A_2(N)=\emptyset$, and neither $A_1(N)$ nor $A_2(N)$ is a winning coalition.

(a) When $m_1>1$ and $m_2>1$, \begin{equation}{\mathcal W}{\mathcal
C}3(i)=\left\{\begin{array}{ll}\Big\{A_1(N)\cup C\cup D: \emptyset\subset C\subset A_2(N), D\subseteq M\Big\}&~\\
\bigcup \Big\{A_2(N)\cup \{p_i\}\cup D: D\subseteq M, p_i~is~decisive\Big\}&if~p_i\in A_1(N)\\
\Big\{A_2(N)\cup C\cup D: \emptyset\subset C\subset A_1(N), D\subseteq M\Big\}&~\\
\bigcup \Big\{A_1(N)\cup \{p_i\}\cup D: D\subseteq M, p_i~is~decisive\Big\}&if~p_i\in A_2(N)\end{array}\right.;\end{equation}

(b1) When $m_1=1$ and $m_2>1$,  \begin{equation}{\mathcal W}{\mathcal
C}3(i)=\left\{\begin{array}{ll}\Big\{A_1(N)\cup C\cup D: \emptyset\subset C\subset A_2(N), D\subseteq M\Big\}&~\\
\bigcup \Big\{A_2(N)\cup \{p_i\}\cup D: D\subseteq M,p_i~is~decisive\Big\}&if~p_i\in A_1(N)\\
\Big\{A_1(N)\cup \{p_i\}\cup D: D\subseteq M, p_i~is~decisive\Big\}&if~p_i\in A_2(N)\end{array}\right.;\end{equation}

(b2) When $m_1>1$ and $m_2=1$,  \begin{equation}{\mathcal W}{\mathcal
C}3(i)=\left\{\begin{array}{ll}\Big\{A_2(N)\cup \{p_i\}\cup D: D\subseteq M, p_i~is~decisive\Big\}&if~p_i\in A_1(N)\\
\Big\{A_2(N)\cup C\cup D: \emptyset\subset C\subset A_1(N), D\subseteq M\Big\}&~\\
\bigcup \Big\{A_1(N)\cup \{p_i\}\cup D: D\subseteq M,p_i~is~decisive\Big\}&if~p_i\in A_2(N)
\end{array}\right.;\end{equation}

(c) When $m_1=1$ and $m_2=1$, $\forall p_i\in A_1(N)\cup A_2(N)$: \begin{equation}{\mathcal W}{\mathcal
C}3(i)=\big
\{A_1(N)\cup A_2(N)\cup D: D\subseteq M, p_i~is~decisive\Big\}. \end{equation}\label{l20}\end{lemma}
\begin{proof}(a) We only need to consider the case $p_i\in A_1(N)$, because the other case is symmetric. $\forall X\in{\mathcal W}{\mathcal
C}3(i)$, due to Lemma \ref{l2}, either $A_1(N)\subseteq X$ or $A_2(N)\subseteq X$. (i) If $A_1(N)\subseteq X$, then $X$ must contain at least one member of $A_2(N)$ (the definition of ${\mathcal W}{\mathcal C}3(i)$), and it cannot be true that $A_2(N)\subseteq X$ also holds, i.e. $X\cap A_2(N)$ must be a nonempty and proper subset of $A_2(N)$(since $m_2>1$, this can be done), because otherwise $p_i$ would be indecisive (note the hypothesis that $m_1>1$). At the same time, all the players in $M$ are optional. It can be checked that all coalitions satisfying the above requirements are members of ${\mathcal W}{\mathcal C}3(i)$. (ii) If $A_2(N)\subseteq X$, then it must be true that $p_i\in X$, and $X$ can only contain this single member from $A_1(N)$, because otherwise $p_i$ would be indecisive. However, even so, we cannot guarantee that $p_i$ is decisive, and players in $M$ are not completely optional either. So we added a requirement that ``$p_i$ is decisive", and hence the formula.

(b1) $\forall X\in{\mathcal W}{\mathcal
C}3(i)$. The case that $p_i\in A_1(N)$ is identical to that in (a). When $p_i\in A_2(N)$, then it cannot be true that $A_2(N)\subseteq X$, because combining with $A_1(N)\cap X\neq \emptyset$ and $m_2>1$ would give that $p_i$ is indecisive. Therefore we have $A_1(N)\subseteq X$, and $p_i$ must be the only member from $A_2(N)$ contained by $X$. Still, we cannot guarantee that $p_i$ is always decisive. Hence the formula.

(b2) Symmetric to (b1).

(c) $\forall X\in{\mathcal W}{\mathcal
C}3(i)$. Since there is only one member in each of $A_1(N)$ and $A_2(N)$, the two players must both be contained by $X$. Still, $p_i$ cannot be guaranteed to be decisive. Hence the formula.\qed\end{proof}

The above lemma gives us a very rough whole picture of the structure of ${\mathcal W}{\mathcal C}3(i)$. It helps, but solves only a very little part of  the problem.

\begin{definition}$\forall p_i\in A_1(N)\cup A_2(N)$, we denote \begin{equation}{\mathcal W}{\mathcal
C}3^*(i)=\left\{\begin{array}{ll}\Big\{\{p_i\}\cup A_2(N)\cup D: D\subseteq M, p_i~is~decisive\Big\}&if~p_i\in A_1(N)\\
\Big\{\{p_i\}\cup A_1(N)\cup D: D\subseteq M, p_i~is~decisive\Big\}&if~p_i\in A_2(N)\end{array} \right..\end{equation}\end{definition}

To completely understand the structure of  ${\mathcal W}{\mathcal
C}3(i)$, we know from Lemma \ref{l20} that we need to find an efficient way to describe ${\mathcal W}{\mathcal
C}3^*(i)$.

\begin{definition}Suppose $p_i\in A_1(N)$. We denote $\Delta_1$ as the set of all possible busy-2 players when $p_i$ moves to the complementary coalition from a coalition in ${\mathcal W}{\mathcal
C}3^*(i)$, i.e. \begin{equation}\Delta_1=A_2(A_1(N))\cup \Big\{p_j\in M: p_j^2\geq q_2\big(A_1(N)\big)\Big\}.\end{equation}\end{definition}

\begin{definition}For all $p_i\in A_1(N)$ and $p_j\in \Delta_1$, we use ${\mathcal W}{\mathcal
C}3^{*}(1,i,j)$ to denote the set of coalitions in ${\mathcal W}{\mathcal
C}3^*(i)$ where $p_j$ is busy-2 if $p_i$ moves to the complementary coalition, i.e.
\begin{equation}{\mathcal W}{\mathcal
C}3^{*}(1,i,j)=\begin{array}{ll}\Big\{\{p_i\}\cup A_2(N)\cup D: D\subseteq M\setminus \{p_j\},&~\\
 ~~q_1(N)+p_j^2=q\big(A_1(N)\cup \{p_i\}\cup (M\setminus D)\big)\geq q\big(A_2(N)\cup D\big)\Big\}&~\end{array}.\end{equation}\end{definition}

\begin{definition}For all  $p_j\in \Delta_1$, we also let \begin{equation}M^+_{2j}=\Big\{p_k\in M: p_k^2>p_j^2\Big\},\end{equation} \begin{equation}{\Bar D}_{1j}=\Big\{p_k\in M: p_k^1+q_2(N)\leq q_1(N)+q^2_j\Big\}.\end{equation}\end{definition}

\begin{lemma}$\forall p_i\in A_1(N)$ and $p_j\in \Delta_1$,\begin{equation}{\mathcal W}{\mathcal
C}3^{*}(1,i,j)=\Big\{\{p_i\}\cup A_2(N)\cup D: M^+_{2j}\subseteq D\subseteq {\Bar D}_{1j}\Big\}.\end{equation}\label{l021}
 \end{lemma}
\begin{proof}$M^+_{2j}$ is the set of players in $M$ whose second dimensions are bigger than that of $p_j$. In order to guarantee that $p_j$ is busy-2 in the new coalition, none member of $M^+_{2j}$ should be in the complementary coalition, and therefore it must hold that $D\supseteq M^+_{2j}$.  ${\Bar D}_{1j}$ is the set of players in $M$ whose first dimensions are not so big that without $p_j$ the  coalition is still winning, and therefore  $D\subseteq {\Bar D}_{1j}$. Hence the lemma.\qed\end{proof}

We remark that Lemma \ref{l021} also implies that ${\mathcal W}{\mathcal
C}3^{*}(1,i,j)=\emptyset$ if $M^+_{2j}\nsubseteq {\Bar D}_{1j}$.

\begin{lemma}For all $p_i\in A_1(N)$ and $p_{j_1}, p_{j_2}\in \Delta_1$.

(a) If $p_{j_1}^2=p_{j_2}^2$, then ${\mathcal W}{\mathcal
C}3^{*}(1,i,j_1)={\mathcal W}{\mathcal
C}3^{*}(1,i,j_2)$;

(b) If $p_{j_1}^2\neq p_{j_2}^2$, then ${\mathcal W}{\mathcal
C}3^{*}(1,i,j_1)\cap {\mathcal W}{\mathcal
C}3^{*}(1,i,j_2)=\emptyset.$ \label{l022}\end{lemma}
\begin{proof}Immediate from Lemma \ref{l021}.\qed\end{proof}

\begin{definition}We use $\Delta_1^*$ to denote an arbitrary maximum subset of $\Delta_1$ such that no pair of players have the same second dimension.\end{definition}

\begin{lemma}In the two dimensional C-WMMG, suppose $A_1(N)\cap A_2(N)=\emptyset$, and neither $A_1(N)$ nor $A_2(N)$ is a winning coalition. Then for all $p_i\in A_1(N)$,  \begin{equation}{\mathcal W}{\mathcal
C}3^{*}(i)=\bigcup\limits_{p_j\in\Delta_1^*}{\mathcal W}{\mathcal
C}3^{*}(1,i,j)\end{equation}.\label{l023}\end{lemma}
\begin{proof}Immediate from Lemma \ref{l022}.\qed\end{proof}

We can similarly define $\Delta_2$ and  $\Delta_2^*$. For all $p_i\in A_2(N), p_j\in \Delta_2$, we define ${\mathcal W}{\mathcal
C}3^{*}(2,i,j)$, $M^+_{1j}, {\Bar D}_{2j}$, and results parallel to Lemma \ref{l021}, Lemma \ref{l022} and Lemma \ref{l023} all hold.


\begin{definition}For all $p_i\in A_1(N)$ and $p_j\in \Delta_1$, define\begin{equation}d_{1j}=\left\{\begin{array}{ll}1&if~M^+_{2j}\subseteq {\Bar D}_{1j}\\
0&otherwise\end{array}\right..\end{equation}\end{definition}

For all $p_i\in A_2(N)$ and $p_j\in \Delta_2$, $d_{2j}$ is similarly defined.

\begin{lemma}In the two dimensional C-WMMG, suppose $A_1(N)\cap A_2(N)=\emptyset$ and neither $A_1(N)$ nor $A_2(N)$ is winning. Then
\begin{equation}|{\mathcal W}{\mathcal C}3(i)|=\left\{\begin{array}{ll}2^m(2^{m_2}-2)+\sum_{p_j\in \Delta_1^*}d_{1j}\cdot 2^{|{\Bar D}_{1j}\setminus M^+_{2j}|}&if~p_i\in A_1(N)\\
2^m(2^{m_1}-2)+\sum_{p_j\in \Delta_2^*}d_{2j}\cdot 2^{|{\Bar D}_{2j}\setminus M^+_{1j}|}&if~p_i\in A_2(N)\end{array}\right..\end{equation}
\label{l14}\end{lemma}
\begin{proof}Suppose now $m_1>1$ and $m_2>1$. $\forall p_i\in A_1(N)$. According to Lemma \ref{l20}, ${\mathcal W}{\mathcal C}3(i)$ consists of two parts, and the first part has a cardinality of $2^m(2^{m_2}-2)$. Due to Lemma \ref{l021}, Lemma \ref{l022}, and Lemma \ref{l023}, $\sum_{p_j\in \Delta_1^*}d_{1j}\cdot 2^{|{\Bar D}_{1j}\setminus M^+_{2j}|}$ is the cardinality of the second part. Hence the formula. It is symmetric for  $p_i\in A_2(N)$. For all the other cases, it can be checked that the formula still holds.\qed\end{proof}

\section {Computing Penrose-Banzhaf and Shapley-Shubik indices}
Computing Penrose-Banzhaf and Shapley-Shubik indices is even more tedious than computing Holler-Packel and Deegan-Packel indices. This is quite expected because they rely on ${\mathcal W}{\mathcal C}$ (to be precise, ${\mathcal W}{\mathcal C}(i)$s) instead of ${\mathcal M}{\mathcal W}{\mathcal C}$, and we have already seen that the structure of  ${\mathcal W}{\mathcal C}$ is much more complicated than that of ${\mathcal M}{\mathcal W}{\mathcal C}$.

One more trouble is that the analysis of ${\mathcal W}{\mathcal
C}(i)$ is incomplete in the previous sections. To be precise, when $A_1(N)\cap A_2(N)\neq \emptyset$, the discussion is done by  Lemma \ref{l012} and Lemma \ref{l013}; When $A_1(N)\cap A_2(N)=\emptyset$ and either $A_1(N)$ or $A_2(N)$ is a winning coalition, the analysis is done by Lemma \ref{l014}. However, analysis in the most complicated case, i.e. $A_1(N)\cap A_2(N)=\emptyset$ and neither $A_1(N)$ nor $A_2(N)$ is a winning coalition, is not enough, though we have already had a clear understanding of the structure of ${\mathcal W}{\mathcal C}3(i)$ from the final part of the previous section.

\subsection{The simple cases}
Based on Lemma \ref{l012}, Lemma \ref{l013} and Lemma \ref{l014}, computing Penrose-Banzhaf and Shapley-Shubik indices in the case $A_1(N)\cap A_2(N)\neq \emptyset$ and the cases where $A_1(N)\cap A_2(N)=\emptyset$ and either $A_1(N)$ is winning or $A_2(N)$ is winning is straightforward. We omit the exact formulas to save space.

\subsection{The complex case}



We are left to discuss  the most complicated case, i.e. the case $A_1(N)\cap A_2(N)=\emptyset$
and neither $A_1(N)$ nor $A_2(N)$ is winning.

Above all, we need a concept similar to Definition \ref{dwc3i}.

\begin{definition} $\forall p_i\in N$, we use ${\mathcal W}{\mathcal
C}1(i)$ to denote the set of winning coalitions in ${\mathcal W}{\mathcal
C}1$ that contain $p_i$ as a decisive player, i.e. \begin{equation}{\mathcal W}{\mathcal
C}1(i)=\Big\{C\in {\mathcal W}{\mathcal
C}1: C\ni p_i, C\setminus \{p_i\}\notin {\mathcal W}{\mathcal
C}\Big\}.\end{equation}\end{definition}

\begin{definition}$\forall p_i, p_j\in N$, we use ${\mathcal W}{\mathcal C}1(i,j)$  to denote
the set of winning coalitions in ${\mathcal W}{\mathcal C}1_j$ such that $p_i$ is decisive, i.e.\begin{equation}{\mathcal W}{\mathcal C}1(i,j)={\mathcal W}{\mathcal C}1(i)\cap {\mathcal W}{\mathcal C}1_j.\end{equation} \end{definition}


\begin{lemma}In the two dimensional C-WMMG, suppose $A_1(N)\cap A_2(N)=\emptyset$ and neither $A_1(N)$ nor $A_2(N)$ is winning. $\forall p_{j_1}, p_{j_2}\in M$. If $p_{j_1}^2\neq p_{j_1}^2$, then ${\mathcal W}{\mathcal C}1(i,j_1)\cap {\mathcal W}{\mathcal C}1(i,j_2)=\emptyset$.
\end{lemma}
\begin{proof}Directly from Lemma \ref{l8}.\qed\end{proof}

However, even if $p_{j_1}^2=p_{j_1}^2$, we may not get that ${\mathcal W}{\mathcal C}1(i,j_1)={\mathcal W}{\mathcal C}1(i,j_2)$, because ${\mathcal W}{\mathcal C}1_{j_1}={\mathcal W}{\mathcal C}1_{j_2}$ is not guaranteed.

\begin{definition}$\forall p_i\in N$ and $1\leq s\leq n_2^0$, we define ${\mathcal W}{\mathcal C}1(i,M_{2s})$ as the set of coalitions in ${\mathcal W}{\mathcal C}1$ such that $p_i$ is decisive and busy-2 players are taken from $M_{2s}$, i.e. \begin{equation}{\mathcal W}{\mathcal C}1(i,M_{2s})=\bigcup_{p_j\in M_{2s}} {\mathcal W}{\mathcal C}1(i,j).\end{equation}\end{definition}

\begin{lemma}In the two dimensional C-WMMG, suppose $A_1(N)\cap A_2(N)=\emptyset$ and neither $A_1(N)$ nor $A_2(N)$ is winning. $\forall 1\leq s\leq n_2^0$, and suppose also that $q_2\big(M_{2s})\geq q_2\big(A_1(N)\cup D_1(M_{2s})\big)$.

(a) If $p_i\in A_1(N)\cup D_{1}(M_{2s})$,
then \begin{equation}\left|{\mathcal W}{\mathcal C}1(i,M_{2s})\right|=2^{\big|M_{2s}\setminus D_1(M_{2s})\big|+\big|E_1(M_{2s})\big|}-\alpha_{1}^s\cdot 2^{\big|E_1(M_{2s})\big|};\end{equation}

(b) If $p_i\in E_{1}(M_{2s})$, then $|{\mathcal W}{\mathcal C}1(i,M_{2s})|=0$.\label{l12}\end{lemma}
\begin{proof}Due to Lemma \ref{l18}(b), $q_2\big(M_{2s})\geq q_2\big(A_1(N)\cup D_1(M_{2s})\big)$ implies that $\bigcup_{p_j\in M_{2s}} {\mathcal W}{\mathcal C}1_j\neq \emptyset$. To verify this lemma, it is sufficient to notice from Lemma \ref{l18}(c) that all the players in $A_1(N)\cup D_{1}(M_{2s})$ are always decisive and players in $E_{1}(M_{2s})$ are always bench-warmers.\qed\end{proof}

\begin{definition}Let $s_2(i)$ be the index $s$ in $\{1,2,\cdots,n_2^0\}$ such that $p_i\in M_{2s}$.\end{definition}

Lemma \ref{l12}, however, didn't give us a complete description of all the ${\mathcal W}{\mathcal C}1(i,M_{2s})$s. To be precise,  ${\mathcal W}{\mathcal C}1(i,M_{2s_2(i)})$ may not be covered. For each $p_i\in M$, if $p_i\in D_{1}(M_{2s_2(i)})$, then Lemma \ref{l12} is applicable. In the case that $p_i\notin D_{1}(M_{2s_2(i)})$, calculating $|{\mathcal W}{\mathcal C}1(i,M_{2s_2(i)})|$ needs more
delicate analysis.

First of all, we show that the problem of computing $|{\mathcal W}{\mathcal C}1(i,M_{2s_2(i)})|$ can be reduced to computing $|{\mathcal W}{\mathcal C}1(i,i)|$. In fact, the two sets are identical.

\begin{lemma}$\forall p_i\in M$, we have\begin{equation}{\mathcal W}{\mathcal C}1(i,M_{2s_2(i)})={\mathcal W}{\mathcal C}1(i,i).\end{equation}\label{l28}\end{lemma}
\begin{proof}By definition, it is trivial that ${\mathcal W}{\mathcal C}1(i,i)\subseteq {\mathcal W}{\mathcal C}1(i,M_{2s_2(i)})$. On the other hand,  $\forall X\in {\mathcal W}{\mathcal C}1(i,M_{2s_2(i)})$,$p_i$ must be a busy-2 player of $X$, because busy-2 players of $X$ are drawn from $M_{2s_2(i)}$, and $p_i$ has the same second dimension as any one in $M_{2s_2(i)}$. Hence it also holds that ${\mathcal W}{\mathcal C}1(i,M_{2s_2(i)})\subseteq {\mathcal W}{\mathcal C}1(i,i)$ and the lemma is valid.\qed\end{proof}

Second of all, let us show that checking whether ${\mathcal W}{\mathcal C}1(i,i)=\emptyset$ or not is easy.

\begin{lemma}${\mathcal W}{\mathcal C}1(i,i)\neq \emptyset$ if and only if $C_{1i}\in {\mathcal M}{\mathcal W}{\mathcal C}1_i$, i.e.  $C_{1i}$ is an MWC with $p_i$ being a busy-2 player.\label{l27}\end{lemma}
\begin{proof}Sufficiency is self-evident. Suppose now ${\mathcal W}{\mathcal C}1(i,i)\neq \emptyset$, then there exists $C_{1i}\cup E\in {\mathcal W}{\mathcal C}1(i,i)$, where $E\subseteq E_{1i}$. By definition of $E_{1i}$, we know that $q(C_{1i}\cup E)=q(C_{1i})$ and $q(C_{1i}^-)=q((C_{1i}\cup E)^-)$. Therefore, $C_{1i}$ is still winning, and $p_i$ is still busy-2 and decisive in $C_{1i}$. Hence $C_{1i}\in {\mathcal M}{\mathcal W}{\mathcal C}1_i$. \qed \end{proof}

Notice that when $C_{1i}\in {\mathcal M}{\mathcal W}{\mathcal C}1_i$, we still cannot guarantee that $p_i$ is decisive in each $C_{1i}\cup E$.
In fact, the main complication in the following analysis is that $\forall X\in{\mathcal W}{\mathcal C}1_i$, $p_i$ may not be decisive in $X$ (the only information we know is that she is busy). A naive idea is to check all the coalitions in ${\mathcal W}{\mathcal C}1_i$, one by one, to see whether $p_i$ is decisive. However, this set may have an exponential cardinality, so the naive idea does not work.

 We find that analyzing the complementary coalitions (added by $p_i$) may be more convenient than directly analyzing ${\mathcal W}{\mathcal C}1(i,i)$.

\begin{definition}$\forall p_i\in M$ such that $p_i\notin D_{1i}$, let ${\mathcal H}1_i$ be
the set of coalitions constructed by moving $p_i$ from coalitions in ${\mathcal W}{\mathcal C}1(i,i)$ to the corresponding complementary coalitions, i.e.
\begin{equation}{\mathcal H}1_i=\Big\{X^-\cup \{p_i\}: X\in {\mathcal W}{\mathcal C}1(i,i)\Big\}.\end{equation}\end{definition}

\begin{lemma}When ${\mathcal W}{\mathcal C}1(i,i)\neq \emptyset$, we can rewrite ${\mathcal H}1_i$ as \begin{equation}{\mathcal H}1_i=\Big\{X=(C_{1i}\cup E_{1i})^-\cup\{p_i\}\cup E:E\subseteq
E_{1i}, q(X)\geq q(X^-)\Big\}.\end{equation}\label{l30}\end{lemma}
\begin{proof}$\forall X_0\in {\mathcal W}{\mathcal C}1(i,i)$. Since ${\mathcal W}{\mathcal C}1(i,i)\subseteq {\mathcal W}{\mathcal C}1_i=\{C_{1i}\cup E: E\subseteq E_{1i}\}$, we know there is an $E_0\subseteq E_{1i}$ such that $X_0=C_{1i}\cup E_0$. Therefore,  $X_0^-\cup \{p_i\}=(C_{1i}\cup E_{1i})^-\cup \{p_i\}\cup (E_{1i}\setminus E_0)$. Let $X=X_0^-\cup \{p_i\}$. Because $E_{1i}\setminus E_0\subseteq E_{1i}$ and $q(X)\geq q(X^-)$ ($p_i$ is decisive in $X_0$), we know that $X$ belongs to the right hand side set of this lemma. The other direction of inclusion is also easy.\qed\end{proof}

Then \begin{equation}|{\mathcal W}{\mathcal C}1(i,i)|=|{\mathcal H}1_i|.\label{91}\end{equation}

Still, the definition of ${\mathcal H}1_i$, as well as the equivalent formula, is but an expression. To calculate its cardinality, we need to do some further decomposition such that all the elements are efficiently computable.

\begin{definition}We denote ${\mathcal B}1_i$ as the subset of ${\mathcal H}1_i$ where no busy-1 player is in the corresponding $E$, i.e.
 \begin{equation}{\mathcal B}1_i=\Big\{X=(C_{1i}\cup E_{1i})^-\cup \{p_i\}\cup
E: E\subseteq
E_{1i},X\in {\mathcal H}1_i, A_1(X)\cap E=\emptyset\Big\},\end{equation} and for all $p_j\in E_{1i}$, let ${\mathcal B}1_{ij}$ be the subset of ${\mathcal H}1_i$ where $p_j$ is a busy-1 player, i.e.
\begin{equation}{\mathcal B}1_{ij}=\Big\{X=(C_{1i}\cup E_{1i})^-\cup \{p_i\}\cup E: E\subseteq
E_{1i}, X\in {\mathcal H}1_i, p_j\in A_1(X)\cap E\Big\}.\end{equation}\end{definition}

To ensure that ${\mathcal B}1_{ij}$ is nonempty, $p_j^1$ should be big enough.

\begin{definition}$\forall p_i\in M$ such that $p_i\notin D_{1i}$. Let $E^0_{1i}$ be the set of  players in $E_{1i}$ whose first dimensions are large enough such that they may be busy-1 players in ${\mathcal B}1_{ij}$:
\begin{equation}E^0_{1i}=\Big\{p_j\in E_{1i}: p_j^1\geq q_1\big((C_{1i}\cup E_{1i})^-\cup
\{p_i\}\big)\Big\}.\end{equation}\end{definition}

Then $\forall p_j\in E_{1i}\setminus E^0_{1i}$, we have ${\mathcal B}1_{ij}=\emptyset$. Therefore, we can rewrite ${\mathcal H}1_i$ as
\begin{equation}{\mathcal H}1_i={\mathcal B}1_i\cup \bigcup_{p_j\in E_{1i}^0}{\mathcal
B}1_{ij}.\label{95}\end{equation}

It is obvious  that for all $p_j\in E_{1i}^0$,
\begin{equation}{\mathcal B}1_i\cap{\mathcal B}1_{ij}=\emptyset,\label{96}\end{equation} because any pair of coalitions from  the two families, respectively, have different busy-1 player sets.

\begin{definition}$\forall p_i\in M$ such that $p_i\notin D_{1i}$. We use ${\hat E}_{1i}$ and  ${\hat D}_{1i}$ to denote the set of potentially optional players, and the set of blocking players, respectively: \begin{equation}{\hat E}_{1i}=\Big\{p_j\in E_{1i}: q\big((C_{1i}\setminus\{p_i\})\cup \{p_j\}\big)\leq
q\big((C_{1i}\cup E_{1i})^-\cup \{p_i\}\big)\Big\},\end{equation}
\begin{equation}{\hat D}_{1i}=\Big\{p_j\in E_{1i}: p_j^1\geq q_1\big((C_{1i}\cup E_{1i})^-\cup \{p_i\}\big)\Big\}.\end{equation}

We also use $\delta_{1i}$ to denote the indicator of whether ${\hat D}_{1i}\setminus {\hat E}_{1i}=\emptyset$, i.e.
\begin{equation}\delta_{1i}=\left\{\begin{array}{cc}1&if~{\hat D}_{1i}\setminus {\hat E}_{1i}=\emptyset\\
0&otherwise\end{array}\right..\end{equation}\end{definition}

\begin{lemma}In the two dimensional C-WMMG, suppose $A_1(N)\cap A_2(N)=\emptyset$ and neither $A_1(N)$ nor $A_2(N)$ is winning. $\forall p_i\in M$ such that $p_i\notin D_{1i}$. Suppose  ${\mathcal W}{\mathcal C}1(i,i)\neq \emptyset$. Then
\begin{equation}{\mathcal B}1_i=\left\{\begin{array}{cc}\Big\{(C_{1i}\cup E_{1i})^-\cup \{p_i\}\cup E: E\subseteq {\Hat E}_{1i}\setminus {\Hat D}_{1i}\Big\}&if~{\Hat D}_{1i}\setminus {\Hat E}_{1i}=\emptyset\\
\emptyset&otherwise\end{array}\right.,\label{175}\end{equation}
and therefore
 \begin{equation}|{\mathcal B}1_i|=\delta_{1i}\cdot 2^{|{\hat E}_{1i}\setminus {\hat D}_{1i}|}.\end{equation}\label{l31}\end{lemma}
\begin{proof}$\forall E\subseteq E_{1i}$, let $X(E)=(C_{1i}\cup E_{1i})^-\cup \{p_i\}\cup
E$, then  the corresponding coalition in ${\mathcal W}{\mathcal C}1(i,i)$ is \begin{equation}Y(E)=C_{1i}\cup (E_{1i}\setminus E)=A_1(N)\cup D_{1i}\cup \{p_i\}\cup (E_{1i}\setminus E).\end{equation}

By definition, $X(E)\in {\mathcal B}1_{i}$ if and only if the following four conditions hold: (i) $p_i$ is busy-2 in $Y(E)$, (ii)  $Y(E)$ is winning, (iii) $p_i$ is decisive in $Y(E)$, (iv) $A_1(X(E))\cap E=\emptyset$.


When ${\Hat D}_{1i}\setminus {\Hat E}_{1i}\neq\emptyset$, let $p_j\in A_1\big({\Hat D}_{1i}\setminus {\Hat E}_{1i}\big)$. Obviously, $p_j$ should not be contained in $E$, because otherwise, from $p_j\in {\Hat D}_{1i}$, we would have $p_j\in A_1(X(E))\cap E$. Hence it must be true that $p_j\in Y(E)$.  However, $p_j\in E_{1i}\setminus {\Hat E}_{1i}$ implies that  $q(X(E)^-)>q(X(E))$, a contradiction with condition (iii).

Suppose now ${\Hat D}_{1i}\setminus {\Hat E}_{1i}=\emptyset$. On the one hand, $\forall X(E)\in {\mathcal B}1_i$, condition (iv) implies that \begin{equation}q(X(E))=q\big((C_{1i}\cup E_{1i})^-\cup \{p_i\}\big).\end{equation} Therefore, we must have \begin{equation}E\subseteq {\hat E}_{1i},\label{g171}\end{equation} because any player in  $E_{1i}\setminus{\hat E}_{1i}$ belonging to $E$ can guarantee that $Y(E)\setminus \{p_i\}$ is still winning, a contradiction with condition (iii). Also, \begin{equation}E\cap {\Hat D}_{1i}=\emptyset\label{g172}\end{equation} must hold because otherwise condition (iv) would be violated. Combining (\ref{g171}) and (\ref{g172}) we arrive at $E\subseteq {\hat E}_{1i}\setminus {\hat D}_{1i}$

On the other hand, for any subset $E$ of  ${\Hat E}_{1i}\setminus {\Hat D}_{1i}$, we prove that $X(E)$ and $Y(E)$ satisfy the four conditions. Due to Lemma \ref{l9}(b), Lemma \ref{l27}, and the hypothesis that ${\mathcal W}{\mathcal C}1(i,i)\neq \emptyset$, we know that condition (i) and condition (ii) are both true. Condition (iv) is true because $E\cap {\Hat D}_{1i}=\emptyset$. Condition (iii) is valid due to condition (iv) and $E\subseteq {\Hat E}_{1i}$.

To sum up, (\ref{175}) is valid, and hence the lemma.\qed \end{proof}

\begin{definition}$\forall p_i\in M$ such that $p_i\notin D_{1i}$. We define $E^{00}_{1i}$ as a refinement of $E^{0}_{1i}$: \begin{equation}E^{00}_{1i}=\Big\{p_j\in E_{1i}^0: {\Hat D}_{1ij}\subseteq {\Hat E}_{1ij}\Big\},\end{equation}
where
\begin{equation}{\Hat E}_{1ij}=\Big\{p_k\in E_{1i}: k\neq j, q\big((C_{1i}\setminus\{p_i\})\cup \{p_k\}\big)\leq
p^1_{j}+q_2(N)\Big\},\end{equation}
\begin{equation}{\Hat D}_{1ij}=\Big\{p_k\in E_{1i}: p_k^1>p_j^1\Big\},\end{equation}
 for all $p_{j}\in E_{1i}^0$.
\end{definition}

\begin{lemma}In the two dimensional C-WMMG, suppose $A_1(N)\cap A_2(N)=\emptyset$ and neither $A_1(N)$ nor $A_2(N)$ is winning. $\forall p_i\in M$ such that $p_i\notin D_{1i}$. Suppose  also ${\mathcal W}{\mathcal C}1(i,i)\neq \emptyset$.

(a) If $p_j\in E^0_{1i}\setminus E^{00}_{1i}$, then ${\mathcal B}_{1ij}=\emptyset$.

(b) For any $p_j,
p_k\in E_{1i}^{00}$, if $p_j^1\neq p_k^1$, then ${\mathcal B}1_{ij}\cap {\mathcal B}1_{ik}=\emptyset.$
\label{l32b} \end{lemma}
\begin{proof}
(a) By definition, $p_j\in E^0_{1i}\setminus E^{00}_{1i}$ means that ${\Hat D}_{1ij}\nsubseteq{\Hat E}_{1ij}$. Therefore, there exists a player $w$ who is in ${\Hat D}_{1ij}$ but not in ${\Hat E}_{1ij}$. $\forall X\in {\mathcal B}_{1ij}$, using an argument similar to that  in the proof to Lemma \ref{l31}, we can show that there is always a contradiction, regardless of whether $w\in X$ or not. Hence such $X$ does not exist, i.e. ${\mathcal B}_{1ij}=\emptyset$.

(b) Self-evident by definition.\qed\end{proof}

However, when $p_j^1=p_k^1$, we cannot get ${\mathcal B}1_{ij}={\mathcal B}1_{ik}.$

 Still as in Section \ref{wc}, we let $\{M_{11}, M_{12},
\cdots, M_{1n^1_0}\}$ be a partition of $M$, such that players in the same subset have the same
first dimensions and players from different subsets have different first  dimensions.

\begin{definition}$\forall p_i\in M$ such that $p_i\notin D_{1i}$, and $1\leq s\leq n^1_0$, let \begin{equation}{\Hat E}_{1i}^s=\Big\{p_k\in E_{1i}: q((C_{1i}\setminus
\{p_i\})\cup \{p_k\})\leq q_1(M_{1s})+q_2(N)\Big\},\end{equation}
\begin{equation}{\Hat D}^s_{1i}=\Big\{p_j\in
E_{1i}: p_j>q_1(M_{1s})\Big\},\end{equation}
\begin{equation}\delta_{1i}^s=\left\{\begin{array}{cc}1&if~{\hat D}^s_{1i}\setminus {\hat E}^s_{1i}=\emptyset\\
0&otherwise\end{array}\right..\end{equation}
\end{definition}

\begin{lemma}In the two dimensional C-WMMG, suppose $A_1(N)\cap A_2(N)=\emptyset$ and neither $A_1(N)$ nor $A_2(N)$ is winning. $\forall p_i\in M$ such that $p_i\notin D_{1i}$.

(a) If $\delta_{1i}^s=0$, then $\bigcup_{p_j\in M_{1s}\cap E_{1i}^{00}}{\mathcal B}1_{ij}=\emptyset$.

(b) If $\delta_{1i}^s=1$, then \begin{equation}\bigcup_{p_j\in M_{1s}\cap E_{1i}^{00}}{\mathcal B}1_{ij}=\Big\{(C_{1i}\cup E_{1i})^-\cup
\{p_i\}\cup E: E\subseteq {\Hat E}_{1i}^s\setminus {\Hat D}^s_{1i},E\cap (E_{1i}^{00}\cap
M_{1s})\neq \emptyset\Big\}.\end{equation}

(c)\begin{equation}\left|\bigcup_{p_j\in M_{1s}\cap E_{1i}^{00}}{\mathcal B}1_{ij}\right|=
\delta_{1i}^s\left(2^{\left|{\Hat E}_{1i}^s\setminus {\Hat D}^s_{1i}\right|}
-2^{\left|\left({\Hat E}_{1i}^s\setminus {\Hat D}^s_{1i}\right)\setminus
\left(E_{1i}^{00}\cap M_{1s}\right)
\right|}\right).\end{equation}\label{l32}
\end{lemma}
\begin{proof}(a) $\delta_{1i}^s=0$ means that there exists a player in ${\Hat E}_{1i}^s\setminus {\Hat D}^s_{1i}$. $\forall p_j\in M_{1s}\cap E_{1i}^{00}$, $\forall X\in {\mathcal B}1_{ij}$, using a similar argument as in the proof of Lemma \ref{l31}, we can derive a contradiction. Hence such $X$ does not exist, i.e. ${\mathcal B}1_{ij}=\emptyset$.

 (b) Similar to the proof of Lemma \ref{l31}. We only need to be careful to let the coalitions in $\bigcup_{p_j\in M_{1s}\cap E_{1i}^{00}}{\mathcal B}1_{ij}$ contain at least one member of $E_{1i}^{00}\cap M_{1s}$.

 (c) Straightforward from (a) and (b). Notice that $2^{\left|\left({\Hat E}_{1i}^s\setminus {\Hat D}^s_{1i}\right)\setminus
\left(E_{1i}^{00}\cap M_{1s}\right) \right|}$ is the number of coalitions where no player in $E_{1i}^{00}\cap M_{1s}$ is included. Such coalitions are not qualified and we need to exclude them.\qed\end{proof}

\begin{lemma}In the two dimensional C-WMMG, suppose $A_1(N)\cap A_2(N)=\emptyset$ and neither $A_1(N)$ nor $A_2(N)$ is winning. $\forall p_i\in M$ such that $p_i\notin D_{1i}$: \begin{equation}|{\mathcal W}{\mathcal C}1(i,i)|
=\delta_{1i}\cdot 2^{|{\Hat E}_{1i}\setminus {\Hat D}_{1i}|}+\sum_{1\leq s\leq
n_1^0}\delta_{1i}^s\left(2^{\left|{\Hat E}_{1i}^s\setminus {\Hat D}^s_{1i}\right|}
-2^{\left|\left({\Hat E}_{1i}^s\setminus {\Hat D}^s_{1i}\right)\setminus
\left(E_{1i}^{00}\cap M_{1s}\right)
\right|}\right).\end{equation}\label{l13}\end{lemma}
\begin{proof}This lemma is a combination of (\ref{91}), (\ref{95}), (\ref{96}), Lemma \ref{l31}, Lemma \ref{l32b} and Lemma \ref{l32}(c).\qed\end{proof}

We remark that when $p_i\in D_{1i}$, then the above formula should be consistent with the one in part (a) of Lemma \ref{l12}, because the condition that $p_i\notin D_{1i}$ is never used.

Similarly, we define ${\mathcal W}{\mathcal C}2(i,j), {\mathcal H}2_i, E_{2i}^0, {\Hat E}_{2i},
{\Hat D}_{2i}, {\mathcal B}_{2i}, {\mathcal B}_{2ij},$ ${\Hat E}_{2ij}, {\Hat D}_{2ij}, E_{2i}^{00},
{\Hat E}_{2i}^s,$  $\delta_{2i}, \delta_{2i}^s$, then
 results parallel to  Lemma \ref{l12},  Lemma \ref{l28}, Lemma \ref{l27},  Lemma \ref{l30},  Lemma \ref{l31}, Lemma \ref{l32b}, Lemma \ref{l32} and Lemma \ref{l13} hold.

\subsubsection {Penrose-Banzhaf indices}




\begin{definition} We use $M^{1*}$ to denote the set of $s$ such that $\bigcup_{p_j\in M_{2s}} {\mathcal W}{\mathcal C}1_j\neq \emptyset$. According to Lemma \ref{l18}(b), \begin{equation}M^{1*}=\Big\{1\leq s\leq n_2^0: q_2\big(M_{2s})\geq q_2\big(A_1(N)\cup D_1(M_{2s})\big)\Big\}.\end{equation}\end{definition}

 Similarly, we define:
  \begin{equation}M^{2*}=\Big\{1\leq s\leq n_1^0: q_1\big(M_{1s})\geq q_1\big(A_1(N)\cup D_2(M_{1s})\big)\Big\}.\end{equation}

  While \begin{equation}{\mathcal W}{\mathcal C}1(i)\subseteq {\mathcal W}{\mathcal C}1\end{equation} and \begin{equation}{\mathcal W}{\mathcal C}1=\bigcup_{1\leq s\leq n_2^0}\left(\cup_{p_j\in M_2^s}{\mathcal W}{\mathcal C}1_j\right),\end{equation}
we have the following useful decomposition:
\begin{eqnarray}{\mathcal W}{\mathcal C}1(i)&=&{\mathcal W}{\mathcal C}1(i)\cap {\mathcal W}{\mathcal C}1\\
&=&{\mathcal W}{\mathcal C}1(i)\bigcap \left(\bigcup_{1\leq s\leq n_2^0}\left(\cup_{p_j\in M_2^s}{\mathcal W}{\mathcal C}1_j\right)\right)\\
&=&\bigcup_{1\leq s\leq n_2^0}\left({\mathcal W}{\mathcal C}1(i)\bigcap\left(\cup_{p_j\in M_2^s}{\mathcal W}{\mathcal C}1_j\right)\right)\\
&=&\bigcup_{1\leq s\leq n_2^0}{\mathcal W}{\mathcal C}1(i,M_{2s})\\
&=&\bigcup_{s\in M^{1*}}{\mathcal W}{\mathcal C}1(i,M_{2s}).\label{205}\end{eqnarray}

 Parallel decomposition holds for ${\mathcal W}{\mathcal C}2(i)$.

We need two final notations.
\begin{definition} $\forall p_i\in M$, let $\lambda_{1i}$ be the indicator of whether $|{\mathcal W}{\mathcal C}1(i,i)|$ is nonzero and should be counted in calculating  $|{\mathcal W}{\mathcal C}1(i)|$. According to Lemma \ref{l27}, \begin{equation}\lambda_{1i}=\left\{\begin{array}{ll}1& if p_i\notin D_{1}(M_{2s_2(i)})~ and~ C_{1i}\in {\mathcal M}{\mathcal W}{\mathcal C}1_i \\
0& otherwise\end{array}\right.,\end{equation} \end{definition}

Similarly, we define\begin{equation}\lambda_{2i}=\left\{\begin{array}{ll}1& if p_i\notin D_{2}(M_{1s_1(i)})~ and~ C_{2i}\in {\mathcal M}{\mathcal W}{\mathcal C}2_i \\
0& otherwise\end{array}\right..\end{equation}

\begin{theorem}In the two dimensional C-WMMG, suppose $A_1(N)\cap A_2(N)=\emptyset$ and neither $A_1(N)$ nor $A_2(N)$ is winning.

(a1) If $p_i\in A_1(N)$, then \begin{eqnarray}pb_i&=&\frac{1}{2^{n-1}}\left(2^m(2^{m_2}-2)+\sum\limits_{p_j\in \Delta_1^*}d_{1j}\cdot 2^{|{\Bar D}_{1j}\setminus M^+_{2j}|}+\right.\\
&&\left.\sum\limits_{s\in M^{1*}}\left(2^{\big|M_{2s}\setminus D_1(M_{2s})\big|+\big|E_1(M_{2s})\big|}-\alpha_{1}^s\cdot 2^{\big|E_1(M_{2s})\big|}\right)\right);\end{eqnarray}

(a2) If $p_i\in A_2(N)$, then \begin{eqnarray}pb_i&=&\frac{1}{2^{n-1}}\left(2^m(2^{m_1}-2)+\sum\limits_{p_j\in \Delta_2^*}d_{2j}\cdot 2^{|{\Bar D}_{2j}\setminus M^+_{1j}|}+\right.\\
&&\left.\sum\limits_{s\in M^{2*}}\left(2^{\big|M_{1s}\setminus D_2(M_{1s})\big|+\big|E_2(M_{1s})\big|}-\alpha_{2}^s\cdot 2^{\big|E_2(M_{1s})\big|}\right)\right);\end{eqnarray}

(b) If $p_i\in M$, then \begin{eqnarray}pb_i
&=&\frac{1}{2^{n-1}}\left\{\sum_{s\in M^{1*}:p_i\in D_1(M_{2s})}\left(2^{\big|M_{2s}\setminus D_1(M_{2s})\big|+\big|E_1(M_{2s})\big|}-\alpha_{1}^s\cdot 2^{\big|E_1(M_{2s})\big|}\right)\right.\\
&&+\sum_{s\in M^{2*}:p_i\in D_2(M_{1s})}\left(2^{\big|M_{1s}\setminus D_2(M_{1s})\big|+\big|E_2(M_{1s})\big|}-\alpha_{2}^s\cdot 2^{\big|E_2(M_{1s})\big|}\right)\\
&&\left.+\sum_{t=1,2}\lambda_{ti}\left[\delta_{ti}\cdot 2^{\big|{\Hat E}_{ti}\setminus {\Hat D}_{ti}\big|}+\sum_{1\leq
s\leq n_0^t}\delta_{ti}^s\cdot\left(2^{\big|{\Hat E}_{ti}^s\setminus {\Hat D}_{ti}^{s}\big|}- 2^{\big|{\Hat E}_{ti}^s\setminus
{\Hat D}_{ti}^{s}\big|-\big|E_{ti}^{00}\cap M_{ts}\big|}\right)\right]\right\}.\end{eqnarray}\label{t12}
\end{theorem}

\begin{proof}By definition of Penrose-Banzhaf index, $\forall p_i\in N$, we have:\begin{equation}pb_i=\frac{1}{2^{n-1}}\Big(\left|{\mathcal W}{\mathcal C}3(i)\right|+\left|{\mathcal W}{\mathcal C}1(i)\right|+\left|{\mathcal W}{\mathcal C}2(i)\right|\Big).\label{197}\end{equation}

(a1) First of all, $\forall p_i\in A_1(N)$, since ${\mathcal W}{\mathcal C}2$ is the set of winning coalitions where no player in $A_1(N)$ is included,  we know that \begin{equation}{\mathcal W}{\mathcal C}2(i)=\emptyset.\label{198}\end{equation}

Second of all, \begin{equation}{\mathcal W}{\mathcal C}1(i,M_{2s})=\bigcup_{p_j\in M_{2s}}{{\mathcal W}{\mathcal C}1_j},\label{210}\end{equation}
because $p_i\in A_1(N)$ implies that $p_i$ is decisive in each coalition of $\cup_{p_j\in M_{2s}}{{\mathcal W}{\mathcal C}1_j}$.

Combining  (\ref{205}) (\ref{197}) (\ref{198}) (\ref{210}), Lemma \ref{l18}(a)(d), and Lemma \ref{l14}, we get part (a1) of this theorem.

(a2) Symmetric to (a1).

(b) For $p_i\in M$, since ${\mathcal W}{\mathcal C}3(i)=\emptyset$ (Lemma \ref{l012}), we have:
\begin{equation}pb_i=\left|{\mathcal W}{\mathcal C}1(i)\right|+\left|{\mathcal W}{\mathcal C}2(i)\right|.\end{equation}

$|{\mathcal W}{\mathcal C}1(i)|$ can be decomposed further as: \begin{equation}|{\mathcal W}{\mathcal C}1(i)|=\sum_{s\in M^{1*}:p_i\in D_1(M_{2s})}|{\mathcal W}{\mathcal C}1(i,M_{2s})|+\sum_{s\in M^{1*}:p_i\notin D_1(M_{2s})}|{\mathcal W}{\mathcal C}1(i,M_{2s})|.\label{212}\end{equation}

For each $s\neq s_2(i)$, $p_i\notin D_1(M_{2s})$ implies that $p_i\in E_1(M_{2s})$. By Lemma \ref{l12}(b), \begin{equation}\sum_{s\in M^{1*}\setminus \{s_2(i)\}:p_i\notin D_1(M_{2s})}|{\mathcal W}{\mathcal C}1(i,M_{2s})|=0.\label{213}\end{equation}

Combining (\ref{212})(\ref{213}), we have \begin{eqnarray}|{\mathcal W}{\mathcal C}1(i)|&=&\sum_{s\in M^{1*}:p_i\in D_1(M_{2s})}|{\mathcal W}{\mathcal C}1(i,M_{2s})|+\lambda_{1i}|{\mathcal W}{\mathcal C}1(i,M_{2s_2(i)})|\\
&=&\sum_{s\in M^{1*}:p_i\in D_1(M_{2s})}|{\mathcal W}{\mathcal C}1(i,M_{2s})|+\lambda_{1i}|{\mathcal W}{\mathcal C}1(i,i)|,\end{eqnarray}
where the second equality is valid because of Lemma \ref{l28}.

We have a similar formula for  $|{\mathcal W}{\mathcal C}2(i)|$.

 When $p_i\in D_1(M_{2s})$, Lemma \ref{l12}(a) is applicable to compute $|{\mathcal W}{\mathcal C}1(i,M_{2s})|$. Lemma \ref{l13} is used to compute $|{\mathcal W}{\mathcal C}1(i,i)|$.
Based on the above discussion, part (b) of this theorem is valid.\qed
\end{proof}

\subsubsection {Shapley-Shubik indices}

For computing Shapley-Shubik indices, we have the following formulas.
\begin{theorem}In the two dimensional C-WMMG, suppose $A_1(N)\cap A_2(N)=\emptyset$ and neither $A_1(N)$ nor $A_2(N)$ is winning.

(a1) If $p_i\in A_1(N)$, then \begin{eqnarray}&&ss_i=\sum\limits_{c=1}^{m_2-1}\sum\limits_{d=0}^{m}C_{m_2}^c\cdot C_m^d \cdot \frac{(m_1+c+d-1)!(n-m_1-c-d)!}{n!}\label{216}\\
&&+\sum\limits_{p_j\in \Delta_1^*}\sum\limits_{d=0}^{|{\Bar D}_{1j}\setminus M^+_{2j}|} d_{1j}\cdot C_{|{\Bar D}_{1j}\setminus M^+_{2j}|}^{d}\cdot \frac{(m_2+d)!(n-m_2-d-1)!}{n!}\label{217}\\
&&+\sum_{s\in M^{1*}}\left(\sum\limits_{c=0}^{|M_{2s}\setminus D_1(M_{2s})|}\sum\limits_{d=0}^{|E_1(M_{2s})|}C_{|M_{2s}\setminus D_1(M_{2s})|}^c\cdot C_{|E_1(M_{2s})|}^d\right.\label{218}\\
&&\cdot \frac{(m_1+|D_1(M_{2s})|+c+d-1)!(n-m_1-|D_1(M_{2s})|-c-d)!}{n!}\label{219}\\
&&\left.-\alpha_1^s\sum_{d=0}^{|E_1(M_{2s})|}C_{|E_1(M_{2s})|}^d\cdot \frac{(m_1+|D_1(M_{2s})|+d-1)!(n-m_1-|D_1(M_{2s})|-d)!}{n!}\right),\label{220}\end{eqnarray}
where  the right hand side of (\ref{216}) is defined as zero when $m_2=1$;

(a2) If $p_i\in A_1(N)$, then \begin{eqnarray}&&ss_i=\sum\limits_{c=1}^{m_1-1}\sum\limits_{d=0}^{m}C_{m_1}^c\cdot C_m^d \cdot \frac{(m_2+c+d-1)!(n-m_2-c-d)!}{n!}\label{221}\\
&&+\sum\limits_{p_j\in \Delta_2^*}\sum\limits_{d=0}^{|{\Bar D}_{2j}\setminus M^+_{1j}|} d_{2j}\cdot C_{|{\Bar D}_{2j}\setminus M^+_{1j}|}^{d}\cdot \frac{(m_1+d)!(n-m_1-d-1)!}{n!}\\
&&+\sum_{s\in M^{2*}}\left(\sum\limits_{c=0}^{|M_{1s}\setminus D_2(M_{1s})|}\sum\limits_{d=0}^{|E_2(M_{1s})|}C_{|M_{1s}\setminus D_2(M_{1s})|}^c\cdot C_{|E_2(M_{1s})|}^d\right.\\
&&\cdot \frac{(m_2+|D_2(M_{1s})|+c+d-1)!(n-m_2-|D_2(M_{1s})|-c-d)!}{n!}\\
&&\left.-\alpha_2^s\sum_{d=0}^{|E_2(M_{1s})|}C_{|E_2(M_{1s})|}^d\cdot \frac{(m_2+|D_2(M_{1s})|+d-1)!(n-m_2-|D_2(M_{1s})|-d)!}{n!}\right),\end{eqnarray}
where the right hand side of (\ref{221}) is defined as zero when $m_1=1$;

(b) If $p_i\in M$, then \begin{eqnarray}&&ss_i=\sum_{s\in M^{1*}:p_i\in D_1(M_{2s})}\left(\sum\limits_{c=0}^{|M_{2s}\setminus D_1(M_{2s})|}\sum\limits_{d=0}^{|E_1(M_{2s})|}C_{|M_{2s}\setminus D_1(M_{2s})|}^c\cdot C_{|E_1(M_{2s})|}^d\right.\label{226}\\
&&\cdot \frac{(m_1+|D_1(M_{2s})|+c+d-1)!(n-m_1-|D_1(M_{2s})|-c-d)!}{n!}\label{227}\\
&&\left.-\alpha_1^s\sum_{d=0}^{|E_1(M_{2s})|}C_{|E_1(M_{2s})|}^d\cdot \frac{(m_1+|D_1(M_{2s})|+d-1)!(n-m_1-|D_1(M_{2s})|-d)!}{n!}\right)\label{228}\\
&&+\sum_{s\in M^{2*}: p_i\in D_2(M_{1s})}\left(\sum\limits_{c=0}^{|M_{1s}\setminus D_2(M_{1s})|}\sum\limits_{d=0}^{|E_2(M_{1s})|}C_{|M_{1s}\setminus D_2(M_{1s})|}^c\cdot C_{|E_2(M_{1s})|}^d\right.\label{229}\\
&&\cdot \frac{(m_2+|D_2(M_{1s})|+c+d-1)!(n-m_2-|D_2(M_{1s})|-c-d)!}{n!}\label{230}\\
&&\left.-\alpha_2^s\sum_{d=0}^{|E_2(M_{1s})|}C_{|E_2(M_{1s})|}^d\cdot \frac{(m_2+|D_2(M_{1s})|+d-1)!(n-m_2-|D_2(M_{1s})|-d)!}{n!}\right)\label{231}\\
&&+\sum_{t=1,2}\lambda_{ti}\left[\delta_{ti}\cdot
\sum_{l=0}^{|{\Hat E}_{ti}\setminus {\Hat D}_{ti}|}C_{|{\Hat E}_{ti}\setminus {\Hat D}_{ti}|}^l\frac{(|C_{ti}|+|E_{ti}|-l-1)!(n-|C_{ti}|-|E_{ti}|+l)!}{n!}\right.\label{232}\\
&&+\sum_{1\leq s\leq n_0^t}\delta_{ti}^s\cdot\left(\sum_{l=0}^{|{\Hat E}_{ti}^s\setminus {\Hat D}_{ti}^{s}|}
C_{|{\Hat E}_{ti}^s\setminus {\Hat D}_{ti}^{s}|}^l\frac{(|C_{ti}|+|E_{ti}|-l-1)!(n-|C_{ti}|-|E_{ti}|+l)!}{n!}\right.\label{233}\\
&&\left.-\sum_{l=0}^{|({\Hat E}_{ti}^s\setminus {\Hat D}_{ti}^{s})\setminus (E_{ti}^{00}\cap M_{ts})|}
\left. C_{|({\Hat E}_{ti}^s\setminus {\Hat D}_{ti}^{s})\setminus (E_{ti}^{00}\cap M_{ts})|}^l\frac{(|C_{ti}|+|E_{ti}|-l-1)!(n-|C_{ti}|-|E_{ti}|+l)!}{n!}\right)\right].\label{234}\end{eqnarray}\label{t13}\end{theorem}
\begin{proof}Above all, due to the definition of the Shapley-Shubik index in (\ref{ss}) and the decomposition of ${\mathcal W}{\mathcal C}(i)$, we have
\begin{eqnarray}ss_i&=&\sum_{C\in{\mathcal W}{\mathcal C}1(i)
}\frac{(|C|-1)!(n-|C|)!}{n!}+\sum_{C\in{\mathcal W}{\mathcal C}2(i)
}\frac{(|C|-1)!(n-|C|)!}{n!}\\
&&+\sum_{C\in{\mathcal W}{\mathcal C}3(i)
}\frac{(|C|-1)!(n-|C|)!}{n!}.\label{228}\end{eqnarray}

(a1) Note first $p_i\in A_2(N)$ implies that \begin{equation}{\mathcal W}{\mathcal C}2(i)=\emptyset,\end{equation}
therefore the middle part of $ss_i$ is zero.

Let us consider the third part of $ss_i$, i.e. formula (\ref{228}). Suppose for the moment that $m_1>1$ and $m_2>1$.
 Due to Lemma \ref{l20}(a),  ${\mathcal W}{\mathcal C}3(i)$ consists of two parts. For the first part, a general coalition can be expressed as $A_1(N)\cup C\cup D$, where $\emptyset\subset C\subset A_2(N)$ and $D\subseteq M$. Let $c$ denote the cardinality of $C$ and $d$ the cardinality of $D$, then there are a total of $C_{m_2}^cC_{m}^d$ such combinations of $C$ and $D$ with these fixed cardinalities. The requirements that $1\leq c\leq m_2-1$ and $0\leq d\leq m$ are trivial. Therefore, the right hand side of (\ref{216}) is exactly the first part of ${\mathcal W}{\mathcal C}3(i)$. Due to Lemma \ref{l021}, Lemma \ref{l022}, and Lemma \ref{l023}, formula (\ref{217}) corresponds to the second part. It can be checked that formulas (\ref{216}) and (\ref{217}) are also valid for the other cases, i.e. the cases where at least one of $m_1$ and $m_2$ is 1.

 We are left the final part, i.e. $\sum_{C\in{\mathcal W}{\mathcal C}1(i)
}\frac{(|C|-1)!(n-|C|)!}{n!}$. As in the proof to Theorem \ref{t12}(a1), \begin{equation}{\mathcal W}{\mathcal C}1(i,M_{2s})=\bigcup_{p_j\in M_{2s}}{{\mathcal W}{\mathcal C}1_j},\label{210}\end{equation}
because $p_i\in A_1(N)$ implies that $p_i$ is decisive in each coalition of $\cup_{p_j\in M_{2s}}{{\mathcal W}{\mathcal C}1_j}$.
Due to Lemma \ref{l18}(a)(c), we know that (\ref{218})(\ref{219})(\ref{220}) correspond to the this final part.

Based on the above  discussions, part (a1) of this theorem is correct.

(a2) Symmetric to (a1).

(b) We prove by using similar arguments as in the proof to Theorem \ref{t12}(b). First of all, $p_i\in M$ implies that ${\mathcal W}{\mathcal C}3(i)=\emptyset$ (Lemma \ref{l012}). Hence $ss_i$ consists of two parts: contribution from  ${\mathcal W}{\mathcal C}1(i)$ and contribution from  ${\mathcal W}{\mathcal C}2(i)$. We shall prove that (the right hand side of) (\ref{226}), (\ref{227}), (\ref{228}), and the $t=1$ half of (\ref{232})(\ref{233})(\ref{234}) correspond to ${\mathcal W}{\mathcal C}1(i)$ (the rest formulas correspond to ${\mathcal W}{\mathcal C}2(i)$ by symmetry).

Due to Lemma \ref{l18}(c), (the right hand side of) (\ref{226}), (\ref{227}), (\ref{228}) correspond to all the contribution of  ${\mathcal W}{\mathcal C}1(i)$, except possibly for  ${\mathcal W}{\mathcal C}1(i, M_{1s_2(i)})$. This possibility is characterized by indicator $\lambda_{1i}$. When $\lambda_{1i}=1$, it can be checked that the (\ref{232}) part within the square brackets is the contribution from \begin{equation}\Big\{C^-\cup \{p_i\}: C\in {\mathcal B}1_i\Big\},\end{equation}(owing to Lemma \ref{l31}), while the (\ref{233}) and (\ref{234}) part is the contribution from  \begin{equation}\bigcup_{p_j\in M_{2s}}\Big\{C^-\cup \{p_i\}: C\in {\mathcal B}1_{ij}\Big\},\end{equation}owing to Lemma \ref{l32}(a)(b).
\qed\end{proof}

\subsection{Summary result}

\begin{theorem}Computing all the Penrose-Banzhaf indices and computing all the Shapley-Shubik indices can both be done in $O(n^3)$ time.\end{theorem}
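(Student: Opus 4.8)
The plan is to treat this theorem purely as a running-time analysis of the closed-form expressions already established in Theorem~8 (Banzhaf) and Theorem~9 (Shapley-Shubik), since all the combinatorial work --- identifying ${\mathcal W}{\mathcal C}1$, ${\mathcal W}{\mathcal C}2$, ${\mathcal W}{\mathcal C}3$ and the auxiliary index sets --- has already been reduced to $O(n\log n)$ preprocessing in Section~4.3. First I would do an $O(n)$ pass to tabulate the factorials $0!,1!,\dots,n!$ together with $1/n!$, so that any factorial ratio of the form $\frac{(n-c-l)!(c+l-1)!}{n!}$ and any binomial coefficient $C_k^l$ appearing in the formulas can be evaluated in constant time. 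This reduces every elementary summand in Theorems~8 and~9 to an $O(1)$ lookup.

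For the Banzhaf indices I would argue the cost sits comfortably below $O(n^3)$. By Theorem~8 each index is a sum of powers of two weighted only by the cardinalities $|E_{ti}|$, $|I_{ti}\setminus I^*_{ti}|$, $|I^s_{ti}\setminus I^{*s}_{ti}|$ and $|E^{00}_{ti}\cap M_{ts}|$. All of these can be extracted inside the sorting and monotonicity framework of Section~4.3: the sets $I^s_{ti}$ and $I^{*s}_{ti}$ are nested as $s$ ranges over $1,\dots,n^t_0$, exactly as $D_{1i}$ and $E_{1i}$ were nested, so their sizes can be maintained incrementally. The first part $\sum 2^{|E_{ti}|}$ contributes $O(n)$ terms per player, and the $s$-indexed part contributes $O(n^t_0)=O(n)$ terms per player, giving $O(n^2)$ arithmetic operations in total.

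The Shapley-Shubik case is the binding one, and here I would isolate the single bottleneck. Introduce the abbreviation $f(c,k)=\sum_{l=0}^{k}C_k^l\frac{(n-c-l)!(c+l-1)!}{n!}$, which costs $O(n)$ to evaluate since the sum has at most $n+1$ constant-time terms. The first (non-special) contribution in Theorem~9(c) is $f(|C_{ti}|,|E_{ti}|)$ summed over base coalitions $C_{ti}$ with $p_i\in M_t^*$ that contain $p_j$ in $D_{ti}$; since there are only $O(n)$ such base coalitions, I would compute each $f(|C_{ti}|,|E_{ti}|)$ once ($O(n^2)$ total) and then distribute its value to the $O(n)$ swing members of $C_{ti}$, again $O(n^2)$. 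The expensive part is the second (special) contribution, where $p_j$ plays the role of the distinguished member $p_i$ of its own coalition $C_{tj}$: this term carries an outer sum over $s=1,\dots,n^t_0$, and each summand is a difference of two $f$-type evaluations. Thus one player costs $O(n^t_0)\cdot O(n)=O(n^2)$, and over all $n$ players the total is $O(n^3)$.

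The step I expect to be the main obstacle is precisely this accounting for the special-member term: one must verify that every index-set cardinality feeding into $f$ --- in particular $|I^s_{tj}\setminus I^{*s}_{tj}|$ and $|E^{00}_{tj}\cap M_{ts}|$ for all $O(n^2)$ pairs $(j,s)$ --- is available within budget, and that the decomposition ${\mathcal L}1_i={\mathcal B}1_i\cup\bigcup_{p_j\in E^{0}_{1i}}{\mathcal B}1_{ij}$ underlying the formula introduces no double counting across the players it is distributed to. Computing each such cardinality by a direct $O(n)$ scan already stays within $O(n^3)$, while the nestedness relations of Section~4.3 would let one do better if a sharper bound were desired. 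Collecting the $O(n^2)$ work for Banzhaf and the $O(n^3)$ work for Shapley-Shubik, and noting both dominate the $O(n\log n)$ preprocessing, yields the claimed $O(n^3)$ bound for both.
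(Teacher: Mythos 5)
Your proposal is correct and takes essentially the same route as the paper: both treat the theorem as a running-time accounting of the closed-form formulas in Theorems~8 and~9, precompute the auxiliary index sets using the structure developed in Section~4.3 (with each cardinality obtainable by an $O(n)$ scan, for $O(n^3)$ total in the worst case, e.g.\ for the $E^{00}_{ti}$'s), and locate the dominant cost in the per-player Shapley-Shubik evaluation, which is $O(n^2)$ per player and hence $O(n^3)$ overall. Your explicit factorial table and the $f(c,k)$ abbreviation are minor presentational refinements, and your side claim that the Banzhaf part alone fits in $O(n^2)$ is more optimistic than the paper's accounting but immaterial to the stated $O(n^3)$ bound.
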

\begin{proof}Using similar arguments as in the proof to Lemma \ref{l18}(e), computing all the necessary parameters can be done in $O(m^2)$ time. We omit the detailed analysis.

In the formulas of Theorem \ref{t12}, each elementary operation (addition, subtraction, multiplication, division, and the exponentiation with base 2), can be done in $O(m)$ time. For each player $p_i$, there are $O(m)$ such operations, so computing the Penrose-Banzhaf index of each player can be done in $O(m^2)$ time. Since there are $n$ players in total, and $m=O(n)$, we know that computing all the Penrose-Banzhaf indices can be done in $O(n^3)$ time.

In the formulas of Theorem \ref{t13}, the complication is that factorials are involved, and usually we cannot compute the exact values (this is not the case in Theorem \ref{t13}, though division is also involved there). We assume that the precision length is given as a constant, and thus computing a factorial, as well as  division and other elementary operations, can be done in constant time. For each player, to calculate the Shapley-Shubik  index we need to do $O(m^2)$ such operations. Since there are $n$ players in total, computing all the Penrose-Banzhaf indices can be done in $O(n^3)$ time.  Hence the theorem.\qed
\end{proof}
\section{List of Notations}
Notations listed below have already been defined the first time they appear, and the aim of this section is to help the reader find their meanings  more conveniently. The list, nevertheless, is not complete. Notations that are used only once are excluded. In particular, indicators, as a class, are not included, because most of them are used only once, right after their definitions. A more self-evident way to denote the indicators is to write them in the form of ``$Is(X)$", where $X$ is a statement, meaning that the value is 1 if statement $X$ is true and 0 otherwise. However, that would make the main formulas overly long. So we finally took the current more concise approach. Note also that for several notations that would be not clear enough and even confusing if an index $k\in \{1,2\}$ was introduced to  indicate which dimension we are concentrating on, we only listed a half of them. The other half is symmetric.

\begin{center}{\bf Basic Ones}\end{center}
\begin{itemize}
\item $\subseteq$: set inclusion
\item $\subset$: strict set inclusion, i.e. $X\subset Y$ means that $X\subseteq Y$ and $Y\subseteq X$ is not true
\item $N$: total set of players
\item $n$: total number of players, i.e. $n=|N|$
\item $C^-$: the complement of $C$, i.e. $C^-=N\setminus C$
\item $|\cdot|$: the cardinality of a set
\item $A_k(N)$: set of busy-$k$ players (of the grand coalition $N$), $k=1,2$
\item $m_k$: number of busy-$k$ players (of the grand coalition $N$),i.e. $m_k=|A_k(N)|$, $k=1,2$
\item $q_k(C)$: largest $k$-th dimension of coalition $C$, $k=1,2$
\item $q(C)=q_1(C)+q_2(C)$
\item $A_k(C)$: set of busy-$k$ players of coalition $C$, $k=1,2$
\item $B(C)$: set of busy players of  coalition $C$, i.e. $B(C)=A_1(C)\cup A_2(C)$
\item $M$: set of bench-warmers (of the grand coalition $N$), i.e. $M=N\setminus B(N)$
\item $m$: number of bench-warmers (of the grand coalition $N$), i.e. $m=|M|$
\item $D_{1i}=\Big\{p_j\in M: p_j^1+q_2(N)\geq p_i^2+q_1(N)\Big\}$
\item  $C_{1i}=A_1(N)\cup \{p_i\}\cup D_{1i}$
\item $n^0_2$: total number of players with distinct second dimensions
\item $\{M_{21}, M_{22},\cdots, M_{2n_2^0}\}$: partition of $M$ according to second dimensions
\item $C_c^d$: binomial coefficient (c choose d)\\
~~\\

\begin{center}{\bf Player Sets}\end{center}
\item $E_{1i}=\Big\{p_j\in M\setminus \big(D_{1i}\cup \{p_i\}\big):p_j^2\leq p_i^2\Big\}$

\item $D_1(M_{2s})=\Big\{p_j\in M: p_j^1+q_2(N)\geq q_1(N)+q_2(M_{2s})\Big\}$
\item $E_1(M_{2s})=\Big\{p_j\in M\setminus \big(D_{1}(M_{2s})\cup M_{2s}\big):p_j^2<q_2(M_{2s})\Big\}$
\item $\Delta_1=A_2(A_1(N))\cup \Big\{p_j\in M: p_j^2\geq q_2\big(A_1(N)\big)\Big\}$
\item $\Delta_1^*$: a maximum subset of $\Delta_1$ s.t. no pair of players have the same second dimension.
\item $M^+_{2j}=\Big\{p_k\in M: p_k^2>p_j^2\Big\}$
\item ${\Bar D}_{1j}=\Big\{p_k\in M: p_k^1+q_2(N)\leq q_1(N)+q^2_j\Big\}$

\item $E^0_{1i}=\Big\{p_j\in E_{1i}: p_j^1\geq q_1\big((C_{1i}\cup E_{1i})^-\cup
\{p_i\}\big)\Big\}$
\item ${\Hat E}_{1i}=\Big\{p_j\in E_{1i}: q\big((C_{1i}\setminus\{p_i\})\cup \{p_j\}\big)\leq
q\big((C_{1i}\cup E_{1i})^-\cup \{p_i\}\big)\Big\}$
\item ${\Hat D}_{1i}=\Big\{p_j\in E_{1i}: p_j^1>q_1\big((C_{1i}\cup E_{1i})^-\cup \{p_i\}\big)\Big\}$
\item ${\Hat E}_{1ij}=\Big\{p_k\in E_{1i}: k\neq j, q\big((C_{1i}\setminus\{p_i\})\cup \{p_k\}\big)\leq
p^1_{j}+q_2(N)\Big\}$
\item ${\Hat D}_{1ij}=\Big\{p_k\in E_{1i}: p_k^1>p_j^1\Big\}$
\item $E^{00}_{1i}=\Big\{p_j\in E_{1i}^0: {\Hat D}_{1ij}\subseteq {\Hat E}_{1ij}\Big\}$
\item ${\Hat D}^s_{1i}=\Big\{p_j\in E_{1i}: p_j>q_1(M_{1s})\Big\}$
\item ${\Hat E}_{1i}^s=\Big\{p_k\in E_{1i}: q((C_{1i}\setminus \{p_i\})\cup \{p_k\})\leq q_1(M_{1s})+q_2(N)\Big\}$
\item $M^{1*}=\Big\{1\leq s\leq n_2^0: q_2\big(M_{2s})\geq q_2\big(A_1(N)\cup D_1(M_{2s})\big)\Big\}$

\begin{center}{\bf Player Families}\end{center}
\item ${\mathcal M}{\mathcal W}{\mathcal C}$: set of minimal winning coalitions
\item ${\mathcal M}{\mathcal W}{\mathcal C}(j)$: set of minimal winning coalitions where $p_j$ is a member
\item ${\mathcal M}{\mathcal W}{\mathcal C}_j$: set of minimal winning coalitions where $p_j$ is busy
\item ${\mathcal W}{\mathcal C}$: set of winning coalitions
\item ${\mathcal W}{\mathcal C}(j)$: set of winning coalitions where $p_j$ is decisive
\item ${\mathcal W}{\mathcal C}_j$: set of winning coalitions where $p_j$ is busy
\item ${\mathcal M}{\mathcal W}{\mathcal C}1=\Big\{C\in
{\mathcal M}{\mathcal W}{\mathcal C}: A_1(N)\subseteq C, A_2(N)\cap C=\emptyset\Big\}$
\item ${\mathcal M}{\mathcal W}{\mathcal C}2=\Big\{C\in {\mathcal M}{\mathcal W}{\mathcal C}: A_2(N)\subseteq C, A_1(N)\cap C=\emptyset\Big\}$
\item ${\mathcal M}{\mathcal W}{\mathcal C}3=\Big\{C\in {\mathcal M}{\mathcal W}{\mathcal C}: A_1(N)\cap C\neq \emptyset, A_2(N)\cap C\neq
\emptyset\Big\}$
\item ${\mathcal M}{\mathcal W}{\mathcal C}1_{i}=\{C\in {\mathcal M}{\mathcal W}{\mathcal C}:
A_1(C)=A_1(N), p_i\in A_2(C)\}$

\item ${\mathcal W}{\mathcal C}1=\Big\{C\in {\mathcal W}{\mathcal C}: A_1(N)\subseteq
C, A_2(N)\cap C=\emptyset\Big\}$
\item ${\mathcal W}{\mathcal C}2=\Big\{C\in {\mathcal W}{\mathcal C}:
A_2(N)\subseteq C, A_1(N)\cap C=\emptyset\Big\}$
\item ${\mathcal W}{\mathcal C}3=\Big\{C\in {\mathcal W}{\mathcal
C}: A_1(N)\cap C\neq \emptyset, A_2(N)\cap C\neq \emptyset\Big\}$
\item ${\mathcal W}{\mathcal C}3(i)={\mathcal W}{\mathcal
C}3\cap {\mathcal W}{\mathcal C}(i)$
\item ${\mathcal W}{\mathcal C}1(i)={\mathcal W}{\mathcal
C}1\cap {\mathcal W}{\mathcal C}(i)$
\item ${\mathcal W}{\mathcal C}1(i,j)={\mathcal W}{\mathcal C}1(i)\cap {\mathcal W}{\mathcal C}1_j$
\item ${\mathcal W}{\mathcal C}1(i,M_{2s})=\bigcup_{p_j\in M_{2s}} {\mathcal W}{\mathcal C}1(i,j)$
\item ${\mathcal H}1_i=\Big\{X^-\cup \{p_i\}: X\in {\mathcal W}{\mathcal C}1(i,i)\Big\}$
\item ${\mathcal B}1_i=\Big\{X=(C_{1i}\cup E_{1i})^-\cup \{p_i\}\cup
E: E\subseteq E_{1i},X\in {\mathcal H}1_i, A_1(X)\cap E=\emptyset\Big\}$
\item ${\mathcal B}1_{ij}=\Big\{X=(C_{1i}\cup E_{1i})^-\cup \{p_i\}\cup E: E\subseteq
E_{1i}, X\in {\mathcal H}1_i, p_j\in A_1(X)\cap E\Big\}$

\begin{center}{\bf Others}\end{center}
\item $l(i)$: the second index of $p_i\in M$
\item $p_{l^{-1}(i)}$: the player in $M$ whose second index is $i$
\item $x(i)=\max\Big\{j: p_{j}\in D_{1i}\Big\}$
\item $y(i)=\max\Big\{l(j): p_j\in D_{2i}\Big\}$
\item $R_1(i)=\min\Big\{l(j):1\leq j\leq i\Big\}$
\item $r_1(i)=\min\Big\{l(j):1\leq j\leq i,l(j)\neq R_1(i)\Big\}$
\item $n_1$: number of coalitions in ${\mathcal M}{\mathcal W}{\mathcal C}1$, i.e. $n_1=|{\mathcal M}{\mathcal W}{\mathcal C}1|$
\item $s_2(i)$: the index $s$ in $\{1,2,\cdots,n_2^0\}$ such that $p_i\in M_{2s}$

\end{itemize}
\end{appendix}
\end{document}